\newtheorem{theorem}{Theorem}
\newtheorem{lemma}{Lemma}
\newtheorem{corollary}{Corollary}
\theoremstyle{definition}
\newcommand{\calR}{\mathcal R}
\newcommand{\ceil}[1]{\left\lceil{#1}\right\rceil}
\newcommand{\floor}[1]{\left\lfloor{#1}\right\rfloor}
\newcommand{\cond}{\;|\;}
\newcommand{\eps}{\varepsilon}
\newcommand{\E}{\mathbf{E}}
\newcommand{\xor}{\oplus}
\newcommand{\bigxor}{\bigoplus}
\newcommand{\Prp}[1]{\Pr\!\left[{#1} \right]}
\newcommand{\Ep}[1]{\E\!\left[{#1} \right]}
\newcommand{\sm}{\,\setminus\,}
\newcommand{\abs}[1]{\left | #1 \right |}
\newcommand{\set}[1]{\left \{ #1 \right \}}
\newcommand{\oneToN}[1]{\set{1, \ldots, #1}}
\newcommand{\errterm}{O\!\left(|\Sigma|^{1-\floor{d/2}}\right)}
\newcommand{\terrterm}{\tilde{O}\!\left(|\Sigma|^{1-\floor{d/2}}\right)}
\newcommand\numberthis{\addtocounter{equation}{1}\tag{\theequation}}
\crefname{lemma}{Lemma}{Lemmas}
\Crefname{lemma}{Lemma}{Lemmas}
\crefname{theorem}{Theorem}{Theorems}
\Crefname{theorem}{Theorem}{Theorems}
\crefname{observation}{Observation}{Observations}
\Crefname{observation}{Observation}{Observations}
\crefname{definition}{Definition}{Definitions}
\Crefname{definition}{Definition}{Definitions}
\crefname{section}{Section}{Sections}
\Crefname{section}{Section}{Sections}
\crefname{figure}{Figure}{Figures}
\Crefname{figure}{Figure}{Figures}
\crefname{appendix}{Appendix}{Appendices}
\Crefname{appendix}{Appendix}{Appendices}
\newcommand{\sd}[1]{{#1}}
\newcommand{\mt}[1]{{#1}}
\newcommand*\samethanks[1][\value{footnote}]{\footnotemark[#1]}
\title{Hashing for statistics over $k$-partitions}
\author{Søren Dahlgaard\thanks{Research partly supported by Mikkel Thorup's
    Advanced Grant DFF-0602-02499B from the Danish Council for Independent Research
    under the Sapere Aude research career programme.}}
\author{Mathias Bæk Tejs Knudsen\samethanks[1]\ \thanks{Research partly supported by the FNU
    project AlgoDisc - Discrete Mathematics, Algorithms, and Data Structures.}}
\author{Eva Rotenberg}
\author{Mikkel Thorup\samethanks[1]}
\affil{University of Copenhagen,\\
    \tt{\{soerend,knudsen,roden,mthorup\}@di.ku.dk}
}
\date{}
\begin{document}

\setcounter{page}{0}
\maketitle
\begin{abstract}
In this paper we analyze a hash function for $k$-partitioning a set into bins,
obtaining strong concentration bounds for standard algorithms combining
statistics from each bin.

This generic method was originally introduced by
Flajolet and Martin~[FOCS'83] in order to save a factor $\Omega(k)$ of time per
element over $k$ independent samples when estimating the number of distinct
elements in a data stream. It was also used in the widely used HyperLogLog
algorithm of Flajolet et al.~[AOFA'97] and in large-scale machine learning by
Li et al.~[NIPS'12] for minwise estimation of set similarity.

The main issue of $k$-partition, is that the contents of different bins may be
highly correlated when using popular hash functions. This means that
methods of analyzing the marginal distribution for a single bin do not apply.
Here we show that a tabulation based hash function, mixed tabulation, does
yield strong concentration bounds on the most popular applications of
$k$-partitioning similar to those we would get using a truly random hash
function.
The analysis is very involved and implies several new results of independent
interest for both simple and double tabulation, e.g. a simple and
efficient construction for \sd{invertible bloom filters and} uniform hashing on
a given set.

\end{abstract}

\thispagestyle{empty}

\newpage
\setcounter{page}{1}
\section{Introduction}
\def\polylog{\textnormal{polylog}}
\def\tO{\tilde O}
\def\ol{\overline}
\def\ul{\underline}
\def\cR{\mathcal R}
\def\cM{\mathcal M}
\newcommand\mikcom[1]{\marginpar{Mikkel}\begin{quote}{\em {#1}}
\end{quote}}

A useful assumption in the design of randomized algorithms and data
structures is the free availability of fully random hash functions
which can be computed in unit time. Removing this unrealistic assumption is the
subject of a large body of work. To implement a hash-based algorithm,
a concrete hash function has to be chosen. The space, time, and random
choices made by this hash function affects the overall performance.
{\em The generic goal is therefore to provide efficient constructions of hash
  functions that for important randomized algorithms yield
  probabilistic guarantees similar to those obtained assuming fully
  random hashing. }

To fully appreciate the significance of this program, we note that many
randomized algorithms are very simple and popular in practice, but often
they are implemented with too simple hash functions without the necessary
guarantees. This may work very well in random tests, adding to their
popularity, but the real world is full of structured data
that could be bad for the hash function.
This was illustrated in \cite{thorup12kwise}
showing how simple common inputs made linear probing fail with popular
hash functions, explaining its perceived unreliability in practice. The
problems disappeared when sufficiently strong hash functions were used.

In this paper, we consider the generic approach where a hash function
is used to $k$-partition a set into bins. Statistics are computed on
each bin, and then all these statistics are combined so as to get good
concentration bounds. This approach was introduced by Flajolet and
Martin \cite{Flajolet85counting} under the name \emph{stochastic
averaging} to estimate the number of distinct elements in a data
stream. \sd{Today, a more popular estimator of this quantity
is the HyperLogLog counter, which is also based on $k$-partitioning
\cite{Flajolet07hyperloglog,Heule13hyperloglog}. These types of counters have found many applications, e.g., to estimate the neighbourhood function of a
graph with all-distance sketches \cite{boldi11hyperanf,Cohen14ads}.}

\sd{Later it was considered by Li et al.
\cite{li12oneperm,Shrivastava14oneperm,Shrivastava14densify} in the classic
minwise hashing framework of Broder et al. for the very different application
of set similarity estimation
\cite{broder97minwise,broder98minwise,broder97onthe}. To our knowledge we are
the first to address such statistics over a $k$-partitioning with practical hash
functions.}

\sd{We will use the example of MinHash for frequency estimation as a running
example throughout the paper:}
suppose we have a fully random hash function
applied to a set $X$ of red and blue balls.  We want to estimate the
fraction $f$ of red balls. The idea of the MinHash algorithm is to
sample the ball with the smallest hash value. With a fully-random hash
function, this is a uniformly random sample from $X$, and it is red with probability $f$.
For better concentration, we may use {\em $k$ independent repetitions}: we repeat the experiment
$k$ times with $k$ independent hash functions. This yields a multiset $S$
of $k$ samples with replacement from $X$.  The fraction of red balls
in $S$ concentrates around $f$ and the error probability falls
exponentially in $k$.

Consider now the alternative experiment based on {\em $k$-partitioning}:
we use a single hash function, where the first $\ceil{\lg k}$ bits of the hash value partition $X$ into $k$ bins, and then the remaining bits are used as a
{\em local hash value} within the bin. We pick the ball with the
smallest (local) hash value in each bin. This is a sample $S$ from $X$
without replacement, and again, the fraction of red balls \sd{in the non-empty
bins} is
concentrated around $f$ with exponential concentration bounds. We note
that there are some differences. We do get the advantage that the
samples are without replacement, which means better concentration. On
the other hand, we may end up with fewer samples if some bins are
empty.

The big difference between the two schemes is that the second one runs
$\Omega(k)$ times faster. In the first experiment, each ball
participated in $k$ independent experiments, but in the second one with
$k$-partitioning, each ball picks its bin, and then only
participates in the local experiment for that bin. Thus, essentially, we get
$k$ experiments for the price of one. Handling each ball, or key, in constant
time is important in applications of high volume streams.


In this paper, we present the first realistic hash function for
$k$-partitioning in these application. Thus we will get concentration bounds
similar to those obtained with fully random hashing for the following
algorithms:
\begin{itemize}
\item Frequency/similarity  estimation as in our running example and
  as it is used for the machine learning
  in~\cite{li12oneperm,Shrivastava14oneperm,Shrivastava14densify}.
\item Estimating distinct elements as in
  \cite{Flajolet85counting,Flajolet07hyperloglog}.
\end{itemize}
\sd{
Other technical developments include simpler hash functions for invertible
Bloom filters, uniform hashing, and constant moment bounds.

For completeness we mention that the count sketch data structure of Charikar et
al. \cite{Charikar02countsketch} is also based on $k$-partitioning. However, for
count sketches we can never hope for the kind of strong concentration bounds
pursued in this paper as they are prevented by the presence of large weight
items. The analysis in \cite{Charikar02countsketch} is just based on variance for which $2$-independent
hashing suffices. Strong concentration bounds are instead obtained by
independent repetitions.}

\subsection{Applications in linear machine learning}
As mentioned, our running example with red and blue balls is
mathematically equivalent to the classic application of minwise
hashing to estimate the Jaccard similarity $J(X,Y) = |X\cap Y|/|X\cup Y|$
between two sets $X$ and $Y$. This method was originally introduced by
Broder et al.~\cite{broder97minwise,broder98minwise,broder97onthe} for
the AltaVista search engine. The red balls correspond to the
intersection of $X$ and $Y$ and the blue balls correspond to the symmetric
difference. The MinHash \sd{estimator is the indicator variable of whether the
ball with the smallest hash value over both sets belongs to the intersection of
the two sets.} To determine this we store the smallest hash value from each set
as a \emph{sketch} and check if it is the same.
\sd{In order to reduce the variance one uses $k$
independent hash functions, known as $k\times$minwise. This method was later
revisited by Li et al.~\cite{Li10bbit,Li10bbit2,li11minhash}. By only using the
$b$ least significant bits of each hash value (for some small constant $b$),
they were able to create efficient linear sketches, encoding set-similarity as
inner products of sketch vectors, for use in large-scale
learning. However, applying $k$ hash functions to each element increases the
sketch creation time by roughly a factor of $k$.

It should be noted that
Bachrach and Porat \cite{Bachrach13sketching} have
suggested a more efficient way of maintaining $k$ Minhash values with
$k$ different hash functions. They use $k$ different polynomial hash
functions that are related, yet pairwise independent, so that they can
systematically maintain the Minhash for all $k$ polynomials in $O(\log
k)$ time per key assuming constant degree polynomials. There are two issues
with this approach: It is specialized to work with polynomials and Minhash is
known to have constant bias with constant degree polynomials \cite{patrascu10kwise-lb}, and this bias does not
decay with independent repetitions. Also, because the experiments are only
pairwise independent, the concentration is only limited by Chebyshev's
inequality.

An alternative to $k\times$minwise when estimating set similarity with
minwise sketches is bottom-$k$. In bottom-$k$ we use one hash function
and maintain the sketch as the keys with the $k$ smallest hash
values. This method can be viewed as sampling without replacement. Bottom-$k$
has been proved to work with simple hash functions both for counting
distinct elements \cite{bar-yossef02distinct} and for set similarity
\cite{thorup13bottomk}. However, it needs a priority queue to
maintain the $k$ smallest hash values and this leads to a non-constant
worst-case time per element, which may be a problem in real-time
processing of high volume data streams. A major problem in our context is that
we are not able to encode set-similarity as an inner product of two sketch
vectors. This is because the elements lose their ``alignment'' -- that
is, the key with the smallest hash value in one set might have the 10th
smallest hash value in another set.}

%

Getting down to constant time per element via $k$-partitioning was suggested by Li et
al.~\cite{li12oneperm,Shrivastava14oneperm,Shrivastava14densify}. They use
$k$-partitioned MinHash to quickly create small sketches of very
high-dimensional indicator vectors. \sd{Each sketch consists of $k$ hash values
corresponding to the hash value of each bin.} The sketches are then converted
into sketch vectors that code similarity as inner products. Finally the sketch
vectors are fed to a linear SVM for classifying massive data sets. \sd{The
sketches are illustrated in \Cref{fig:kpart_sketch}.
\begin{figure}[htbp]
    \centering
    \includegraphics[width=.3\textwidth]{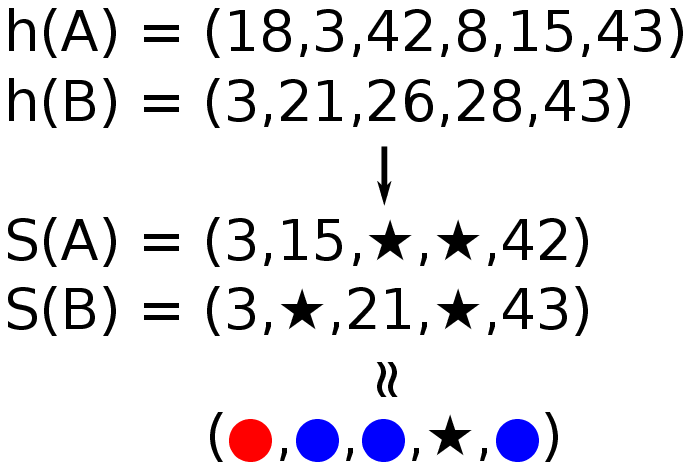}
    \caption{Example of $k$-partitioned sketches for two sets $A$ and $B$. The
        hash values are from the set $\{0,\ldots,49\}$ and $k=5$. The sketches
        $S(A)$ and $S(B)$ show the hash values for each bin and the $\star$
        symbol denotes an empty bin. The corresponding interpretation as red
        and blue balls is shown below with a red ball belonging to the
    intersection and blue ball to the symmetric difference. Here $k^\star = 4$.}
    \label{fig:kpart_sketch}
\end{figure}
Li et al.
also apply this approach to near neighbour search using locality sensitive
hashing as introduced in \cite{indyk98ann} (see
also~\cite{andoni08ann,Andoni14lsh}). When viewing the problems as red and blue
balls, the canonical \mt{unbiased estimator uses the number $k^\star$ of
non-empty bins, estimating $f$ as:
\begin{equation}\label{eq:estimator}
    \frac{\text{\# of red bins}}{k^\star}
\end{equation}}
A major complication of this estimator is that we do not know in advance, which
bins are jointly empty for two sketches (as illustrated in
\Cref{fig:kpart_sketch}). \mt{This means that there is no natural
way of computing the estimator as an inner product of the two sketches.}
Shrivastava and Li
\cite{Shrivastava14oneperm,Shrivastava14densify} suggest methods for dealing
with this by assigning empty bins a value by copying from non-empty
bins in different ways giving provably good bounds. It is important to note that
when all bins are non-empty, all the estimators considered in
\cite{li12oneperm,Shrivastava14oneperm,Shrivastava14densify} are identical to
the estimator in \eqref{eq:estimator} as $k^\star = k$ in this case.}

We note that for our running example with red and blue balls, it would
suffice to generate hash values on the fly, e.g., using a
pseudo-random number generator, but in the real application of set
similarity, it is crucial that when sets get processed separately,
the elements from the intersection get the same hash value. Likewise,
when we want to estimate distinct elements or do count sketches, it is
crucial that the same element always gets the same hash value.

\subsection{Technical challenge}\label{sec:challenge}
Using a hash function function to $k$-partition $n$ keys is often cast
as using it to throw $n$ balls into $k$ bins, which is an important
topic in randomized algorithms \cite[Chapter 3]{motwani95book}
\cite[Chapter 5]{mitzenmacher05book}.
However, when it comes to implementation via realistic hash functions,
the focus is often only on the marginal distribution in a single bin. For
example, with $k=n$, w.h.p., a given bin has load $O(\log n/\log\log
n)$, hence by a union bound, w.h.p., the maximum load is $O(\log
n/\log\log n)$. The high probability bound on the load of a given bin
follows with an $O(\log n/\log\log n)$-independent hash function
\cite{wegman81kwise}, but can also be obtained in other ways
\cite{CRSW11,patrascu12charhash}.

However, the whole point in using a $k$-partition is that we want to
average statistics over all $k$ bins hoping to get strong concentration
bounds, but this requires that the statistics from the
$k$ bins are not too correlated (even with full randomness, there is
always some correlation since the partitioning corresponds to sampling
without replacement, but this generally works in our favor).

To be more concrete, consider our running example with red and blue
balls where Minhash is used to pick a random ball from each bin. The
frequency $f$ of red balls is estimated as the frequency of red balls
in the sample.  Using $O(\log k)$ independent hashing, we can make sure that the bias in the sample
from any given bin is $1/k$ \cite{indyk01minwise}.  However, for concentration bounds on the average, we
have to worry about two types of correlations between statistics of
different bins. The first ``local'' correlation issue is if the local
hashing within different bins is too correlated. This issue
could conceivably be circumvented using one hash function for the
$k$-partitioning itself, and then have an independent local hash
function for each bin. The other ``global'' correlation issue is
for the overall $k$-partitioning distribution between
bins. It could be that if we get a lot of red balls in one bin, then
this would be part of a general clustering of the red balls on a few
bins (examples showing how such systematic clustering can happen with
simple hash functions are given in \cite{patrascu10kwise-lb}). This
clustering would disfavor the red balls in the overall average even if the
sampling in each bin was uniform and independent. This is an issue
of non-linearity, e.g., if there are already
more red than blue balls in a bin, then doubling their number only
increases their frequency by at most $3/2$. \sd{As mentioned earlier we are not
aware of any previous work addressing these issues with a less than fully
random hash function, but} for our running
example it appears that a $O(k \log k)$-independent hash function
will take care of both correlation issues (we will not prove this as we are
going to present an even better solution).

\paragraph{Resource consumption} We are now going to consider
the resource consumption by the different hashing schemes discussed
above. The schemes discussed are summarized in
\Cref{tab:kpart}.

\begin{table*}[htbp]
    \centering
    \begin{tabular}{l | c c}
        \toprule
        \textbf{Technique} & \textbf{Evaluation time} & \textbf{Space (words)} \\
        \midrule
        Fully random hashing & $O(1)$ & $u=n^{O(1)}$ \\
        Fully random on $n$ keys whp.~\cite{PP08}  & $O(1)$ & $(1+o(1))n$ \\
        $\tO(k)$-independence \cite{Christiani15indep} & $O(1)$ &
        $k u^\eps$ \\
        Mixed tabulation (this paper) & $O(1)$ & $\tO(k)+\sd{u^\eps}$ \\
        \bottomrule
    \end{tabular}
    \caption{Resources of hashing techniques. \sd{Here, $\eps$ may be chosen as an
    arbitrarily small positive constant.}}
    \label{tab:kpart}
\end{table*}

First we assume that the key universe is of size polynomial in
the number $n$ of keys. If not, we first do a standard universe
reduction, applying a universal hash function \cite{carter77universal}
into an intermediate universe of size $u = n^{O(1)}$, expecting no
collisions. We could now, in principle, have a fully random hash function
over $[u]$.

We can get down to linear space using the construction of Pagh and
Pagh (PP) \cite{PP08}. Their hash function uses $O(n)$ words and is, w.h.p.,
fully random on any given set of $n$ keys. However, using $O(n)$
space is still prohibitive in most applications as the main motivation of
$k$-partitioning is exactly to create an estimator of size $k$ when $n$ is
so big that we cannot store the set. Additionally, we may not know $n$ in advance.

As indicated above, it appears that $\Theta(k\log k)$-independent
hashing suffices for MinHash. For this we can use the recent construction of
Christiani et al.~\cite{Christiani15indep}. 
\sd{Their construction} gets $\Theta(k\log k)$-independence, w.h.p., in
$O(1)$ time using space \sd{$k u^\eps$ for an arbitrarily small constant
$\eps$ affecting the evaluation time}. Interestingly, we
use the same space if we want a $\Theta(\log k)$-independent hash function for
each of the $k$ bins. The construction of Thorup \cite{thorup13doubletab}
gives independence $\sd{u^\eps}\gg \log k$ in $O(1)$ time using
\sd{$u^\eps$} space. A lower bound of Siegel \cite{siegel04hash} shows
that we cannot hope to improve the space in either case if we want
fast hashing. More precisely, if we want $q$-independence in time
$t<q$, we need space at least $q(u/q)^{1/t}$. Space $k u^{\Omega(1)}$
thus appears to be the best we can hope for with these independence
based approaches.

\subsection{$k$-partitions via mixed tabulation}
In this paper we present and analyze a hash function, {\em mixed tabulation},
that for all the $k$-partitioning algorithms discussed above, w.h.p., gets
concentration similar to that with fully random hash functions. The
hashing is done in $O(1)$ time and $\tO(k) + \sd{u^\eps}$ space.
If, say, $k=u^{\Omega(1)}$, this means that we hash in constant
time using space near-linear in the number of counters.
This is the first proposals of a hash function for statistics over
$k$-partitions that has good theoretical probabilistic
properties, yet does not significantly increase the amount of
resources used by these popular algorithms.
The hash function we suggest for $k$-partitioning, \emph{mixed tabulation},
is an extension of simple tabulation hashing.

\paragraph{Simple tabulation} Simple tabulation
hashing dates back to Zobrist \cite{zobrist70hashing}. \sd{The hash family
takes an integer parameter $c>1$, and} we view a key $x\in [u] =
\{0,\ldots,u-1\}$ as a vector of $c$ characters $x_0,\ldots,x_{c-1}\in \Sigma
= [u^{1/c}]$. The hash values are bit strings of some length $r$.
For each character position $i$, we initialize a fully random table
$T_i$ of size $|\Sigma|$ with values from $\calR = [2^r]$. The hash value of a key $x$ is calculated as
\[
    h(x) = T_0[x_0]\xor\cdots\xor T_{c-1}[x_{c-1}]\ .
\]
Simple tabulation thus takes time $O(c)$ and space $O(c u^{1/c})$.
In our context we assume that $c$ is a constant and that the character tables
fit in fast cache \sd{(eg. for $64$-bit keys we may pick $c=4$ and have
$16$-bit characters. The tables $T_i$ then take up $2^{16}$ words)}.
Justifying this assumption, recall that
with universe reduction, we  can assume that the universe is
of size $u = n^{O(1)}$. Now, for any desired constant $\eps>0$, we can pick $c=O(1)$ such that
$\Sigma = u^{1/c} \le n^\eps$. We refer to the lookups $T_i[x_i]$ as
\emph{character lookups} to emphasize that we expect them to be much faster
than a general lookups in memory. P\v{a}tra\c{s}cu and Thorup
\cite{patrascu12charhash} found simple tabulation to be 3 times faster
than evaluating a degree-2 polynomial over a prime field for the same
key domain.

P\v{a}tra\c{s}cu and Thorup \cite{patrascu12charhash} analyzed simple
tabulation assuming $c=O(1)$, showing that it works very well
for common applications of hash function such as linear probing, cuckoo hashing
and minwise independence. Note, however, that $O(\log n)$ independence was
known to suffice for all these applications. We also
note that simple tabulation fails to give good
concentration for $k$-partitions: Consider the set $R = [2]\times [m/2]$ of
$m$ red balls and let $B$ be some random set of blue balls. In this case the
red balls hash into the same buckets in pairs with probability $1/k$, which
will skew the estimate by a factor of $2$ if, for instance, $|R|$ is relatively
small.

\paragraph{Mixed tabulation}
To handle $k$-partitions, we here propose and analyze a mix between simple tabulation defined above and
the double tabulation scheme of \cite{thorup13doubletab}. \sd{In addition to
$c$,} mixed tabulation takes as a parameter an integer $d \ge 1$. We derive
$d$ extra characters using one simple tabulation function and compose
these with the original key before applying an extra round of simple
tabulation. Mathematically, we use two simple tabulation hash functions $h_1 :
\Sigma^c\to \Sigma^d$ and $h_2 : \Sigma^{d+c}\to \calR$ and define the hash function
to be $h(x)\mapsto h_2(x\cdot h_1(x))$, where $\cdot$ denotes concatenation
of characters.
We call $x \cdot h_1(x)$ the \emph{derived key} and denote this by
$h_1^\star(x)$. Our mixed tabulation scheme is very similar to Thorup's double
tabulation \cite{thorup13doubletab} and we shall return to the relation in
\Cref{sec:other}.
We note that we can implement this using just $c+d$ lookups
if we instead store simple tabulation functions
$h_{1,2}:\Sigma^c\to \Sigma^d\times\calR$ and $h_2':\Sigma^d\to \calR$,
computing $h(x)$ by $(v_1,v_2)=h_{1,2}(x);\ h(x)=v_1\xor h_2'(v_2)$.
This efficient implementation is similar to that of twisted tabulation
\cite{PT13:twist}, and is equivalent to the previous definition.
In our applications, we think of \sd{$c$ and} $d$ as a small constants, e.g.
\sd{$c=4$ and} $d=4$.
We note that we need not choose $\Sigma$ such that $|\Sigma|^c = u$.
Instead, we may pick $|\Sigma|\geq u^{1/c}$ to be any power of two.
A key $x$ is divided into $c$ characters $x_i$ of $b=\ceil{\lg u^{1/c}}$ or
$b-1$ bits,
so $x_i\in [2^b]\subseteq\Sigma$. This gives us the freedom to
use $c$ such that $u^{1/c}$ is not a power of two, but it
also allows us to work with $|\Sigma|\gg u^{1/c}$, which in effect
means that the derived characters are picked from a larger domain than
the original characters. Then mixed tabulation uses $O(c+d)$ time and $O(c u^{1/c}+d|\Sigma|)$ space.
For a good balance, we will always pick $c$ and $|\Sigma|$ such
that $u^{1/c}\leq |\Sigma|\leq u^{1/(c-1)}$. In all our applications we have
$c=O(1),d=O(1)$\mt{, which implies that the evaluation time is constant and
that the space used is $\Theta(|\Sigma|)$.}


\paragraph{Mixed tabulation in MinHash with $k$-partitioning}
We will now analyze MinHash with $k$-partitioning using mixed tabulation
as a hash function, showing that we get concentration bounds similar
to those obtained with fully-random hashing. The analysis is based on
two complimentary theorems. The first theorem states that for sets
of size nearly up to $|\Sigma|$, mixed tabulation is fully random
with high probability.
\begin{theorem}\label{thm:uniform_simple}
    Let $h$ be a mixed tabulation hash function with parameter $d$ and let
    $X\subseteq [u]$ be any input set. If $|X|\le |\Sigma|/(1+\Omega(1))$ then
    the keys of $X$ hash independently with probability $1 -
    O(|\Sigma|^{1-\floor{d/2}})$.
\end{theorem}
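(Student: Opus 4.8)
The plan is to exploit the structure of the derived key. Recall that $h_1^\star(x) = x \cdot h_1(x)$, and that $h$ hashes $X$ independently as soon as the derived keys $\set{h_1^\star(x) : x \in X}$ are "independence-inducing" for the second simple-tabulation function $h_2$. The key observation from the theory of simple tabulation is that if a set of derived keys has the property that every derived key contains a \emph{position-value pair} $(i, a)$ (some character equal to $a$ in position $i$) that is unique among the whole set, then $h_2$ restricted to that set is fully random: we can peel the keys off one at a time, and the unique coordinate of the currently-peeled key contributes a fresh uniform table entry independent of everything already revealed. So the whole statement reduces to: with probability $1 - O(|\Sigma|^{1-\floor{d/2}})$, the derived key set $h_1^\star(X)$ admits such a peeling order; in fact it suffices that $h_1^\star(X)$ contains no nonempty sub-multiset in which every position-value pair occurs an even number of times (the standard "no zero-sum subset" / linear-independence condition for simple tabulation hashing).

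First I would set up this linear-algebra reformulation precisely: associate to each derived key $h_1^\star(x)$ its incidence vector over the ground set of position-value pairs $[d+c] \times \Sigma$, and note that $h_2$ is fully random on $X$ exactly when these vectors are linearly independent over $\bbF_2$. Since the $c$ original characters of $x$ are fixed by $x$ itself and the $x$'s are distinct, linear dependence can only come through the $d$ derived characters $h_1(x)$; so I would argue that a dependent set must in particular have its first $c$ "original" coordinates forming a dependency on their own restricted to... no — more carefully, a minimal dependent set $Y \subseteq X$ of size $t$ must, when we look only at the original-character coordinates, have every position-value pair appear an even number of times, which already forces $t \geq 2$ and constrains $Y$ combinatorially. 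Then, conditioned on such a candidate $Y$, the derived characters $h_1(y)$ for $y \in Y$ are determined by a simple tabulation function $h_1$, and I would bound the probability that the derived-character coordinates \emph{also} cancel out.

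The main work, and the main obstacle, is this last probability bound, and it is where the $\floor{d/2}$ exponent must come from. The intuition: $h_1 : \Sigma^c \to \Sigma^d$ is simple tabulation, so for a fixed small set $Y$, the vector $\bigxor_{y\in Y} h_1(y)$ is a fixed XOR of random table entries; for the full derived key set to be dependent we need the derived part of the dependency to vanish in all $d$ derived coordinates simultaneously. Roughly, each independent derived coordinate that is "live" (not forced trivially by the original-character structure of $Y$) contributes a factor $1/|\Sigma|$, and one shows at least $\floor{d/2}$ coordinates are live — the $\floor{\cdot}$ and the factor of $2$ appearing because the cheapest way to make a dependency is a pair of keys agreeing on derived characters, which ties coordinates together two at a time. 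I would then sum over all candidate minimal dependent sets $Y$: the number of such sets of a given size is polynomial in $|X| \leq |\Sigma|$, and because $|X| \leq |\Sigma|/(1+\Omega(1))$ is bounded \emph{below} $|\Sigma|$, the union bound over sizes forms a geometric series dominated by its first term, yielding the claimed $O(|\Sigma|^{1-\floor{d/2}})$. The delicate points I expect to fight with are (i) correctly defining "minimal dependent set" so that the original-character structure is rigid enough to be enumerated cheaply, (ii) making the "$\floor{d/2}$ live coordinates" counting argument airtight for \emph{all} shapes of $Y$ rather than just pairs, and (iii) checking that the slack $|X| \le |\Sigma|/(1+\Omega(1))$ is exactly what is needed for the geometric sum to converge.
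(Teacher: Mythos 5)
Your reduction is the right one and matches the paper's: full randomness of $h$ on $X$ follows once the derived keys $h_1^\star(X)$ admit a peeling order, equivalently once no nonempty subset $Y$ has all its position characters occurring an even number of times; and any such $Y$ must in particular have its \emph{original} position characters cancelling, which is controlled by a counting lemma (the paper's \cref{zeroSum}: at most $((2t-1)!!)^c n^t$ zero-sum $2t$-tuples). The gap is in the central probabilistic step. You propose to bound the probability that ``the derived-character coordinates also cancel out,'' treating $\bigxor_{y\in Y} h_1(y)$ as a random XOR of table entries. But if the original position characters of $Y$ cancel, then $\bigxor_{y\in Y} h_1(y)=0$ holds \emph{deterministically} --- every table cell of $h_1$ is XORed an even number of times --- so this event costs nothing. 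The event you actually need is that in each derived coordinate $j$ the multiset $\{h^{(j)}(y)\}_{y\in Y}$ has every \emph{value} occurring an even number of times, so that the derived position characters $(c+j,h^{(j)}(y))$ cancel as a set. This is a much more structured event, and its probability is not ``one factor $1/|\Sigma|$ per live coordinate'': for instance, if $Y$ consists of $t$ disjoint pairs sharing a common symmetric difference $v$, then all $d$ coordinates pair up simultaneously as soon as $h_1(v)=0$, i.e.\ with probability $|\Sigma|^{-d}$ \emph{independently of $t$}. The paper's mechanism is different from the one you sketch: it groups the $d$ derived coordinates into $\floor{d/2}$ pairs, shows that a dependent key must lie on a path-plus-cycle (``lollipop'') in each of the $\floor{d/2}$ mutually independent hash graphs $G_{2i,2i+1}$, and pays roughly $1/|\Sigma|$ per graph; that, and not coordinates being ``tied together two at a time,'' is where $\floor{d/2}$ comes from.

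Your union bound over minimal dependent sets also does not go through as stated. Minimal dependent sets can have size up to $|X|$, and for large sets the occurrence probability cannot be bounded uniformly well because of internal correlations among the keys (as in the example above). The paper instead union-bounds over short \emph{certificates}: each lollipop walk is truncated at length $s=O(\log_{1+\eps}|\Sigma|)$ and classified by one of four obstruction types, each with its own counting-versus-probability trade-off, and the factor $(np/|\Sigma|)^{l}\le(1+\eps)^{-l}$ coming from the slack $|X|\le|\Sigma|/(1+\Omega(1))$ is what makes the resulting sum geometric. A further wrinkle you would hit: one obstruction type is charged to $\Delta(y_0)$, the number of triples XORing to $y_0$, which is only $O(n)$ on average over $y_0$, so one cannot union-bound over all $y_0$ directly; the paper peels off one provably independent key at a time instead. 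In short, your skeleton is sound, but the two points you flag as ``delicate'' are where essentially all of the difficulty lives, and the event you propose to analyze in the key step is the wrong one.
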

The second theorem will be used to analyze the performance for larger
sets. It is specific to MinHash with $k$-partitioning, stating, w.h.p.,
that mixed tabulation hashing performs as well as fully random hashing
with slight changes to the number of balls:
\begin{theorem}\label{thm:minwise}
    Consider a set of $n_R$ red balls and $n_B$ blue balls \sd{with
    $n_R + n_B > |\Sigma|/2$}. Let $f=n_R/(n_R+n_B)$ be the fraction of red
    balls which we wish to estimate.

    Let $X^{\mathcal M}$ be the estimator of $f$ from \eqref{eq:estimator} that
    we get using MinHash
    with $k$-partitioning using mixed tabulation hashing with $d$ derived
    characters,
    where $k \le |\Sigma|/(4d\log|\Sigma|)$.

    Let $\ol X^{\cR}$ be the same estimator in the alternative
    experiment where we use fully random hashing but with $\lfloor
    n_R(1+\eps)\rfloor$ red balls and $\lceil n_B(1-\eps)\rceil$ blue
    balls where
    $\eps=O\!\left(\sqrt{\frac{\log|\Sigma|(\log\log|\Sigma|)^2}{|\Sigma|}}\right)$.  Then
    \[
        \Prp{X^{\cM}\ge (1+\delta)f}\leq \Prp{\ol X^{\cR}\geq
(1+\delta)f}+\terrterm\textnormal.
    \]
    Likewise, for a lower bound, let $\ul X^{\mathcal R}$  be the
    estimator in the experiment using fully random hashing
    but with $\lceil n_R(1-\eps)\rceil$ red balls and $\lfloor
    n_B(1+\eps)\rfloor$ blue balls.  Then
    \[
        \Prp{X^{\cM}\le (1-\delta)f}\leq \Prp{\ul X^{\cR}\leq
(1-\delta)f}+\terrterm.
    \]
\end{theorem}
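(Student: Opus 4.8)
\emph{Step 1: the relevant balls.}
The plan is to reduce this large-set experiment to a small one where \Cref{thm:uniform_simple} applies, at the price of the stated $\eps$-perturbation of the ball counts plus a few failure events each of probability $\terrterm$. Write $R$, $B$, $X=R\cup B$ for the red, blue, and all balls. Since $|X|>|\Sigma|/2$ and $k\le|\Sigma|/(4d\log|\Sigma|)$, the expected load of a bin is $\Omega(d\log|\Sigma|)$, so a Chernoff-type bound for mixed tabulation gives that, with probability $1-\terrterm$, no bin is empty and $X^{\cM}=(\#\text{red bins})/k$. Now fix a local-hash threshold $\tau$ with $\tau/|\calR|=\Theta\!\big(|\Sigma|/|X|\big)$ and set $Y=\{x\in X:\ \text{local hash}(x)<\tau\}$, so that $|Y|=\Theta(|\Sigma|)\le|\Sigma|/(1+\Omega(1))$. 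The bound on $k$ ensures that every bin still receives $\Omega(d\log|\Sigma|)$ keys of $Y$, hence with probability $1-\terrterm$ the minimum local hash in each bin is attained inside $Y$, and therefore $X^{\cM}$ depends only on the hashes and colours of the keys in $Y$.

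\emph{Step 2: randomness inside $Y$.}
Split this randomness into the composition $(r,b):=(|Y\cap R|,|Y\cap B|)$ and the rest. For the composition, invoke the Chernoff-type concentration of mixed tabulation applied to $R$ and to $B$ separately: with probability $1-\terrterm$, $r\le n_R(1+\eps)\,\tau/|\calR|$ and $b\ge n_B(1-\eps)\,\tau/|\calR|$, where $\eps$ is exactly the relative deviation over the $\Theta(|\Sigma|)$-element set $Y$ -- this is the origin of the value of $\eps$ in the statement, the $(\log\log|\Sigma|)^2$ absorbing the gap between the mixed-tabulation bound and the ideal Chernoff bound. For the rest, apply \Cref{thm:uniform_simple} to $Y$: with probability $1-O(|\Sigma|^{1-\floor{d/2}})$ the keys of $Y$ hash independently, and since the bin bits and the local-hash bits occupy disjoint parts of $\calR$, conditioning on ``local hash ${}<\tau$ on $Y$'' leaves the bin assignments of $Y$ i.i.d.\ uniform on $[k]$ and the local hashes i.i.d.\ uniform on $[0,\tau)$. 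Consequently, conditioned on $(r,b)$, the variable $X^{\cM}$ has exactly the distribution of the fully random estimator run on $r$ red and $b$ blue balls -- up to conditioning on all $k$ bins being non-empty, which costs a further $\terrterm$ since $(r+b)/k=\Omega(d\log|\Sigma|)$.

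\emph{Step 3: assembling the bound.}
The upper tail $\Prp{\,\cdot\ge(1+\delta)f}$ of the fully random estimator is monotone increasing in the number of red balls and decreasing in the number of blue balls, by the coupling that inserts a new ball at a uniformly random hash-rank: it can make at most one extra bin red and never turns a blue bin red. Hence, on the good events, $\Prp{X^{\cM}\ge(1+\delta)f}$ is at most $\Prp{X'\ge(1+\delta)f}+\terrterm$, where $X'$ is the fully random estimator on $\lceil n_R(1+\eps)\tau/|\calR|\rceil$ red and $\lfloor n_B(1-\eps)\tau/|\calR|\rfloor$ blue balls. Finally, $X'$ and $\ol X^{\cR}$ have essentially the same red fraction while $\ol X^{\cR}$ has $|\calR|/\tau$ times more balls; coupling them on a single fully random hash function and using that $(\tau/|\calR|)|X|/k=\Omega(d\log|\Sigma|)$, the minimum of every bin over \emph{all} balls lies below $\tau$ with probability $1-|\Sigma|^{-\Omega(d)}$, so the two estimators coincide with probability $1-\terrterm$ (the small gap between the exact counts and the Binomially-distributed sub-sampled counts being again absorbed by monotonicity and a further $\terrterm$). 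Summing the $O(1)$ failure events yields the first inequality; the second is identical after interchanging $1+\eps$ and $1-\eps$ throughout and using $\ul X^{\cR}$.

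\emph{Main obstacle.}
The delicate point is applying \Cref{thm:uniform_simple} to $Y$ in Step 2, because $Y$ is hash-dependent whereas the theorem is stated for a set fixed in advance. I would resolve this by going through the proof of \Cref{thm:uniform_simple} rather than its statement: expose $h_1$ first, fixing the derived keys and hence which keys fall below $\tau$; then the local-hash tables of $h_2$ are independent of everything exposed, so on the $\le|\Sigma|/(1+\Omega(1))$ derived keys of $Y$ the local hash is a fresh simple tabulation, which is fully random on $Y$ whenever $h_1$ is ``good'' -- an event of probability $1-O(|\Sigma|^{1-\floor{d/2}})$ that is precisely what that proof secures. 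The only other external ingredient is the Chernoff-type concentration of mixed tabulation used in Steps 1 and 2, which I take from the toolbox developed earlier in the paper.
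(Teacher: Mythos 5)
Your high-level architecture matches the paper's: subsample the balls by conditioning on low-order output bits, argue that the hash function is fully random and well concentrated on the subsample, and then transfer to the fully random experiment by monotonicity/coupling. However, there are two genuine gaps. The first is your resolution of what you correctly identify as the main obstacle. You propose to "expose $h_1$ first, fixing the derived keys and hence which keys fall below $\tau$," but exposing $h_1$ does \emph{not} fix which keys fall below $\tau$: membership in $Y$ is determined by the local-hash bits, which are output bits of $h_2$ applied to the derived keys, i.e.\ by exactly the same tables whose remaining randomness you then want to use. So the selection of $Y$ and the randomness you claim is "fresh" are not independent, and the conditioning argument collapses. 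This self-referential selection is precisely what \Cref{thm:uniform2} is designed to handle (a subset cut out by prescribed values of some output bits still has independent remaining bits w.h.p.), and its proof via the hash-graph/lollipop analysis of \Cref{thm:uniform3} is the main technical content of the paper. In effect you are assuming that theorem while offering an invalid shortcut around it; the paper's proof of \Cref{thm:minwise} instead invokes \Cref{thm:uniform2} directly (bad events $E_2$, $E_4$).

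The second gap is quantitative: you use a single threshold $\tau$ calibrated so that $|Y|=\Theta(|\Sigma|)$, and then claim $|Y\cap R|\le n_R(1+\eps)\tau/|\calR|$ with $\eps=O\bigl(\sqrt{\log|\Sigma|(\log\log|\Sigma|)^2/|\Sigma|}\bigr)$. The relative deviation of a subsample scales with the inverse square root of its \emph{own} expected size, so if $f=o(1)$ then $\Ep{|Y\cap R|}=o(|\Sigma|)$ and the claimed relative error for the red count is simply false. This is why the paper introduces two thresholds $\ell_R\le\ell_B$, each chosen so that the corresponding one-colour subsample $S_R$, $S_B$ has expected size in $(|\Sigma|/8,|\Sigma|/4]$, and then runs a two-stage coupling (first fixing the $\ell_R$ bits, then the additional $\ell_B-\ell_R$ bits) to compare with $\ol X^{\cR}$. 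Relatedly, the "Chernoff-type concentration of mixed tabulation" you invoke is not an off-the-shelf fact: it is \Cref{thm:deviation}, whose proof iterates \Cref{thm:uniform2} over the conditioning bits one at a time (this is where the $(\log\log|\Sigma|)^2$ comes from) and additionally needs the Pătrașcu--Thorup simple-tabulation concentration bound to handle very large $n$. Your Step 3 monotonicity and the reduction of $X^{\cM}$ to the sub-threshold balls are fine in spirit and agree with the paper, but without repairing the two points above the argument does not go through.
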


To apply the above theorems, we pick our parameters $k$ and $\Sigma$ such that
\begin{equation}\label{eq:k-Sigma}
    k \le
    \min\set{\frac{\abs{\Sigma}}{\log\abs{\Sigma}(\log\log|\Sigma|)^2},
             \frac{\abs{\Sigma}}{4d \log \abs{\Sigma}}}
\end{equation}
Recall that we have the additional constraint that $|\Sigma|\geq u^{1/c}$ for
some $c=O(1)$. Thus \eqref{eq:k-Sigma} is only relevant if want to partition into
$k=u^{\Omega(1)}$ bins. It forces us to use space $\Theta(|\Sigma|)=
\Omega(k\log k(\log\log k)^2)$.

With this setting of parameters, we run MinHash with $k$-partitioning over
a given input. Let $n_R$ and $n_B$ be the number of red and blue balls,
respectively. Our analysis will hold no matter which of the
estimators from \cite{li12oneperm,Shrivastava14oneperm,Shrivastava14densify}
we apply.

If $n_R+n_B\leq |\Sigma|/2$, we refer to
\Cref{thm:uniform_simple}. It implies that no matter which of
estimators from \cite{li12oneperm,Shrivastava14oneperm,Shrivastava14densify}
we apply, we can refer directly to the analysis done in \cite{li12oneperm,Shrivastava14oneperm,Shrivastava14densify} assuming fully random hashing.
All we have to do is to add an extra error probability of
$O(|\Sigma|^{1-\floor{d/2}})$.

Assume now that $n_R+n_B\geq |\Sigma|/2$. First we note that all bins
are non-empty w.h.p.
To see this, we only consider the first
$|\Sigma|/2\geq 2d k \log |\Sigma|$ balls. By
\Cref{thm:uniform_simple},
they hash fully randomly with probability $1-O(|\Sigma|^{1-\floor{d/2}})$,
and if so, the probability that some bin is empty is bounded by $k(1-1/k)^{2d k \log
  |\Sigma|}<k/|\Sigma|^{2d}$. Thus, all bins are non-empty with
probability $1-O(|\Sigma|^{1-\floor{d/2}})$.

Assuming that all bins are non-empty, all the estimators from
\cite{li12oneperm,Shrivastava14oneperm,Shrivastava14densify} are
identical to \eqref{eq:estimator}. This means that \Cref{thm:minwise}
applies no matter which of the estimators we use since the error
probability $\terrterm$ absorbs the probability that some bin is
empty. In addition, the first bound in \eqref{eq:k-Sigma} implies that
$\eps=O(1/\sqrt k)$ (which is reduced to $o(1/\sqrt k)$ if
$\Sigma=\omega(k\log k (\log\log k)^2))$. In principle this completes the
description of how close mixed tabulation brings us in performance to
fully random hashing.

To appreciate the impact of $\eps$, we first consider what guarantees
we can give with fully random hashing. We are still assuming
$n_R+n_B\geq |\Sigma|/2$ where $|\Sigma|\geq 4dk\log|\Sigma|$ as
implied by \eqref{eq:k-Sigma}, so the probability of an
empty bin is bounded by $k(1-1/k)^{|\Sigma|/2}<|\Sigma|^{1-2d}$. Assume
that all bins are non-emtpy, and let $f=n_R/(n_R+n_B)$ be
the fraction of red balls.  Then our estimator $X^{\cR}$ of $f$ is the
fraction of red balls among $k$ samples without replacement. In
expectation we get $fk$ red balls. For $\delta\leq 1$, the probability
that the number of red balls deviates by more than $\delta fk$ from
$fk$ is $2\exp(\Omega(\delta^2fk))$. This follows from a standard
application of Chernoff bounds without replacement
\cite{serfling74replacement}.
The probability of a relative error
$\abs{X^{\cR} - f}/f \ge t/\sqrt{fk}$ is thus bounded by $2e^{-\Omega(t^2)}$ for any
$t \le \sqrt{fk}$.

The point now is that $\eps=O(1/\sqrt k)=O(1/\sqrt{fk})$. In the fully random experiments
in \Cref{thm:minwise}, we replace
$n_R$ by $n_R'=(1\pm\eps)n_R$ and $n_B$ with $n_B'=(1\pm\eps)n_B$. Then
$X^{\cR}$ estimates $f'=n_R'/(n_R'+n_B')=(1\pm\eps)f$, so we have
$\Pr[\abs{X^{\cR} - f'}/f' \ge t/\sqrt{f'k}]\leq 2e^{-\Omega(t^2)}$.
However, since $\eps=O(1/\sqrt{k})$,
this implies $\Pr[\abs{X^{\cR} - f}/f\ge t/\sqrt{fk}]\leq 2e^{-\Omega(t^2)}$ for
any $t\leq \sqrt{fk}$.
The only difference is that $\Omega$ hides a smaller constant. Including
the probability of getting an empty bin, we get
$\Pr[\abs{X^{\cR} - f}\ge t\sqrt{f/k}]\leq 2e^{-\Omega(t^2)}+|\Sigma|^{1-2d}$ for
any $t\leq \sqrt{fk}$. Hence, by \Cref{thm:minwise},
$\Pr[\abs{X^{\cM} - f}\ge t\sqrt{f/k}]\leq 2e^{-\Omega(t^2)}+\terrterm$ for
any $t\leq \sqrt{fk}$.

Now if $n_B\leq n_R$ and $f\geq 1/2$, it gives better concentration bounds
to consider the  symmetric estimator $X^{\cM}_B=1-X^{\cM}$ for the fraction $f_B=1-f\leq f$ of blue balls. The analysis from above shows that
$\Pr[\abs{X^{\cM}_B - f_B}\ge t\sqrt{f_B/k}]\leq 2e^{-\Omega(t^2)}+\terrterm$ for
any $t\leq \sqrt{f_Bk}$. Here $\abs{X^{\cM}_B - f_B}=\abs{X^{\cM} - f}$,
so we conclude that
$\Pr[\abs{X^{\cM} - f}\ge t\sqrt{\min\{f,1-f\}/k}]\leq 2e^{-\Omega(t^2)}+\terrterm$ for
any $t\leq \sqrt{\min\{f,1-f\}/k}$. Thus we have proved:

\begin{corollary}
    \label{cor:minwise}
    We consider MinHash with $k$-partitioning using mixed tabulation
    with alphabet $\Sigma$ and $c,d=O(1)$, and where $k$ satisfies
    \eqref{eq:k-Sigma}.
    Consider a set of $n_R$ and $n_B$ red and blue balls, respectively, where
    $n_R+n_B > |\Sigma|/2$.
    Let $f=n_R/(n_R+n_B)$ be the fraction of red balls that we wish
    to estimate. Let $X^{\cM}$ be the estimator of $f$ we get
    from our MinHash with $k$-partitioning using mixed tabulation.
    The estimator may be that in \eqref{eq:estimator}, or
    any of the estimators from
    \cite{li12oneperm,Shrivastava14oneperm,Shrivastava14densify}.
    Then for every $0 \le t \le \sqrt{\min\{f,1-f\}k}$,
    \[
        \Prp{\abs{X^{\cM} - f} \ge t\sqrt{\min\{f,1-f\}/k}} \le 2e^{-\Omega(t^2)} +
        \terrterm)\ .
    \]
\end{corollary}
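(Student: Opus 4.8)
The plan is to assemble the corollary from \Cref{thm:uniform_simple,thm:minwise} together with a standard Chernoff bound for sampling without replacement, essentially following the discussion preceding the statement. Since the hypothesis puts us in the regime $n_R+n_B>|\Sigma|/2$, the first step is to show that with mixed tabulation all $k$ bins are non-empty except with probability $\terrterm$. For this I would look only at the first $|\Sigma|/2\ge 2dk\log|\Sigma|$ balls (using the bound on $k$ from \eqref{eq:k-Sigma}); by \Cref{thm:uniform_simple} these hash fully randomly except with probability $\errterm$, and conditioned on that, a union bound over the $k$ bins gives empty-bin probability at most $k(1-1/k)^{2dk\log|\Sigma|}<k/|\Sigma|^{2d}$. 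Once all bins are non-empty, every estimator of \cite{li12oneperm,Shrivastava14oneperm,Shrivastava14densify} coincides with \eqref{eq:estimator}, so it suffices to prove the bound for the particular estimator $X^{\cM}$ of that equation, and \Cref{thm:minwise} becomes applicable --- its own error term $\terrterm$ absorbing the empty-bin probability.

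Next I would treat the idealized fully random experiment of \Cref{thm:minwise}. With truly random hashing and all bins non-empty, $X^{\cR}$ is the fraction of red balls among $k$ samples taken without replacement from the $n_R+n_B$ balls, so $\Ep{X^{\cR}}=f$ and Serfling's version of the Chernoff bound \cite{serfling74replacement} gives $\Prp{\abs{X^{\cR}-f}\ge\delta f}\le 2\exp(-\Omega(\delta^2 fk))$ for $\delta\le 1$; rewriting with $\delta=t/\sqrt{fk}$ yields $\Prp{\abs{X^{\cR}-f}\ge t\sqrt{f/k}}\le 2e^{-\Omega(t^2)}$ for all $t\le\sqrt{fk}$, and adding the empty-bin probability $|\Sigma|^{1-2d}$ makes this unconditional. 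The crucial point that lets \Cref{thm:minwise} do its work cheaply is that its perturbation parameter satisfies $\eps=O(1/\sqrt k)$ by \eqref{eq:k-Sigma}, hence also $\eps=O(1/\sqrt{fk})$: replacing $n_R,n_B$ by $(1\pm\eps)n_R,(1\mp\eps)n_B$ changes the target to $f'=(1\pm O(\eps))f$ and the deviation scale $\sqrt{f/k}$ to $\sqrt{f'/k}$ by only a $1\pm O(\eps)$ factor, so the same tail bound persists with a smaller hidden constant in the exponent. Feeding the resulting statements for $\ol X^{\cR}$ and $\ul X^{\cR}$ into \Cref{thm:minwise} transfers them to $X^{\cM}$ at the cost of an extra additive $\terrterm$, giving $\Prp{\abs{X^{\cM}-f}\ge t\sqrt{f/k}}\le 2e^{-\Omega(t^2)}+\terrterm$ for $t\le\sqrt{fk}$.

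Finally I would add the symmetry step: running the identical analysis on the blue balls, with estimator $X^{\cM}_B=1-X^{\cM}$ and frequency $f_B=1-f$ (and noting $\abs{X^{\cM}_B-f_B}=\abs{X^{\cM}-f}$), gives the same bound with $f$ replaced by $1-f$ and the range $t\le\sqrt{(1-f)k}$. Taking whichever of $f$ and $1-f$ is smaller then yields the claimed bound with $\min\{f,1-f\}$ for all $0\le t\le\sqrt{\min\{f,1-f\}k}$.

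I expect the only real obstacle to be bookkeeping rather than mathematics. \Cref{thm:minwise} is phrased as inequalities between one-sided tails with perturbed ball counts, so one must be careful to assemble the two-sided deviation from the correct perturbations --- more red and fewer blue for the upper tail, fewer red and more blue for the lower tail --- and to check in each case that the $\eps$-perturbation of $f$ is absorbed on the side that weakens, not strengthens, the claim, with the degenerate range $t=O(1)$ handled by triviality of $2e^{-\Omega(t^2)}$; everything else is a direct invocation of the two theorems plus Serfling's inequality.
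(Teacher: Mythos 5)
Your proposal is correct and follows essentially the same route as the paper: the non-empty-bins argument via \Cref{thm:uniform_simple} on the first $|\Sigma|/2\ge 2dk\log|\Sigma|$ balls, reduction to the estimator \eqref{eq:estimator}, Serfling's Chernoff bound without replacement for the fully random experiment, absorption of the $\eps=O(1/\sqrt{k})$ perturbation into the hidden constant, transfer to $X^{\cM}$ via \Cref{thm:minwise}, and the red/blue symmetry to get $\min\{f,1-f\}$. No gaps.
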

The significance of having errors in terms of $1-f$ is when the
fraction of red balls represent similarity as discussed earlier. This
gives us much better bounds for the estimation of very similar sets.

The important point above is not so much the exact bounds we get in
\Cref{cor:minwise}, but rather the way we translate bounds
with fully random hashing  to the case of mixed tabulation.


\paragraph{Mixed tabulation in distinct counting with $k$-partitioning}

We can also show that distinct counting with $k$-partitioning using
mixed tabulation as a hash function gives concentration bounds similar
to those obtained with fully-random hashing. With less than
$|\Sigma|/2$ balls, we just apply \Cref{thm:uniform_simple}, stating that mixed tabulation is fully random with high probability.  With more balls, we use the following analogue to Theorem \ref{thm:minwise}:

\begin{theorem}\label{thm:distinct}
    Consider a set of \sd{$n > |\Sigma|/2$} balls. Let $X^{\mathcal M}$ be the
    estimator of $n$ using either stochastic averaging
    \cite{Flajolet85counting} or HyperLogLog \cite{Flajolet07hyperloglog} over
    a $k$-partition with mixed tabulation hashing where
    $k\le|\Sigma|/(4d\log|\Sigma|)$. Let $\ol X^{\cR}$ be the same estimator in
    the alternative experiment where we use fully random hashing but with
    $\lfloor n(1+\eps)\rfloor$ balls where
    $\eps=O\!\left(\sqrt{\frac{\log|\Sigma|(\log\log|\Sigma|)^2}{|\Sigma|}}\right)$.  Then
   \[\Prp{X^{\cM}\ge (1+\delta)n}\leq \Prp{\ol X^{\cR}\geq
(1+\delta)n}+\terrterm\textnormal,\]
    Likewise, for a lower bound, let $\ul X^{\mathcal R}$  be the
    estimator in the experiment using fully random hashing
    but with $\lceil n(1-\eps)\rceil$ balls. Then
    \[\Prp{X^{\cM}\le (1-\delta)n}\leq \Prp{\ul X^{\cR}\leq
(1-\delta)n}+\terrterm.\]
\end{theorem}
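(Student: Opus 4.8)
The argument parallels the proof of \Cref{thm:minwise}, and both theorems should follow from a single lemma on the joint distribution, under mixed tabulation with $k$-partitioning, of the bin-minimizers. Observe first that for both stochastic averaging and HyperLogLog the estimator $X^{\cM}$ is a function of the vector $(M_1,\dots,M_k)$, where $M_j$ is the largest rank (number of leading zeros of the local hash value) in bin $j$; equivalently, $M_j$ is the rank of the \emph{minimum} local hash value in bin $j$. This function is monotone in the sense that stochastically increasing the $M_j$'s increases $X^{\cM}$, which is exactly what lets one-sided tail bounds survive a small perturbation of the ball count. So it suffices to show that, up to an additive $\terrterm$, the law of $(M_1,\dots,M_k)$ under mixed tabulation is dominated by its law under fully random hashing on $\lfloor n(1+\eps)\rfloor$ balls (for the upper tail) and dominates its law on $\lceil n(1-\eps)\rceil$ balls (for the lower tail).

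The plan is to split the balls by rank. Choose a level $\ell^\dagger$ so that the bottom $2^{-\ell^\dagger}$ fraction of $\calR$ is expected to receive between $4dk\log|\Sigma|$ and $|\Sigma|/2$ of the $n$ balls; this is possible since $k\le|\Sigma|/(4d\log|\Sigma|)$ and $n>|\Sigma|/2$. Let $B$ be the (random) set of balls whose local hash value lands in this bottom fraction, i.e.\ with $\mathrm{rank}\ge\ell^\dagger$. Since $|B|$ is a sum of $n$ indicators each of ``probability'' $2^{-\ell^\dagger}$, the concentration bounds available for mixed tabulation give that, with probability $1-\terrterm$, we have $|B|\le|\Sigma|/2$ and $|B|=(1\pm\eps)\,\E[|B|]$ for the $\eps=O\!\bigl(\sqrt{\log|\Sigma|(\log\log|\Sigma|)^2/|\Sigma|}\bigr)$ of the statement (the polyloglog loss reflecting the strength of the concentration). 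Conditioned on $|B|=m\le|\Sigma|/2$, a thresholded variant of \Cref{thm:uniform_simple} shows that the $m$ balls of $B$ hash fully randomly among themselves -- uniformly and independently into the bottom $2^{-\ell^\dagger}$ part of $\calR$ -- except with probability $\errterm$. In particular their bins are uniform and independent, so as $m\ge 4dk\log|\Sigma|$ every bin receives a ball of $B$ except with probability $\le|\Sigma|^{-2d}$. Hence, with probability $1-\terrterm$, every $M_j\ge\ell^\dagger$, so the whole vector $(M_1,\dots,M_k)$ is determined by $B$ together with its (fully random) hash values.

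It remains to couple with the fully random experiment. Conditioned on $|B|=m$ and on the good events, $(M_1,\dots,M_k)$ has exactly the distribution of the rank-$\ge\ell^\dagger$ part of a fully random $k$-partitioning of $m$ balls; denote by $g(m)$ the resulting, stochastically increasing, law of $X^{\cM}$. For the upper tail we then peel off $\Prp{|B|>M^+}=\terrterm$, where $M^+=\lfloor n(1+\eps)\rfloor\,2^{-\ell^\dagger}$, and bound $\Prp{X^{\cM}\ge(1+\delta)n}$ by $\Prp{g(M^+)\ge(1+\delta)n}+\terrterm$. Finally, $g(M^+)$ is stochastically dominated, up to a further $\terrterm$, by $\ol X^{\cR}$ (the fully random estimator on $\lfloor n(1+\eps)\rfloor$ balls), because $\ol X^{\cR}$ itself, w.h.p., has every bin hit at level $\ell^\dagger$ and so equals $g$ of its own rank-$\ge\ell^\dagger$ count, which concentrates around $M^+$. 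This yields the upper-tail bound; the lower-tail bound is symmetric, with $\lceil n(1-\eps)\rceil$ and $\ul X^{\cR}$. Summing the failure probabilities -- concentration of $|B|$, the thresholded \Cref{thm:uniform_simple}, the empty-bin bound -- gives the additive $\terrterm$.

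The main obstacle is the thresholded version of \Cref{thm:uniform_simple}: as stated it applies to a \emph{fixed} set of size at most $|\Sigma|/(1+\Omega(1))$, whereas $B$ is defined through the hash function itself. One has to revisit its proof -- the peeling argument on the derived keys $h_1^\star(x)$ -- and verify that it remains valid after conditioning on a hash-value threshold and on the resulting count $m$; morally, the derived keys of the few balls that fall below the threshold are still peelable with the same probability bound. A secondary and more routine point is the bookkeeping showing that the slack introduced at level $\ell^\dagger$ (or, in a layer-by-layer variant, summed over levels) amounts to only a $(1\pm\eps)$ multiplicative change in the effective ball count for the stated $\eps$; this mirrors the corresponding estimate behind \Cref{thm:minwise}.
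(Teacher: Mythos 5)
Your plan is correct and matches the paper's approach: the paper proves \Cref{thm:distinct} by the same method as \Cref{thm:minwise}, namely subsampling by fixing leading bits of the local hash value so that the surviving set has expected size $\Theta(|\Sigma|)$, invoking conditional full randomness on that set, showing every bin is hit, and coupling with the fully random experiment on $(1\pm\eps)n$ balls. The ``main obstacle'' you flag---a thresholded version of \Cref{thm:uniform_simple} valid when the surviving set is selected by the hash values themselves---is exactly \Cref{thm:uniform2} of the paper, and the $(1\pm\eps)$ concentration of $|B|$ is exactly \Cref{thm:deviation}, so no new lemma needs to be proved.
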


Conceptually, the proof of \Cref{thm:distinct} is much
simpler than that of \Cref{thm:minwise} since there are no colors. However,
the estimators are harder to describe, leading to a more messy formal proof,
which we do not have room for in this conference paper.

\subsection{Techniques and other results}\label{sec:other}
Our analysis of mixed tabulation gives many new insights into both
simple and double tabulation. To prove \cref{thm:minwise} and
\cref{thm:distinct}, we will show \sd{a generalization of
\Cref{thm:uniform_simple} proving} that mixed tabulation behaves like a truly
random hash function on fairly large sets with high probability, even when some
of the output bits of the hash function are known. The exact statement is as
follows.
\begin{theorem}\label{thm:uniform2}
    Let $h = h_2\circ h_1^\star$ be a mixed tabulation hash function. Let
    $X\subseteq [u]$ be any input set. Let $p_1,\ldots,p_b$ be any $b$ bit
        positions, $v_1,\ldots,v_b\in\{0,1\}$ be desired bit values
        and let $Y$ be the set of keys $x\in X$ where $h(x)_{p_i} = v_i$ for
    all $i$.
    If $\Ep{|Y|} = |X|\cdot2^{-b}\le
    |\Sigma|/(1+\Omega(1))$, then the remaining bits of the hash values in $Y$
    are completely independent with probability $1 -
    O(|\Sigma|^{1-\floor{d/2}})$.
\end{theorem}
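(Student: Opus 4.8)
The plan is to reduce the claim to a linear-algebra statement about the simple tabulation part $h_2$ of $h$, and then to prove that statement by a union bound over ``obstructions'' refining the argument behind \Cref{thm:uniform_simple}.

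\emph{Step 1 (reduction to linear independence of the derived keys of $Y$).} Number the output bits $1,\dots,r$ and split $h_2$ into the simple tabulation function $h_2'$ producing the $b$ bits at positions $p_1,\dots,p_b$ and the simple tabulation function $h_2''$ producing the remaining $r-b$ bits; the tables of $h_1$, $h_2'$, $h_2''$ are pairwise disjoint, so these three functions are independent. The set $Y$, and with it the set $D=\set{h_1^\star(x)\cond x\in Y}\subseteq\Sigma^{c+d}$ of derived keys of its members, is a function of $(h_1,h_2')$ alone, and the keys in $D$ are distinct (the first $c$ characters of $h_1^\star(x)$ are those of $x$). Condition on any outcome of $(h_1,h_2')$. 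The tuple of remaining hash bits of the keys in $Y$ equals $\bigl(h_2''(y)\bigr)_{y\in D}$, an $\mathbb{F}_2$-linear image of the still-uniform tables of $h_2''$; it is uniform on the whole space --- that is, the remaining bits are completely independent --- exactly when the $\mathbb{F}_2$-incidence vectors (over position--character pairs) of the keys in $D$ are linearly independent. Hence it suffices to show, with the probability over $h_1$ and $h_2'$,
\[
    \Prp{D\text{ is }\mathbb{F}_2\text{-linearly dependent}}=\errterm.
\]

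\emph{Step 2 (from dependence to a union bound over zero-sets).} The vectors in $D$ are dependent iff some non-empty $S\subseteq Y$ has every position--character of its derived keys occurring an even number of times. Splitting the $c$ original from the $d$ derived positions, this says (i) $S$ is a \emph{zero-set} of $X$, meaning every original character occurs an even number of times among the keys of $S$ (a purely combinatorial property of $S\subseteq X$), and (ii) every derived character of $\set{h_1^\star(x)\cond x\in S}$ also cancels (a property of $h_1$). So I would bound the probability that some non-empty zero-set $S$ of $X$ satisfies (ii) and lies inside $Y$, by a union bound over the zero-sets of $X$. Compared with the proof of \Cref{thm:uniform_simple}, which is precisely the case $b=0$, $Y=X$, every term now acquires an extra factor $\Prp{S\subseteq Y}$, and the hypothesis $\Ep{\abs{Y}}=\abs{X}2^{-b}\le\abs{\Sigma}/(1+\Omega(1))$ is exactly what lets this factor play the role that ``$\abs{X}\le\abs{\Sigma}/(1+\Omega(1))$'' plays in \Cref{thm:uniform_simple}.

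\emph{Step 3 (the two regimes).} For zero-sets $S$ up to a size threshold, I would bound $\Prp{S\subseteq Y}\le1$ and import the estimate from \Cref{thm:uniform_simple}: grouping the terms by the combinatorial type of $S$ (its size, and a decomposition of $S$ into minimal circuits), bounding the number of sets of each type (polynomial in $\abs{\Sigma}$, since a zero-set is essentially pinned down by the $\le c\abs{S}/2$ characters it touches), and multiplying by the probability that every one of the $d$ derived coordinates cancels on $S$ (the coordinates use independent tables, so this is a product of $d$ small factors), the bottleneck type yields the stated $\errterm$. For zero-sets $S$ above the threshold --- where the number of $S$ of a given size can be as large as $\abs{X}^{\Omega(\abs{S})}$ --- I would instead use that, conditioned on $h_1$ making (ii) hold, the incidence vectors of $\set{h_1^\star(x)\cond x\in S}$ have $\mathbb{F}_2$-rank $\abs{S}-m(S)$, where $m(S)$ is the dimension of the zero-set space of $S$ (at most the number of minimal circuits in a decomposition of $S$). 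Since $h_2'$ is independent of $h_1$, this gives $\Prp{S\subseteq Y\cond h_1}\le 2^{-b(\abs{S}-m(S))}=\bigl(\Ep{\abs{Y}}/\abs{X}\bigr)^{\abs{S}-m(S)}$; as $\Ep{\abs{Y}}\le\abs{\Sigma}/(1+\Omega(1))$, this geometric decay outweighs the $\abs{X}^{\Omega(\abs{S})}$ count, so the tail over large $S$ is negligible. Adding the two ranges gives the bound.

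\emph{Main obstacle.} The crux is exactly this interaction. First, $Y$ is not a fixed set but is generated by $h_1$ together with $h_2'$, so one cannot simply apply \Cref{thm:uniform_simple} to ``the set $Y$''. Worse, the sets $S$ one must sum over are precisely those on which $h_1^\star$ is linearly \emph{dependent}, so $\Prp{S\subseteq Y}$ exceeds the naive $2^{-b\abs{S}}$ by a factor up to $2^{b\,m(S)}$; pinning down $m(S)$ --- it is $1$ for a single minimal circuit, while ``grid'' configurations such as a full $A\times B\subseteq\Sigma^2$ realize the opposite extreme --- and then checking that after cancellation the residual power of $\abs{\Sigma}$ is still at most $\abs{\Sigma}^{1-\floor{d/2}}$, uniformly over all sizes of $S$ and all $\abs{X}$, is the technical heart. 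A further nuisance is estimating, for a general (non-minimal) zero-set $S$, the probability that all $d$ derived coordinates cancel on $S$; this requires control of how simple tabulation acts along the cycles of the incidence hypergraph of $S$ --- exactly where the combinatorial machinery from the simple-tabulation analysis is reused.
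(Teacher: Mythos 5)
Your Steps 1--2 are a sound reduction and match the paper's setup: conditioned on $h_1$ and the selected output bits, the remaining bits of $Y$ are fully independent iff the derived keys $h_1^\star(Y)$ are linearly independent over $\mathbb{F}_2$, and any dependency projects to a subset $S\subseteq Y$ whose original position characters cancel and whose derived characters also cancel. From there, however, you diverge from the paper and, more importantly, stop exactly at the point where the theorem is actually hard. The paper does \emph{not} run a union bound over all dependent subsets $S$. It fixes a single key $y_0\in Y$ dependent on $Y\setminus\set{y_0}$, observes that $y_0$ must then lie on a ``lollipop'' (path plus cycle) in each of the $\floor{d/2}$ \emph{independent} hash graphs $G_{2i,2i+1}$ built from pairs of derived coordinates, truncates each lollipop at the first of four kinds of obstruction so that every witness has length $O(\log_{1+\eps}\abs{\Sigma})$, and only then union-bounds over these short, structured witnesses; each of the $\floor{d/2}$ graphs contributes a clean factor $\approx\abs{\Sigma}^{-1}$, and a final peeling step removes the reliance on $\Delta(y_0)=O(n)$ holding for every $y_0$. \Cref{zeroSumProvable} is used only locally, to count obstructions of type \textbf{A}, not to enumerate all dependent sets.

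The gap in your plan is Step 3. Below the threshold you propose to ``import the estimate from \Cref{thm:uniform_simple}'', but in this paper \Cref{thm:uniform_simple} \emph{is} the $b=0$ case of the statement being proved and is itself established only by the machinery above; there is no independently available bound on ``the sum over zero-sets grouped by type'' to import, so this step is circular. Above the threshold, the claim that $2^{-b(\abs{S}-m(S))}$ decays fast enough to beat the count is false as stated for exactly the structured sets you flag: when $m(S)=\Omega(\abs{S})$ (e.g.\ grids $A\times B$), one has $2^{-b(\abs{S}-m(S))}\approx 2^{-O(b\sqrt{\abs{S}})}$, and with $\abs{X}$ as large as $2^{b}\abs{\Sigma}/(1+\Omega(1))$ the product of count and containment probability is $\abs{\Sigma}^{\Omega(\sqrt{\abs{S}})}$ --- it grows. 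All of the required decay must then come from the probability that the $d$ derived coordinates cancel on $S$, uniformly over the structure of $S$, and that is precisely the quantity your plan never estimates (you defer it as ``the technical heart'' and ``a further nuisance''). Since producing the exponent $1-\floor{d/2}$ from such an estimate is the entire content of the theorem, the proposal as written does not constitute a proof.
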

In connection with our $k$-partition applications, the specified
output bits will be used to select a small set of keys that
are critical to the final statistics, and for which we have fully
random hashing on the remaining bits.

In order to prove \cref{thm:uniform2} we develop a number of
structural lemmas in \cref{sec:bound_dep} relating to key dependencies
in simple tabulation. These lemmas provides a basis for showing
some interesting results for simple tabulation and double
tabulation, which we also include in this paper. These results are
briefly described below.

\paragraph{Double tabulation and uniform hashing} In double tabulation \cite{thorup13doubletab}, we
compose two independent simple tabulation functions $h_1:\Sigma^c\to\Sigma^d$ and
$h_2:\Sigma^d\to\calR$ defining $h:\Sigma^c\to\calR$ as
$h(x)=h_2(h_1(x))$. We note that with the same values for $c$ and $d$, double tabulation
is a strict simplification of mixed tabulation in that $h_2$ is only applied
to $h_1(x)$ instead of to $x\cdot h_1(x)$. The advantage of
mixed tabulation is that we know that the ``derived'' keys $x\cdot h_1(x)$
are distinct, and this is crucial to our analysis of $k$-partitioning.
However, if all we want is uniformity over a given set, then we
show that the statement of \cref{thm:uniform_simple} also holds for
double tabulation.
\begin{theorem}\label{thm:double-uniform}
    Given an arbitrary set $S\subseteq [u]$ of size $|\Sigma|/(1+\Omega(1))$,
with probability
$1-O(|\Sigma|^{1-\lfloor d/2\rfloor})$ over the choice of $h_1$, the
double tabulation function $h_2\circ h_1$ is
fully random over $S$.
\end{theorem}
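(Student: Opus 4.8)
The plan is to reduce the claim to a statement about the random choice of $h_1$ alone, and then invoke the machinery already developed for \cref{thm:uniform_simple}. Recall (this is the content of the structural lemmas of \cref{sec:bound_dep}) that a simple tabulation function $h_2\colon\Sigma^d\to\calR$ is fully random on a set $T\subseteq\Sigma^d$ precisely when $T$ contains no nonempty \emph{dependent} subset, i.e.\ no nonempty $T'\subseteq T$ in which, for every coordinate $j\in\{1,\dots,d\}$, every character value occurs an even number of times among $\{t_j : t\in T'\}$; in that case the hash values can be peeled off one at a time. Hence $h_2\circ h_1$ is fully random on $S$ as soon as the image $T=h_1(S)$ (a) consists of $|S|$ distinct derived keys and (b) has no dependent subset. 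So it suffices to show that (a) and (b) both hold with probability $1-O(|\Sigma|^{1-\lfloor d/2\rfloor})$ over the choice of $h_1$.

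Part (a) is a routine union bound. For distinct $x,y\in S$ the keys differ in some character position, so $h_1(x)\xor h_1(y)$ is uniform in $\Sigma^d$ and $\Prp{h_1(x)=h_1(y)}=|\Sigma|^{-d}$. Using $|S|\le|\Sigma|$, the probability that $h_1$ is not injective on $S$ is at most $\binom{|S|}{2}|\Sigma|^{-d}=O(|\Sigma|^{2-d})$, which is $O(|\Sigma|^{1-\lfloor d/2\rfloor})$ for every $d\ge 1$. Note that a dependent subset of distinct keys necessarily has even size, and in fact size at least $4$, so once (a) holds the only remaining obstruction is a dependent subset of size at least $4$.

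For part (b) I would run the same argument that proves \cref{thm:uniform_simple}. There the object of study is the set of derived keys $\{x\cdot h_1(x):x\in X\}$ of a set $X$ with $|X|\le|\Sigma|/(1+\Omega(1))$, and the conclusion — obtained by bounding, over the random $h_1$, the probability that any minimal dependent ``shape'' is realised and summing these probabilities — is exactly that no dependent subset occurs except with probability $O(|\Sigma|^{1-\lfloor d/2\rfloor})$. The only way in which the mixed-tabulation derived keys differ from $h_1(S)$ is that they carry $c$ extra leading characters equal to the key itself. These leading characters are fixed, not adversarially chosen: they can only add ``all-even'' constraints that a dependent subset must additionally satisfy, and every quantitative saving in the \cref{thm:uniform_simple} bound comes from the $d$ characters produced by $h_1$, each pair of which contributes a factor decaying in $|\Sigma|$, yielding the exponent $1-\lfloor d/2\rfloor$. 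Dropping the leading characters therefore leaves the union bound intact, and it applies verbatim to $T=h_1(S)$. Combining (a) and (b) gives the theorem.

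The step I expect to be the real obstacle is part (b): one must make sure that the union bound in the proof of \cref{thm:uniform_simple} is not implicitly pruned using the leading (key) coordinates — if it were, removing those coordinates could in principle enlarge the family of subsets being summed over and weaken the bound. My expectation is that it is not: the leading coordinates in mixed tabulation are merely fixed rather than malicious, the count of minimal dependent shapes and their individual probabilities are governed entirely by the $d$ random characters, and the size-$|\Sigma|/(1+\Omega(1))$ restriction is exactly what keeps that sum convergent. Verifying this carefully — essentially re-reading the \cref{thm:uniform_simple} argument with $c$ set to $0$ for the purpose of the combinatorics while keeping $d$ as is — is the one place where genuine checking, as opposed to routine estimation, is needed.
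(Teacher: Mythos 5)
Your proposal is correct and follows essentially the same route as the paper: the paper reduces \cref{thm:double-uniform} to showing that $h_1$ is \emph{fully peelable} on $S$ (\cref{thm:peelable}) --- a condition that subsumes your (a) and implies your (b) --- and proves that peelability statement with exactly the machinery behind \cref{thm:uniform_simple,thm:uniform3}. The verification you flag as the crux indeed goes through as you expect: the lollipop/hash-graph argument of \cref{sec:uniform} is phrased entirely in terms of the $d$ derived characters $h^{(0)},\dots,h^{(d-1)}$ (the original $c$ characters enter only in counting configurations of input keys via \cref{zeroSumProvable}, not in pruning the union bound), so the conclusion applies to $h_1(S)$ without the concatenated key.
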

\cref{thm:double-uniform} should be contrasted by
the main theorem from \cite{thorup13doubletab}:

\begin{theorem}[Thorup {\cite{thorup13doubletab}}]\label{thm:double-thorup}
If $d\geq 6c$, then with probability
$1-o(|\Sigma|^{2-d/(2c)})$ over the choice of $h_1$, the
double tabulation function $h_2\circ h_1$ is
$k=|\Sigma|^{1/(5c)}$-independent.
\end{theorem}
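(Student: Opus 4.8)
).}
The plan is to first reduce $k$-independence of $h_2\circ h_1$ (over the still-random $h_2$) to a purely combinatorial property of $h_1$. Write $h_2(y)=\bigoplus_{i}T_i[y_i]$, and for $y\in\Sigma^d$ let $v(y)\in\mathbb F_2^{d\times\Sigma}$ be its \emph{incidence vector} ($1$ in coordinate $(i,a)$ iff $y_i=a$). Each output bit of $h_2(y)$ is the inner product of $v(y)$ with an independent uniform bit-vector (the corresponding bits of the tables), so a set $T\subseteq\Sigma^d$ hashes to fully independent uniform values under $h_2$ if and only if $\{v(y):y\in T\}$ is linearly independent over $\mathbb F_2$. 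Hence $h_2\circ h_1$ is $k$-independent exactly when $h_1$ is \emph{$k$-good}: for every $B\subseteq\Sigma^c$ with $|B|\le k$ the derived keys $\{h_1(x):x\in B\}$ have linearly independent incidence vectors (note that two coinciding derived keys correctly fail this). So it suffices to prove $\Pr_{h_1}[h_1\text{ not }k\text{-good}]=o(|\Sigma|^{2-d/(2c)})$.

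Next I would pass to minimal obstructions: if $h_1$ is not $k$-good then, since over $\mathbb F_2$ a minimal linearly dependent set sums to zero, there is $B\subseteq\Sigma^c$ with $2\le|B|=t\le k$ and $\bigoplus_{x\in B}v(h_1(x))=0$, and I bound the probability of this by a union bound over $B$. The leverage comes from $h_1$ being built from $d$ \emph{independent} simple-tabulation maps $g_1,\dots,g_d:\Sigma^c\to\Sigma$, one per output character, so that
\[
  \Pr_{h_1}\!\Bigl[\bigoplus_{x\in B}v(h_1(x))=0\Bigr]=\prod_{j=1}^{d}\Pr_{g_j}\!\bigl[\{g_j(x):x\in B\}\text{ is a multiset in which every value has even multiplicity}\bigr].
\]
For $t=2$ the event is a collision $h_1(x)=h_1(x')$ of probability $|\Sigma|^{-d}$, and there are at most $|\Sigma|^{2c}$ pairs, contributing $|\Sigma|^{2c-d}=o(|\Sigma|^{2-d/(2c)})$ once $d\ge 6c$; the task is to show the sum over all larger $t$ stays within this.

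The technical core is to group the sets $B$ by the combinatorial \emph{type} of the dependency structure among their input incidence vectors $\{v(x):x\in B\}$ and, for each type, bound (number of sets of that type) $\times$ (per-coordinate probability)$^{\,d}$. Here the structural simple-tabulation lemmas are needed: greater internal dependency in $B$ makes $\{g_j(x):x\in B\}$ likelier to pair up perfectly (e.g.\ a $c=2$ ``combinatorial rectangle'' $\{(a,c),(a,d),(b,c),(b,d)\}$ pairs up with probability $\Theta(1/|\Sigma|)$, not $O(1/|\Sigma|^2)$), but it also pins $B$ down using far fewer than $ct$ free characters, so there are far fewer such sets. One shows this trade-off is always in our favour when $d\ge 6c$ --- enough derived characters for the $d$-fold product to beat the $|\Sigma|^{ct}$-type counts --- while $k=|\Sigma|^{1/(5c)}$ keeps $t$ so small that the combinatorial overhead (choosing the perfect matching inside a coordinate, choosing the type) is sub-$|\Sigma|^{\Theta(t)}$ and gets absorbed. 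Summing over $2\le t\le k$ then leaves the $t=2$ term dominant, giving $O(|\Sigma|^{2c-d})=o(|\Sigma|^{2-d/(2c)})$.

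The main obstacle is precisely this per-type estimate: one needs a clean structural description of which key-sets can realise a given dependency profile, together with a matching count, so that \#sets $\times$ probability$^{\,d}$ provably decays --- this is the content of the dependency lemmas of \cref{sec:bound_dep} (and of their counterparts in \cite{thorup13doubletab}). It also explains why the hypotheses here are much stronger than those of \cref{thm:double-uniform}: there one fixes a \emph{single} set $S$ of size close to $|\Sigma|$ and need only exclude one kind of obstruction (an even-structured subset of $S$), whereas here obstructions must be excluded simultaneously across all $\binom{|\Sigma|^c}{\le k}$ tuples of keys, which forces both the larger number $d\ge 6c$ of derived characters and the cap $k=|\Sigma|^{1/(5c)}$ on the attainable independence.
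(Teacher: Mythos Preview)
The paper does not prove this theorem at all: \Cref{thm:double-thorup} is quoted verbatim from \cite{thorup13doubletab} purely as a point of contrast with the paper's own \Cref{thm:double-uniform}, and no argument for it appears anywhere in the text. So there is no ``paper's own proof'' to compare your plan against; you correctly flag in your first line that the result is due to \cite{thorup13doubletab}.

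As a sketch of the argument in \cite{thorup13doubletab}, your reduction is sound: $k$-independence of $h_2\circ h_1$ is equivalent to linear independence over $\mathbb F_2$ of the incidence vectors $v(h_1(x))$ for every $\le k$-subset, minimal dependent sets have zero XOR, and the $d$ output characters factorise the bad-event probability. The informal trade-off you describe (more internal dependency in $B$ raises the per-coordinate pairing probability but cuts the count of such $B$) is exactly the tension the proof has to resolve. One point to be careful about: your final claim that the $t=2$ collision term dominates and yields $O(|\Sigma|^{2c-d})$ is much stronger than the stated bound $o(|\Sigma|^{2-d/(2c)})$ --- for $d=6c$ these are $|\Sigma|^{-4c}$ versus $|\Sigma|^{-1}$. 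In the actual analysis, highly structured sets (e.g.\ combinatorial rectangles and their higher-dimensional analogues, where $\bigoplus_{x\in B}x=\emptyset$ already on the input side) contribute terms that do not decay like the collision term, and it is these that drive the exponent $2-d/(2c)$. So the ``$t=2$ dominant'' conclusion is where your sketch departs from what a complete proof would show; the per-type estimate you defer to is precisely where the weaker exponent enters.
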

\sd{The contrast here is, informally, that} \Cref{thm:double-uniform} is a statement about any one large set,
\Cref{thm:double-thorup} holds for all small sets. Also,
\cref{thm:double-uniform} with $d=4$ ``derived'' characters gets essentially
the same error probability as \cref{thm:double-thorup} with $d=6c$. Of
course, with $d=6c$, we are likely to get both properties with the same
double tabulation function.

Siegel \cite{siegel04hash} has proved that
with space $|\Sigma|$ it is impossible to get independence higher
than $|\Sigma|^{1-\Omega(1)}$ with constant time evaluation. This is much less than
the size of $S$ in \cref{thm:double-uniform}.

\Cref{thm:double-uniform} provides an extremely simple $O(n)$ space
implementation of a constant time hash function that is likely uniform
on any given set $S$ of size $n$. This should be compared with the
corresponding linear space uniform hashing of Pagh and Pagh \cite[\S 3]{PP08}.
Their
original implementation used Siegel's \cite{siegel04hash} highly
independent hash function as a subroutine. Dietzfelbinger and Woelfel
\cite{dietzfel03tabhash} found a simpler subroutine that was
not highly independent, but still worked in the uniform hashing from
 \cite{PP08}. However, Thorup's highly
independent double tabulation from \cref{thm:double-thorup}
is even simpler, providing us the simplest known implementation
of the uniform hashing in \cite{PP08}. However, as
discussed earlier, double tabulation uses many more derived characters
for high independence than for uniformity on a given set, so
for linear space uniform hashing on a given set, it is much
faster and simpler to use the double tabulation of \cref{thm:double-uniform}
directly
rather than \cite[\S 3]{PP08}.
We note that \cite[\S 4]{PP08} presents
a general trick to reduce the space from $O(n(\lg
n+\lg|\cR|))$ bits downto $(1+\eps)n\lg|\cR| + O(n)$ bits,
preserving the constant evaluation time. This reduction can also be
applied to Theorem \ref{thm:double-uniform} so that we also get
a simpler overall construction for a succinct dictionary
using $(1+\eps)n\lg|\cR| + O(n)$ bits of space and constant evaluation time.

We note that our analysis of \Cref{thm:uniform2}
does not apply to Pagh and Pagh's construction in \cite{PP08}, without strong
assumptions on the hash functions used, as we rely heavily on the
independence of output bits provided by simple tabulation.

\sd{
\paragraph{Peelable hash functions and invertible bloom filters}
Our proof of \Cref{thm:double-uniform} uses Thorup's variant
\cite{thorup13doubletab} of Siegel's notion of peelability \cite{siegel04hash}.
The hash function $h_1$ is a \sd{fully peelable} map of $S$ if for every subset
$Y\subseteq S$ there exists a key $y\in Y$ such that $h_1(y)$ has a unique
output character. If $h_1$ is peelable over $S$ and $h_2$ is a random simple
tabulation hash function, then $h_2\circ h_1$ is a uniform hash function over
$S$. \Cref{thm:double-uniform} thus follows by proving the following theorem.
\begin{theorem}\label{thm:peelable}
    Let $h : \Sigma^c\to \Sigma^d$ be a simple tabulation hash function and let
    $X$ be a set of keys with $|X|\le |\Sigma|/(1+\Omega(1))$. Then $h$ is
    fully peelable on $X$ with probability $1 - O(|\Sigma|^{1-\floor{d/2}})$.
\end{theorem}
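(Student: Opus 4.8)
The plan is to recast full peelability as a combinatorial property of the random hash values and then bound the probability that it fails by a union bound, with the individual terms controlled using the structural facts about dependencies in simple tabulation developed in \cref{sec:bound_dep}. Write $h=(h^{(1)},\dots,h^{(d)})$, where $h^{(j)}\colon\Sigma^c\to\Sigma$ is the simple tabulation hash function producing the $j$-th output character; the $d$ functions $h^{(j)}$ are \emph{independent}, as they read disjoint entries of the tables. A key $x\in Y$ has a unique output character in $Y$ exactly when, for some coordinate $j$, $h^{(j)}(x)\neq h^{(j)}(y)$ for every $y\in Y\setminus\{x\}$. Hence $h$ is \emph{not} fully peelable on $X$ iff there is a nonempty $Y\subseteq X$ such that, for every coordinate $j$, the partition of $Y$ induced by $h^{(j)}$ has no singleton block. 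Call such a $Y$ an \emph{obstruction}, and restrict attention to one that is minimal under inclusion. The goal is to bound $\Prp{X\text{ has an obstruction}}$ by summing, over candidate sets $Y$ organised by size and internal structure, the probability that $Y$ is an obstruction.

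Fix $Y$ with $\abs{Y}=m$. Independence of the coordinates gives $\Prp{Y\text{ is an obstruction}}=\prod_{j=1}^{d}\Prp{h^{(j)}\text{ induces no singleton block on }Y}$, so one coordinate suffices. No singleton block forces the $m$ keys to collapse onto at most $\floor{m/2}$ values, hence at least $\ceil{m/2}$ collision equations $h^{(j)}(y)=h^{(j)}(y')$ must hold, each a linear equation on the random tables of $h^{(j)}$. If these equations have full rank, the per-coordinate probability is at most $\abs{\Sigma}^{-\ceil{m/2}}$ times a combinatorial factor depending only on $m$; taking the product over the $d$ coordinates and summing over the $\binom{\abs{X}}{m}\le(e\abs{X}/m)^m$ choices of $Y$, the bound $\abs{X}\le\abs{\Sigma}/(1+\Omega(1))$ makes the $\abs{\Sigma}$-powers dominate, the series converges geometrically in $m$, and it is governed by the smallest obstructions (size $2$), contributing $O(\abs{\Sigma}^{2-d})$.

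The collision equations fail to have full rank only if $Y$ contains a nonempty subset whose position-characters XOR to zero — a \emph{dependency}, necessarily of even size at least $4$, since every position character used must occur an even number of times. A dependency of size $2k$ drops the rank of the collision system by at most one in each coordinate, so a matching of its keys is satisfied with probability $\abs{\Sigma}^{-(k-1)}$ rather than $\abs{\Sigma}^{-k}$ per coordinate. The decisive case is $2k=4$: by the structural lemmas of \cref{sec:bound_dep}, any $X$ with $\abs{X}\le\abs{\Sigma}/(1+\Omega(1))$ contains at most $O(\abs{X}^{3})$ such dependencies (the fourth key is forced by the other three), and each is an obstruction with probability $O(\abs{\Sigma}^{-d})$, for a total of $O(\abs{X}^{3}\abs{\Sigma}^{-d})=O(\abs{\Sigma}^{3-d})$; longer dependencies, and unions of dependencies, contribute strictly less. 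This is the source of the error term $O(\abs{\Sigma}^{1-\floor{d/2}})$, which is tight at $d=4$.

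To assemble the proof I would use the structural lemmas of \cref{sec:bound_dep} to bound, for every admissible $X$, the number of dependent subsets of each size and the rank deficiency each one induces per coordinate; the density hypothesis $\abs{X}\le\abs{\Sigma}/(1+\Omega(1))$ is exactly what keeps the resulting double sum — over the obstruction size and over the total deficiency — geometrically convergent, so it is governed by its first terms. I expect the main obstacle to be controlling the \emph{large} obstructions: for a small constant $d$ the crude per-set estimate degrades as $m$ grows, and one must lean on the structural bounds on the number and shape of dependencies — together with the slack in $\abs{X}\le\abs{\Sigma}/(1+\Omega(1))$ and, where helpful, the minimality of the chosen obstruction — to show that the contribution of large $m$ is negligible. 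The remaining ingredients, namely the coordinate-wise independence, the fully-random collision estimates, and the geometric-sum bookkeeping, are routine.
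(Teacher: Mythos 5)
Your opening reduction is sound: full peelability fails iff some nonempty $Y\subseteq X$ has no singleton block under any of the $d$ coordinate functions, and the $d$ coordinates of a simple tabulation map into $\Sigma^d$ are indeed independent simple tabulation functions. But the two steps you defer are the theorem, and neither survives as sketched. First, the treatment of simple tabulation's dependencies is where essentially all of the paper's work lies, and your summary of it is wrong in places: the assertion that a dependency of size $2k$ ``drops the rank of the collision system by at most one in each coordinate'' is unjustified and false for structured sets (the zero-XOR relations of $Y$ form a linear space, and for a set like $[2]\times[m/2]$ -- the paper's own counterexample to simple tabulation -- the rank deficiency is large); the claim that longer dependencies and unions of dependencies ``contribute strictly less'' is precisely what must be proved; and \cref{zeroSum} gives $O(n^2)$ zero-XOR $4$-tuples, not $O(n^3)$. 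The paper handles this via \cref{zeroSumProvable}, a four-way case analysis of how a witness first meets a dependency, and -- crucially -- an averaging argument: the number $\Delta(y_0)$ of dependent $4$-tuples through a fixed key can be large, so a worst-case union bound over keys fails; instead one shows \emph{some} $y_0$ has $\Delta(y_0)=O(n)$, peels it, and recurses. Nothing in your plan plays this role.

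Second, the union bound over obstruction sets $Y$ does not close as stated. The ``combinatorial factor depending only on $m$'' is the number of no-singleton collision patterns, which is superexponential in $m$ (at least $(m-1)!!$ matchings per coordinate), and the crude estimate $\binom{n}{m}\cdot m^{\Theta(dm)}\cdot\abs{\Sigma}^{-d\ceil{m/2}}$ is not geometrically decaying once $m$ is a constant fraction of $\abs{\Sigma}$; even in the fully random case the large-$m$ regime needs a different (balls-in-bins) estimate rather than the pairing bound, and with tabulation the rank-deficiency issue compounds it. You identify this as ``the main obstacle'' but offer no mechanism (minimality of $Y$ gives no evident handle on the count). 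The paper avoids enumerating obstruction sets entirely: it pairs the coordinates into $\floor{d/2}$ two-dimensional hash graphs $G_{2i,2i+1}$, shows that a non-peelable key must lie on a lollipop (path into a cycle) in \emph{every} such graph, and truncates each witness path at length $s=\ceil{2\log_{1+\eps}\abs{\Sigma}}$ (its case \textbf{D}), so every event in the union bound involves only $O(d\log\abs{\Sigma})$ keys and each graph contributes a clean factor of roughly $\abs{\Sigma}^{-1}$. That pairing is also the true source of the exponent: $\abs{\Sigma}^{1-\floor{d/2}}$ is $\abs{\Sigma}$ choices of a starting key times $\abs{\Sigma}^{-\floor{d/2}}$ from the $\floor{d/2}$ graphs, whereas your accounting ($\abs{\Sigma}^{2-d}$ and $\abs{\Sigma}^{3-d}$) coincides with the stated bound only at $d=4$ and does not explain its dependence on $d$.
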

The peelability of $h$ is not only relevant for uniform hashing. This property is
also critical for the hash function in Goodrich and Mitzenmacher's Invertible
Bloom Filters \cite{Goodrich11ibt}, which have found numerous applications in
streaming and data bases
\cite{eppstein2011s,eppstein2011straggler,mitzenmacher2012biff}.
So far Invertible Bloom Filters have been implemented
with fully random hashing, but \Cref{thm:peelable} states that simple
tabulation suffices for the underlying hash function.
}

\paragraph{Constant moments}
An alternative to Chernoff bounds in providing good concentration is to use
bounded moments. We show that the $k$th moment of simple tabulation comes
within a constant factor of that achieved by truly random hash functions for
any constant $k$.

\begin{theorem}\label{thm:kthmoment}
    Let $h : [u] \to \calR$ be a simple tabulation hash function.
    Let $x_0,\ldots,x_{m-1}$ be $m$ distinct keys from $[u]$ and let
    $Y_0,\ldots, Y_{m-1}$ be any random variables such that $Y_i\in [0,1]$ is a
    function of $h(x_i)$ with mean $\Ep{Y_i} = p$ for all $i\in [m]$.
    Define $Y = \sum_{i \in [m]} Y_i$ and $\mu = \Ep{Y} = mp$. Then for any
    constant integer $k \ge 1$:
    \[
        \Ep{( Y - \mu)^{2k}}
        =
        O\!\left (
            \sum_{j=1}^k \mu^j
        \right )\ ,
    \]
    where the constant in the $O$-notation is dependent on $k$ and $c$.
\end{theorem}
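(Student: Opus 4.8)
The plan is to expand the $2k$-th central moment into a sum over $2k$-tuples of keys, show that almost every tuple contributes $0$, and estimate the survivors using the dependency structure of simple tabulation from \cref{sec:bound_dep}. First, normalize: put $Z_i:=Y_i-p$, so $\Ep{Z_i}=0$, $\abs{Z_i}\le1$ and $\Ep{Z_i^2}\le\Ep{Y_i^2}\le\Ep{Y_i}=p$ (as $Y_i\in[0,1]$), with $Z_i$ a function of $h(x_i)$. Then
\[
    \Ep{(Y-\mu)^{2k}}=\sum_{(i_1,\dots,i_{2k})\in[m]^{2k}}\Ep{\textstyle\prod_{\ell=1}^{2k}Z_{i_\ell}}\,,
\]
and since the summand depends only on the multiset $M$ of keys in the tuple and each size-$2k$ multiset is the image of at most $(2k)^{2k}=O_k(1)$ tuples, it suffices to bound $\sum_M\abs{\Ep{\prod_{x\in M}Z_x}}$ over all size-$2k$ multisets $M$ (product with multiplicity; $Z_x$ denotes the $Z_i$ of key $x$).

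Second, factor out the dependency structure. Call two keys \emph{linked} if they agree in some coordinate; let $K$ be the distinct keys of $M$ and $K=D_1\sqcup\dots\sqcup D_r$ its partition into linked components. Keys of different components read disjoint table entries, so their hash values are independent and $\Ep{\prod_{x\in M}Z_x}=\prod_{\alpha}\Ep{\prod_{x\in M\cap D_\alpha}Z_x}$. It thus suffices to prove a per-component bound: for each multiplicity pattern on $s$ linked keys with total multiplicity $t$, the sum of $\abs{\Ep{\prod_j Z_{y_j}^{m_j}}}$ over all linked $s$-sets $\{y_1,\dots,y_s\}$ realizing the pattern is $O(\mu^{e})$ for some integer $e$ with $1\le e\le\floor{t/2}$. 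Given this, summing over the $O_k(1)$ ways to cut a size-$2k$ multiset into components yields $\prod_\alpha O(\mu^{e_\alpha})=O(\mu^{\sum_\alpha e_\alpha})$ with $1\le\sum_\alpha e_\alpha\le\floor{\tfrac12\sum_\alpha t_\alpha}=k$, and any such power is $O(\mu+\mu^k)=O(\sum_{j=1}^k\mu^j)$, which is the claim.

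The per-component bound rests on two facts. (i) \emph{Vanishing:} if a key $y$ occurs exactly once in $M$ and has a character $(i,y_i)$ used by no other key of $M$, then $\Ep{\prod_{x\in M}Z_x}=0$ — condition on all table entries but $T_i[y_i]$; the product over $x\ne y$ is then constant while $Z_y$ is a function of the now-uniform $h(y)$, so the conditional expectation is that constant times $\Ep{Z_y}=0$. Since distinct keys differ in some coordinate, no surviving linked component has a multiplicity-$1$ key linked to only one other key, and iterating forces any survivor to be heavily interlocked, which sharply restricts both its shape and the number of $s$-sets of that shape. (ii) \emph{Magnitude with interlocking savings:} crudely $\abs{\Ep{\prod_j Z_{y_j}^{m_j}}}\le p$ by H\"older ($\Ep{\abs{Z_{y_j}}^{m_j s}}\le\Ep{Z_{y_j}^2}\le p$ as $m_j s\ge 2$), but whenever characters are shared one conditions on the common part of the corresponding hash values — equivalently, exploits the linear relations the shared characters force among $h(y_1),\dots,h(y_s)$ together with $\Ep{Z_{y_j}}=0$ — to upgrade this to a higher power of $p$ (e.g.\ a star of $v$ keys through one character gives $\le p^{v}$, and four keys forming a $2\times2$ grid give $\le p^{2}$), exactly compensating for the up-to-$\approx m^{e}$ configurations of that shape.

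Making this trade-off quantitative — balancing the precise count of interlocked configurations of each shape against their reduced magnitude — is the main obstacle, and it is exactly what the structural lemmas of \cref{sec:bound_dep} (bounding how few keys survive peeling and how tightly the survivors are constrained) are built to provide. The constant in the final bound depends on $k$ and $c$, as stated.
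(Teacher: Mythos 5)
Your overall skeleton is the same as the paper's: expand $\Ep{(Y-\mu)^{2k}}$ over $2k$-tuples of centered variables $Z_i=Y_i-p$, observe that a term vanishes whenever some key hashes independently of the rest, and then trade off the number of surviving configurations against magnitude bounds that are powers of $p$. However, there is a genuine gap: the entire quantitative core of the argument -- your ``per-component bound'' asserting that each multiplicity pattern on a linked component contributes $O(\mu^{e})$ with $1\le e\le\lfloor t/2\rfloor$ -- is stated but not proved, and you explicitly defer it (``making this trade-off quantitative \ldots is the main obstacle''). That trade-off \emph{is} the theorem; without it the proof is a plan, not a proof.

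Moreover, the framework you set up is not quite the right one to close the gap. ``Linked'' (sharing a position character) is a coarser relation than the dependence structure that actually governs simple tabulation: a set of keys is dependent precisely when some subset has empty symmetric difference of position characters, and your vanishing criterion (a multiplicity-one key with a private character) is only a sufficient condition for a term to die, so ``heavily interlocked'' survivors are not what you need to count -- zero-sum configurations are. The paper's route makes both halves precise with tools you never actually invoke: (i) \cref{zeroSumProvable} bounds the number of $2t$-tuples with empty symmetric difference by $((2t-1)!!)^c\,m^{t}$, which, applied constraint by constraint, shows that only $O(m^k)$ of the $2k$-tuples have a nonzero term (\cref{lem:moment-rtuples}); and (ii) for any surviving tuple, extracting a maximal independent subset $I$ of size $s=f(r)$ gives $\abs{V(r)}\le\prod_{i\in I}\Ep{\abs{Z_{r_i}}}=O(p^{s})$ while the number of tuples with $f(r)=s$ is $O(m^{s})$ (each remaining key is an XOR of a subset of $I$, so $O(1)$ choices). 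Summing $O(m^{\min\{k,s\}})\cdot O(p^{s})$ over $s$ gives $O(\sum_{j=1}^{k}\mu^{j})$. Your ``condition on shared table cells to upgrade $p$ to $p^{v}$'' heuristic does not substitute for step (ii): the clean mechanism is factorization over an \emph{independent} subset, not over shared characters, and your star/grid examples are not backed by a general lemma. To repair the proposal you would need to replace the linked-component analysis with the XOR-dependence counting of \cref{zeroSumProvable} and add the maximal-independent-subset magnitude argument.
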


\subsection{Notation}\label{sec:notation}
Let $S\subseteq [u]$ be a set of keys. Denote by $\pi(S,i)$ the projection of $S$
on the $i$th character, i.e.~$\pi(S,i) = \{x_i | x\in S\}$. We also use this
notation for keys, so $\pi((x_0,\ldots,x_{c-1}),i) = x_i$. A \emph{position
character} is an element of $[c]\times \Sigma$. Under this definition a key
$x\in[u]$ can be viewed as a set of $c$ position characters
$\{(0,x_0),\ldots,(c-1,x_{c-1})\}$. Furthermore, for simple tabulation, we assume that $h$ is defined on
position characters as $h((i,\alpha)) = T_i[\alpha]$. This definition extends
to sets of position characters in a natural way by taking the XOR over the hash
of each position character. We denote the symmetric difference of the position
characters of a set of keys $x_1,\ldots, x_k$ by
\[
    \bigoplus_{i=1}^k x_k\ .
\]
We say that a set of keys $x_1,\ldots, x_k$ are \emph{independent} if their
corresponding hash values are independent. If the keys are not independent we
say that they are \emph{dependent}.

The \emph{hash graph} of hash functions $h_1 : [u]\to\calR_1,\ldots, h_k :
[u]\to\calR_k$ and a set $S\subseteq [u]$ is the graph in which each element of
$\calR_1\cup\ldots\cup\calR_k$ is a node, and the nodes are connected by the
(hyper-)edges $(h_1(x),\ldots,h_k(x)), x\in S$. In the graph there is a
one-to-one correspondence between keys and edges, so we will not distinguish
between those.

\subsection{Contents}
The paper is structured as follows. In \Cref{sec:minwise} we show how
\Cref{thm:uniform2} can be used to prove \Cref{thm:minwise} noting that the
same argument can be used to prove \Cref{thm:distinct}.
\Cref{sec:bound_dep,sec:uniform,app:uniform_multip} detail the proof of
\Cref{thm:uniform2}, which is the main technical part of the paper. Finally In
\Cref{sec:kthmoment} we prove \Cref{thm:kthmoment}.


\section{MinHash with mixed tabulation}\label{sec:minwise}
In this section we prove \Cref{thm:minwise}. \Cref{thm:distinct} can be
proved using the same method. We will use the following lemma, which is
proved at the end of this section.

\begin{lemma}\label{thm:deviation}
    Let $h$ be a mixed tabulation hash function, $X \subset [u]$, and $Y$
    defined as in \cref{thm:uniform2} such that
    $\Ep{\abs{Y}} \in \left [ \frac{\abs{\Sigma}}{8}, \frac{\abs{\Sigma}}{4}
        \right )$.
    Then with probability $1 - \tilde{O}\left ( \abs{\Sigma}^{1-\floor{d/2}} \right )$
    \[
        \abs{Y} \in \Ep{\abs{Y}} \cdot \left ( 1 \pm O \left (
        \sqrt{\frac{\log \abs{\Sigma} \cdot \left ( \log \log \abs{\Sigma} \right )^2}{\abs{\Sigma}}} \right ) \right )
    \]
\end{lemma}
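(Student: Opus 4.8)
The goal is a concentration bound on $|Y|$, where $Y$ is the set of keys in $X$ whose hash values agree with a prescribed pattern $v_1,\dots,v_b$ on $b$ fixed bit positions, and $\Ep{|Y|}=|X|2^{-b}$ lies in $[|\Sigma|/8,|\Sigma|/4)$. The natural route is to condition on the derived keys $h_1^\star(x)=x\cdot h_1(x)$ first, and then let the randomness of the outer simple tabulation function $h_2$ do the work; by \Cref{thm:uniform2} (applied with $X$ itself, whose expected surviving set has size $<|\Sigma|/2$), $h_2$ is fully random on the derived keys of $X$ with probability $1-O(|\Sigma|^{1-\floor{d/2}})$. On that event the indicator variables $[x\in Y]$ are genuinely independent $\{0,1\}$ variables with mean $2^{-b}$, so $|Y|$ is a sum of independent Bernoulli variables with mean $\mu=\Ep{|Y|}=\Theta(|\Sigma|)$.

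**Key steps.** First I would invoke \Cref{thm:uniform2} to pass to the ``good event'' $\mathcal G$ on which $h_2$ restricted to $h_1^\star(X)$ is fully random; this costs an additive $O(|\Sigma|^{1-\floor{d/2}})$ in the final probability, which is absorbed by the $\tilde O(|\Sigma|^{1-\floor{d/2}})$ in the statement. Second, conditioned on $\mathcal G$, $|Y|=\sum_{x\in X}[h(x)_{p_i}=v_i\ \forall i]$ is a sum of $|X|$ independent indicators of mean $2^{-b}$, so a Chernoff bound gives
\[
    \Prp{\,\big||Y|-\mu\big|\ge \delta\mu \;\big|\; \mathcal G\,}\le 2\exp(-\Omega(\delta^2\mu)).
\]
Third, I choose $\delta$ so that the right-hand side is at most (say) $|\Sigma|^{-\floor{d/2}}$, i.e. $\delta^2\mu=\Theta(\log|\Sigma|\cdot \floor{d/2})$; since $\mu=\Theta(|\Sigma|)$ this gives $\delta=O(\sqrt{\log|\Sigma|/|\Sigma|})$. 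This is already a bound of the claimed shape, and in fact slightly stronger than the $\sqrt{\log|\Sigma|(\log\log|\Sigma|)^2/|\Sigma|}$ stated — the extra $(\log\log|\Sigma|)^2$ factor gives slack, presumably because in the intended use one wants the failure probability to beat $\tilde O(|\Sigma|^{1-\floor{d/2}})$ with room to spare, or because $d$ is allowed to grow slowly. Finally, removing the conditioning: $\Prp{E}\le \Prp{E\mid\mathcal G}+\Prp{\bar{\mathcal G}}$, and both terms are $\tilde O(|\Sigma|^{1-\floor{d/2}})$, completing the proof.

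**Main obstacle.** The only delicate point is making sure \Cref{thm:uniform2} actually applies: it requires the expected number of keys surviving the bit-constraints to be at most $|\Sigma|/(1+\Omega(1))$. Here $\Ep{|Y|}<|\Sigma|/4$, comfortably below the threshold, so this is fine — but one must be a little careful that \Cref{thm:uniform2}'s conclusion ("the remaining bits of the hash values in $Y$ are completely independent") is exactly what licenses treating $[x\in Y]$ as independent indicators, rather than accidentally needing full randomness on all of $X$. In fact independence on $Y$ is all that the Chernoff step needs, since membership in $Y$ depends only on the constrained bits, and conditioned on the pattern those are determined; the subtlety is purely bookkeeping about which bits are "free." A second, very minor point is that the indicators are independent but membership in $Y$ is not literally a function of independent per-key randomness unless $h_2$ is fully random on $h_1^\star(X)$ — which is precisely why we condition on $\mathcal G$ first rather than trying to argue directly. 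Apart from this, the proof is a routine Chernoff bound, which is presumably why the paper states it as a lemma with the proof deferred.
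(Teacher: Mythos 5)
There is a genuine gap at the heart of your argument: the good event $\mathcal G$ on which ``$h_2$ is fully random on the derived keys of $X$'' does not exist in the regime this lemma is needed for. Since $\Ep{\abs{Y}}=|X|2^{-b}\in[|\Sigma|/8,|\Sigma|/4)$, we have $|X|=2^b\cdot\Theta(|\Sigma|)$, and $b$ may be as large as $\Theta(\log u)$ (in the MinHash application $b=\ell_B$ with $2^{-\ell_B}|B|=\Theta(|\Sigma|)$ and $|B|$ up to $u$). So $|X|$ typically far exceeds $|\Sigma|$, and no hash function built from tables of total size $O(|\Sigma|)$ can hash that many keys fully independently; \cref{thm:uniform_simple} explicitly requires $|X|\le|\Sigma|/(1+\Omega(1))$. \cref{thm:uniform2} does not rescue this: its conclusion concerns only the \emph{remaining} (unconstrained) bits of the keys that land in $Y$, and says nothing about the joint distribution over all of $X$ of the $b$ constrained bits --- which are exactly the bits whose randomness determines $|Y|$. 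Hence the step ``conditioned on $\mathcal G$, $|Y|$ is a sum of $|X|$ independent indicators'' is unjustified, and your remark that independence on $Y$ suffices for the Chernoff step is backwards: the Chernoff bound is on $|Y|$ itself, a function of the constrained bits of every key in $X$.

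The paper's proof works around this by peeling off the $b$ constrained bit positions one at a time. It partitions $X$ into $2^b$ blocks of size $\Theta(|\Sigma|)$; at level $j$ each surviving set $X_i^{j-1}$ has expected size in $[|\Sigma|/8,|\Sigma|/4)$, so \cref{thm:uniform2} applies to \emph{it}, and a Chernoff bound controls how many of its keys satisfy the next bit constraint. The deviations accumulate across levels via $\sqrt{t_j}\le\sqrt{t_{j-1}}+\alpha/2$ with $\alpha=\Theta(\sqrt{d\log|\Sigma|})$, and this is precisely where the $(\log\log|\Sigma|)^2$ factor comes from: it is the square of the $O(\log\log|\Sigma|)$ levels of accumulated error, not slack as you suggest. (For $n>|\Sigma|\log^{2c}|\Sigma|$ the recursion is started at a level $b'$ with $2^{-b'}n=\Theta(|\Sigma|\log^{2c}n)$, which additionally requires the simple-tabulation concentration bound of \cref{thm:simple-concentration}.) Your one-shot Chernoff bound would indeed give the cleaner $\sqrt{\log|\Sigma|/|\Sigma|}$ if full independence over $X$ were available, but it is not, and obtaining a substitute for it is the entire content of the lemma.
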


We are given sets $R$ and $B$ of $n_R$ and $n_B$ red and blue balls
respectively. Recall that the hash value $h(x)$ of a key $x$ is split into two
parts: one telling which of the $k$ bins $x$ lands in (i.e. the first
$\ceil{\lg k}$ bits) and the \emph{local hash value} in $[0,1)$ (the rest of
the bits).

\sd{Recall that} $|R|+|B| > |\Sigma|/2$ and assume that $|B|\ge
|R|$, as the other case is symmetric. For $C=R,B$, we define the set $S_C$ to
be the keys in $C$, for which the first $\ell_C$ bits of the local hash value
are $0$. We pick $\ell_C$ such that
\[
    \Ep{|S_C|} = 2^{-\ell_C}|C|\in\left(\frac{|\Sigma|}{8},
    \frac{|\Sigma|}{4}\right]\ .
\]
This is illustrated in \Cref{fig:minwise}.
We also define $X$ to be the keys of $R$ and $B$ whose first $\ell_B$ bits of
the local hash value are $0$.
\begin{figure}[htbp]
    \centering
    \includegraphics[width=.45\textwidth]{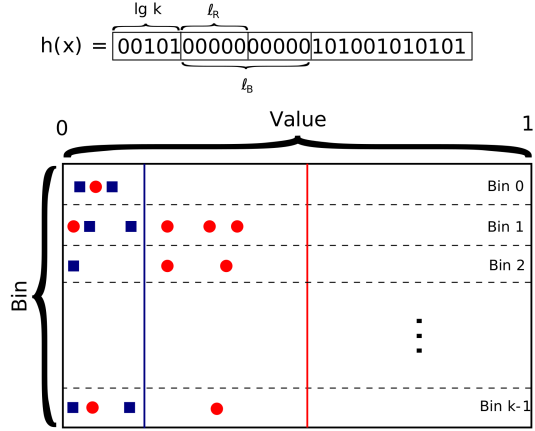}
    \caption{Illustration of the analysis for minwise hashing with mixed tabulation.
    	Since there are more red than blue balls, $\ell_R$ is smaller than
        $\ell_B$, illustrated by the blue vertical line being before the red one.
    	}
    \label{fig:minwise}
\end{figure}

We only bound the probability $P = \Prp{X^{\mathcal{M}} \ge (1+\delta)f}$ and
note that we can bound $\Prp{X^{\mathcal{M}}\le (1-\delta)f}$ similarly.
Consider also the alternative experiment $\ol X^{\cR}$ as defined in the
theorem. We let $\eps = c_0 \cdot \sqrt{\frac{\log \abs{\Sigma} \log \log \abs{\Sigma}}{\abs{\Sigma}}}$
for some large enough constant $c_0$. The set of $\floor{(1+\eps)\abs{R}}$ and
$\ceil{(1-\eps)\abs{B}}$ balls in this experiment is denoted $R'$ and $B'$
respectively. We define $S'_R$ and $S'_B$ to be the keys from $R'$ and $B'$
where the first $\ell_R$ and $\ell_B$ bits of the hash values are $0$
respectively.

In order to do bound $P$ we consider the following five \emph{bad events}:
\begin{itemize}
    \item [$E_1$:] $\abs{S_R} > \abs{S'_R}$.
	\item [$E_2$:] The remaining $\lg |\calR| - \ell_R$ output bits are fully
        independent when restricting $h$ to the keys of $S_R$.
    \item [$E_3$:] $\abs{S_B} < \abs{S'_B}$.
	\item [$E_4$:] The remaining $\lg |\calR| - \ell_B$ output bits are fully
        independent when restricting $h$ to the keys of $X$.
	\item [$E_5$:] There exists a bin which contains no key from $X$.
\end{itemize}
We will show that $\Prp{E_i} = \terrterm$ for $i=1,\ldots,5$. For $i=2,4$ this
is an immediate consequence of \Cref{thm:uniform2}. For $i=1,3$ we use
\Cref{thm:deviation} and let $c_0$ be sufficiently large. For $i=5$ we see
that if $E_3$ and $E_4$ do not occur then the probability that there exist
a bin with no balls from $X$ is at most:
\[
    k \cdot \left ( 1 - \frac{1}{k} \right )^{\abs{\Sigma}/8 \cdot (1-\eps)}
    \le
    k \cdot
    \exp \left ( -\frac{\abs{\Sigma}}{8k}(1-\eps) \right )
    \le
    k \cdot
    \exp \left ( -\frac{d\log \abs{\Sigma}}{2}(1-\eps) \right )
    \le
    O\!\left (
    \abs{\Sigma}^{1-d/2}
    \right )
\]
Hence by a union bound $\Prp{E_1 \cup \ldots \cup E_5} = \terrterm$ and:
\begin{align}
    \label{eq:plusErrTerm}
    P \le \Prp{X^{\mathcal{M}}\ge (1+\delta)f\cap \neg E_1\cap\ldots\cap\neg E_5} + \terrterm
\end{align}
Fix the $\ell_R$ bits of the hash values that decide $S_R,S'_R$ such that
$\abs{S_R} = a, \abs{S'_R} = a'$ and consider the probabilities
\begin{align*}
    P_1 & = \Prp{X^{\mathcal{M}}\ge (1+\delta)f\cap \neg E_1\cap\ldots\cap\neg E_5 \cond
    \left ( \abs{S_R} = a, \abs{S'_R} = a' \right )} \\
    P_2 & = \Prp{\ol X^{\cR} \ge (1+\delta)f \cond \left ( \abs{S_R} = a, \abs{S'_R} = a' \right )}
\end{align*}
We will now prove that $P_1 \le P_2$. This is trivial when $a > a'$ since $P_1 = 0$ in
this case so assume that $a \le a'$. We define $X'$ analogously to $X$ and
let $Y = X \cap S_R, Y' = X' \cap S'_R$.
Now fix the joint distribution of $(Y,Y')$ such that either
$E_2$ or $\abs{Y} \le \abs{Y'}$ with
probability $1$. We can do this without changing the marginal distributions of $Y,Y'$
since if $E_2$ doesn't occur the probability that $\abs{Y} \le i$ is at most the probability
that $\abs{Y'} \le i$ for any $i \ge 0$.
Now we fix the $\ell_B - \ell_R$ bits of the hash values that decide $X$ and $X'$. Unless
$E_2$ or $E_3$ happens we know that $\abs{Y} \le \abs{Y'}$ and $\abs{S_B} \ge \abs{S'_B}$.
Now assume that none of the bad events happen. Then we must have that
the probability that $X^{\mathcal{M}}\ge (1+\delta)f$ is no
larger than the probability that $\ol X^{\cR} \ge (1+\delta)f$. Since this is the case for
any choice of the $\ell_B - \ell_R$ bits of the hash values that decide $X$ and $X'$ we
conclude that $P_1 \le P_2$. Since this holds for any $a$ and $a'$:
\[
    \Prp{X^{\mathcal{M}}\ge (1+\delta)f\cap \neg E_1\cap\ldots\cap\neg E_5}
    \le
    \Prp{\ol X^{\cR} \ge (1+\delta)f}
\]
Inserting this into \eqref{eq:plusErrTerm} finishes the proof.

\subsection{Proof of \Cref{thm:deviation}}\label{app:deviation}

We only prove the upper bound as the lower bound is symmetric.

Let $p_1,\ldots,p_b$ and $v_1,\ldots,v_b$ be the bit positions and bit values respectively
such that $Y$ is the set of keys $x \in X$ where $h(x)_{p_i} = v_i$ for all $i$.

Let $n = \abs{X}$, then $n2^{-b} \in I$, where
$I = \left [ \frac{\abs{\Sigma}}{8}, \frac{\abs{\Sigma}}{4} \right )$.
Partition $X$ into $2^b$ sets $X_0^0,\ldots,X_{2^b-1}^0$ such that $\abs{X_i^0} \in I$ for all $i \in [2^b]$.

For each $j = 1,\ldots,b$ and $i \in [2^{b-j}]$ let $X_i^j$ be the set of keys
$x \in \bigcup_{k=2^j \cdot i}^{2^j \cdot (i+1)-1} X_k^0$ where $h(x)_{p_k} = v_k$ for
$k = 1,\ldots,j$.
Equivalently, $X_i^j$ is the set of keys $x \in X_{2i}^{j-1} \cup X_{2i+1}^{j-1}$ where
$h(x)_{p_j} = v_j$. We note that $\Ep{\abs{X_i^j}} \in I$ and $X_0^b = Y$.

Let $A_j$ be the event that there exists $i \in [2^{b-j}]$ such that when the bit positions
$p_1,\ldots,p_{j-1}$ are fixed and the remaining bit positions of the keys in $X_i^j$ do not
hash independently.
By \cref{thm:uniform2} $\Prp{A_j} = O \left ( 2^{b-j}\abs{\Sigma}^{1-\floor{d/2}} \right )$.
Let $s_j = \sum_{i = 0}^{2^{b-j}-1} \abs{X_i^j}$.

Fix $j \in \set{1,2,\ldots,b}$ and the bit positions $p_1,\ldots,p_{j-1}$ of $h$
and assume that $A_{j-1}$ does not occur. Fix $i$ and say that $X_{i}^{j-1}$ contains
$r$ keys and write $X_{i}^{j-1} = \set{a_0,\ldots,a_{r-1}}$. Let $V_k$ be the random
variable defined by $V_k = 1$ if $h(a_k)_{p_j} = b_j$ and $V_k = 0$ otherwise. Let
$V = \sum_{k=0}^{r-1} V_k$. Then $V$ has mean $\frac{r}{2}$ and is the sum of
independent $0$-$1$ variables so by Chernoff's inequality:
\[
    \Prp{V \ge \frac{r}{2} \cdot (1 + \delta)} \le e^{-\delta^2 \cdot r/6}
\]
for every $\delta \in [0,1]$. Letting $\delta = \sqrt{\frac{6 d \log \abs{\Sigma}}{r}}$ we see
that with $\alpha = \sqrt{\frac{3}{2} d \log \abs{\Sigma}}$:
\[
    \Prp{V \ge \frac{r}{2} + \sqrt{r} \cdot \alpha} \le \abs{\Sigma}^{-d}
\]
We note that $V = \abs{X_i^{j-1} \cap X_{\floor{i/2}}^j}$. Hence we can rephrase it as:
\[
    \Prp{\abs{X_i^{j-1} \cap X_{\floor{i/2}}^j} \ge
         \frac{\abs{X_i^{j-1}}}{2} + \sqrt{\abs{X_i^{j-1}}} \cdot \alpha} \le \abs{\Sigma}^{-d}
\]
Now unfix $i$. By a union bound over all $i$ we see that with probability
$\ge 1 - 2^{b-j+1}\abs{\Sigma}^{-d}$ if $A_{j-1}$ does not occur:
\begin{align}
    \label{sjbound}
    s_j \le \sum_{i=0}^{2^{b-j+1}-1} \frac{\abs{X_i^{j-1}}}{2} + \sqrt{\abs{X_i^{j-1}}} \cdot \alpha
    \le
    \frac{s_{j-1}}{2} + \sqrt{2^{b-j+1}s_{j-1}} \cdot \alpha
\end{align}
Since $A_{j-1}$ occurs with probability $O \left ( 2^{b-j}\abs{\Sigma}^{1-\floor{d/2}} \right )$
we see that \eqref{sjbound} holds with probability $1 - O \left ( 2^{b-j}\abs{\Sigma}^{1-\floor{d/2}} \right )$.
Let $t_j = s_j2^{-b+j-1}$. Then \eqref{sjbound} can be rephrased as
\begin{align*}
    t_j \le t_{j-1} + \sqrt{t_{j-1}} \cdot \alpha
    \le \left ( \sqrt{t_{j-1}} + \frac{\alpha}{2} \right )^2
\end{align*}
Note that in particular:
\begin{align}
    \label{tjbound}
    \sqrt{t_j} \le \sqrt{t_{j-1}} + \frac{\alpha}{2}
\end{align}
Now assume that \eqref{sjbound} holds for every $j = b'+1,\ldots,b$ for some parameter $b'$
to be determined. This happens with probability
$1 - O \left ( 2^{b-b'}\abs{\Sigma}^{1-\floor{d/2}} \right )$. By \eqref{tjbound} we see
that $\sqrt{t_b} \le \sqrt{t_{b'}} + \frac{b-b'}{2}\alpha$. Hence:
\begin{align}
    \label{eq:finalsbbound}
    s_b \le \left ( \sqrt{s_{b'}2^{b'-b}} + \frac{b-b'}{\sqrt{2}}\alpha \right )^2
    = s_{b'}2^{b'-b} + \sqrt{2^{b'-b+1} s_{b'}}(b-b')\alpha +
    \left ( \frac{b-b'}{\sqrt{2}}\alpha \right )^2
\end{align}
We now consider two cases, when $n \le \Sigma \log^{2c} \Sigma$ and when $n > \Sigma \log^{2c} \Sigma$. First
assume that $n \le \Sigma \log^{2c} \Sigma$. Then we let $b' = 0$ and see that with probability
$1 - \terrterm$:
\[
    \abs{Y} = s_b \le
    \Ep{\abs{Y}} + \sqrt{2\Ep{\abs{Y}}}b \alpha + \left ( \frac{b}{\sqrt{2}}\alpha \right )^2
    =
    \Ep{\abs{Y}} +
    O\!\left ( \sqrt{\frac{\log \Sigma \left ( \log \log \Sigma \right )^2}{\Sigma}} \right )
\]
Where we used that $b = O\!\left(\log \log \Sigma \right)$. This proves the
claim when $n \le \Sigma \log^{2c} \Sigma$.

Now assume that $n > \Sigma \log^{2c} \Sigma$. In this case we will
use \Cref{thm:simple-concentration} below.
\begin{theorem}[P\v{a}tra\c{s}cu and Thorup \cite{patrascu12charhash}]
\label{thm:simple-concentration}
    If we hash $n$ keys into $m\leq n$ bins with simple tabulation, then,
    with high probability (whp.)\footnote{With probability $1 - n^{-\gamma}$
    for any $\gamma = O(1)$.}, every bin gets $n/m+O(\sqrt{n/m}\log^c n)$ keys.
\end{theorem}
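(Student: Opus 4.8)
Since this is a theorem of P\v{a}tra\c{s}cu and Thorup \cite{patrascu12charhash}, one may of course simply cite it; but if one wants to reprove it, the plan is as follows. Fix a bin $b$ and let $X$ be its load, so $\Ep{X}=\mu:=n/m\ge 1$. It suffices to show $\Prp{X>\mu+O(\sqrt{\mu}\,\log^c n)}\le n^{-\gamma}$ for an arbitrary constant $\gamma$ — together with the symmetric lower tail — and then union-bound over the at most $n$ bins. First note that the constant-moment estimate of \Cref{thm:kthmoment} does not suffice here: taking $Y_i=[h(x_i)=b]$ it gives $\Ep{(X-\mu)^{2k}}=O(\mu^k)$, and Markov then only rules out deviations of order $\sqrt{\mu}\cdot n^{\Theta(1/k)}$, which is polynomial rather than polylogarithmic in $n$. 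So a genuinely Chernoff-strength argument is needed.

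The approach is induction on the number of characters $c$, peeling one character per level. For $c=1$ the $n$ distinct keys are distinct single characters, hence hash fully independently and uniformly, and the claim is the ordinary Chernoff bound. For the inductive step I would fix a coordinate $i$, condition on all tables except $T_i$, and write $h(x)=h'(x)\xor T_i[x_i]$ where $h'$ is a simple tabulation hash function on the remaining $c-1$ characters. Grouping keys by the value of coordinate $i$, with $X_\alpha=\{x\in X: x_i=\alpha\}$ and $n_\alpha=|X_\alpha|$, the load of bin $b$ decomposes as $X=\sum_\alpha Z_\alpha$, where $Z_\alpha$ is the load of $h'$ restricted to $X_\alpha$ in bin $b\xor T_i[\alpha]$. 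Since $b\xor T_i[\alpha]$ is uniform and the entries $T_i[\alpha]$ are independent over $\alpha$, the $Z_\alpha$ are independent, have conditional mean $n_\alpha/m$ (so $\sum_\alpha n_\alpha/m=\mu$), and satisfy $0\le Z_\alpha\le L_\alpha$, where $L_\alpha$ is the maximum bin load of $h'$ on $X_\alpha$. Invoking the induction hypothesis on each of the at most $n$ groups and union-bounding, w.h.p.\ $L_\alpha=O(1+n_\alpha/m+\sqrt{n_\alpha/m}\,\log^{c-1}n)$ for all $\alpha$ simultaneously; conditioning on this, $X$ is a sum of independent bounded variables of total mean $\mu$, so Bernstein's inequality — using $\sum_\alpha\mathrm{Var}(Z_\alpha)\le(\max_\alpha L_\alpha)\mu$ — gives $X\le\mu+O(\sqrt{(\max_\alpha L_\alpha)\mu\log n}+(\max_\alpha L_\alpha)\log n)$ except with probability $n^{-\gamma}$. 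If every $L_\alpha$ were $O(\log^{c-1}n)$ this would immediately yield $X\le\mu+O(\sqrt{\mu}\,\log^{c}n)$, gaining exactly one logarithmic factor and closing the induction.

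The hard part, and the technical heart of the P\v{a}tra\c{s}cu--Thorup proof, is that the groups $X_\alpha$ can be badly unbalanced: an adversarial key set may put almost all keys in a single group, making $\max_\alpha L_\alpha$ — which controls both the Bernstein variance proxy and the additive term — as large as $\Theta(n)$, so that the decomposition buys nothing (the same obstruction appears if one instead runs a Doob martingale over the tables, since changing one entry $T_i[\alpha]$ moves $X$ by up to $n_\alpha$). Two ingredients are needed to overcome this. First, choose the peeled coordinate by pigeonhole: the keys being distinct, $\prod_{i<c}|\pi(X,i)|\ge n$, so some coordinate $i$ has $|\pi(X,i)|\ge n^{1/c}$, forcing $\max_\alpha n_\alpha\le n^{1-1/c}$. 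Second, isolate the few heavy position characters: for each coordinate only a bounded number of values can carry a large $n_\alpha^{(i)}$, and a heavy group — on which the peeled coordinate is constant — is itself a genuine $(c-1)$-character instance, so one recurses into it separately while the residual instance has only small groups, where the clean decomposition applies. Carrying this bookkeeping through the $c$ levels of recursion is precisely what produces the final $\log^c n$ factor; apart from it, the proof is a routine application of Bernstein/Chernoff for sums of bounded independent variables, followed by the union bound over the $m\le n$ bins.
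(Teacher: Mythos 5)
The paper offers no proof of this statement: it is imported verbatim from P\v{a}tra\c{s}cu and Thorup \cite{patrascu12charhash} and used as a black box only in the proof of \Cref{thm:deviation}, so your first answer --- just cite it --- is precisely what the paper does, and nothing more is required of you.

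Your optional reconstruction, though it has the right skeleton (peel a character, exploit the fresh independent table entries, Bernstein, induct on $c$), breaks at the step you yourself call the heart of the proof. The implication ``some coordinate $i$ has $|\pi(X,i)|\ge n^{1/c}$, forcing $\max_\alpha n_\alpha\le n^{1-1/c}$'' is false: a coordinate can have $n^{1/c}$ distinct values while a single value still occurs in $n-n^{1/c}+1$ of the keys, so grouping by one fixed coordinate gives no control whatsoever on group sizes. What \cite{patrascu12charhash} actually does is greedily order \emph{all} position characters across all $c$ coordinates and assign each key to the group of its \emph{last} position character in that order; at each greedy step some position has at least $(n')^{1/c}$ distinct characters among the $n'$ keys not yet assigned, so its least popular character can be placed last and absorbs at most $(n')^{1-1/c}\le n^{1-1/c}$ keys. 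That per-group bound then holds for \emph{every} group, which a fixed-coordinate decomposition cannot deliver. Your fallback --- ``only a bounded number of values carry a large $n_\alpha$; recurse into each heavy group separately'' --- is also unsubstantiated: for your Bernstein step to give deviation $O(\sqrt{n/m}\log^c n)$ you need every light group to satisfy $n_\alpha = O(m\log^{O(1)}n)$, and the number of groups violating that can be as large as $n/(m\log^{O(1)}n)$ rather than $O(1)$; summing the individual inductive deviations $O(\sqrt{n_\alpha/m}\log^{c-1}n)$ over that many heavy groups overshoots the target by a polynomial factor whenever $m \ll n^{1-1/c}$. So the sketch does not close in the small-$m$ regime, which is exactly where the cited proof has to work hardest.
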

Let $b' \ge 0$ be such that:
\[
    2^{-b'} = \Theta\!\left(\frac{\Sigma \cdot \log^{2c} n}{n} \right)
\]
With $\gamma = \floor{d/2}-1$ in \Cref{thm:simple-concentration} we see that with
probability $1 - \errterm$:
\begin{align}
    \label{eq:boundfromsimple}
    s_{b'} \le 2^{-b'}n + O\!\left(\sqrt{2^{-b'}n}\log^c n\right) =
    2^{-b'}n \cdot \left ( 1 + O\left(\sqrt{\frac{1}{\Sigma}}\right) \right )
\end{align}
By a union bound both \eqref{eq:finalsbbound} and \eqref{eq:boundfromsimple} hold with probability
$1 - \terrterm$ and combining these will give us the desired upper bound. This
concludes the proof when $n > \Sigma \log^{2c} \Sigma$.

\section{Bounding dependencies}\label{sec:bound_dep}
In order to proof our main technical result of \Cref{thm:uniform2} we need the
following structural lemmas regarding the dependencies of simple tabulation.

Simple tabulation is not 4-independent which means that there exists keys
$x_1,\ldots,x_4$, such that
$h(x_1)$ is dependent of $h(x_2), h(x_3), h(x_4)$. It was shown in
\cite{patrascu12charhash}, that for every $X\subseteq U$ with $|X| =
n$ there are at most $O(n^2)$ such dependent 4-tuples $(x_1,x_2,x_3,x_4)\in
X^4$.

In this section we show that a similar result holds in the case of dependent
$k$-tuples, which is one of the key ingredients in the proofs of the main
theorems of this paper.

We know from \cite{thorup12kwise} that if the keys $x_1,\ldots,x_k$ are dependent,
then there exists a non-empty subset $I\subset \{1,\ldots,k\}$ such that
\[
    \bigoplus_{i\in I} x_i = \emptyset\ .
\]
Following this observation we wish to bound the number of tuples which have
symmetric difference $\emptyset$.

\begin{lemma}\label{zeroSum}
    Let $X\subseteq U$ with $|X| = n$ be a subset. The number of $2t$-tuples
    $(x_1,\ldots, x_{2t})\in X^{2t}$ such that
    \[
        x_1\oplus \cdots \oplus x_{2t} = \emptyset
    \]
    is at most $((2t-1)!!)^c n^t$, where $(2t-1)!! = (2t-1)(2t-3)\cdots 3\cdot
    1$.
\end{lemma}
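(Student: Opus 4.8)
The plan is to bound the count character-coordinate by character-coordinate, exploiting the product structure of position characters. Recall that a key $x$ is the set of position characters $\{(0,x_0),\ldots,(c-1,x_{c-1})\}$, and $x_1\oplus\cdots\oplus x_{2t}=\emptyset$ means that in every coordinate $i\in[c]$, the multiset $\{(x_1)_i,\ldots,(x_{2t})_i\}$ has every value appearing an even number of times (so that the XOR over that coordinate cancels). Thus the condition decouples across coordinates: a tuple $(x_1,\ldots,x_{2t})$ is counted iff for each coordinate $i$, the sequence $((x_1)_i,\ldots,(x_{2t})_i)\in\Sigma^{2t}$ is \emph{even}, meaning every symbol occurs an even number of times.

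First I would reduce to a single coordinate. Let $N_1$ denote the number of even sequences in $\Sigma^{2t}$ that, \emph{as ordered $2t$-tuples of elements of $\Sigma$}, can occur as the $i$th coordinates of keys drawn from $X$ --- but actually the cleanest route is: the keys $x_1,\ldots,x_{2t}$ must all lie in $X$, so rather than multiplying coordinate counts freely I would argue via the ``pattern'' of equalities. Here is the step I would actually carry out. Consider the partition $P$ of $\{1,\ldots,2t\}$ induced by a putative zero-sum tuple in \emph{one fixed coordinate}, say coordinate $0$: two indices $a,b$ are in the same block iff $(x_a)_0=(x_b)_0$. For the XOR in coordinate $0$ to vanish, every block of $P$ must have even size. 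The number of partitions of a $2t$-element set into blocks all of even size is exactly $(2t-1)!!$ in the extreme case where all blocks have size $2$, and in general the number of ways to choose which indices are ``tied together'' is dominated appropriately; more precisely, I would prove the clean bound that the number of partitions of $[2t]$ into even-sized blocks, weighted suitably, is at most $(2t-1)!!$. Then, given the equality pattern in coordinate $0$, the tuple $(x_1,\ldots,x_{2t})$ is determined by choosing, for each block, one representative value --- but the values in \emph{other} coordinates are still free, so I instead track all $c$ coordinates simultaneously.

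The right bookkeeping is this: for each coordinate $i$ choose an even-sized partition $P_i$ of $[2t]$; the tuple must be constant on blocks of $P_i$ in coordinate $i$. The common refinement $P=\bigwedge_i P_i$ partitions $[2t]$, and on each block of $P$ all $c$ coordinates are constant, i.e. the whole key is constant, so the number of distinct keys appearing is $|P|$, the number of blocks of $P$. Choosing the tuple amounts to: choose the partitions $(P_0,\ldots,P_{c-1})$, then assign to each block of $P$ a key of $X$, giving at most $n^{|P|}$ choices. Since each $P_i$ has all blocks of even size, its refinement $P$ has at most $t$ blocks (each block of $P_0$ already has $\ge 2$ elements, and $P\le P_0$... wait, $P$ refines $P_0$, so $P$ has \emph{at least} $|P_0|\le t$ blocks --- I need the opposite). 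So I must instead bound using just \emph{one} coordinate: fix $P_0$, note $|P_0|\le t$, and observe that once the keys are chosen consistently with $P_0$ (at most $n^{|P_0|}\le n^t$ choices after picking $P_0$ in at most $(2t-1)!!$ ways for coordinate $0$), the remaining coordinates $1,\ldots,c-1$ are \emph{not} free --- they come along with the chosen keys. That over-counts by ignoring the even-ness constraint in coordinates $1,\ldots,c-1$, which only helps. Hence the count is at most $(2t-1)!! \cdot n^t$, and iterating the even-partition choice independently in each of the $c$ coordinates as an \emph{upper bound} on the true (coupled) count --- i.e. bounding the true set of tuples by the product over coordinates of ``coordinate-$i$ is an even sequence of symbols from $\pi(X,i)$'' --- yields the factor $((2t-1)!!)^c$ while the ``choose representatives'' step contributes $n^t$ once (from the sparsest coordinate). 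I would present it as: the number of valid tuples is at most $\prod_{i=0}^{c-1}(\text{number of even-pattern partitions of }[2t]) \times (\text{ways to realize one such pattern by keys of }X) \le ((2t-1)!!)^c\, n^t$, where $n^t$ comes from the at-most-$t$ free key choices.

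The main obstacle is getting this product bound \emph{rigorous} rather than merely plausible: the coordinates are not independent (the same keys appear in all of them), so I cannot literally multiply per-coordinate counts. The honest argument picks one coordinate to ``pay'' the $n^t$ factor and uses the remaining coordinates only to further constrain (never to add freedom), so their contribution is bounded by the number of even-sized-block partitions alone, namely $(2t-1)!!$ each --- giving $((2t-1)!!)^{c-1}$ from coordinates $1,\dots,c-1$, times $(2t-1)!!\cdot n^t$ from coordinate $0$. I expect the clean way to phrase the $(2t-1)!!$ bound on even-block partitions is by induction on $t$: index $1$ lies in a block of some even size $2s$, chosen together with its other $2s-1$ members in $\binom{2t-1}{2s-1}$ ways, leaving an even-block partition of $2(t-s)$ elements; the recursion $a_t=\sum_{s\ge1}\binom{2t-1}{2s-1}a_{t-s}$ with $a_0=1$ is readily checked to satisfy $a_t\le(2t-1)!!$ (indeed the $s=1$ term alone gives $(2t-1)a_{t-1}$). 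That induction is the one genuinely computational piece, and it is short.
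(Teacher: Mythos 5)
There is a genuine gap at the core counting step. Your partition $P_0$ records equality of the \emph{coordinate-$0$ characters} only, so fixing $P_0$ does \emph{not} reduce the choice of keys to one representative per block: two indices in the same block of $P_0$ must share their $0$th character but may carry completely different keys. Hence "at most $n^{|P_0|}\le n^t$ choices consistently with $P_0$" is false. Concretely, for $c=2$, $t=2$, $X=A\times B$, the tuples $\bigl((a,b),(a,b'),(a',b),(a',b')\bigr)$ all have vanishing symmetric difference with four pairwise distinct keys; they realize the coordinate-$0$ pairing $\{1,2\},\{3,4\}$ without any key repetition, and there are $\Theta(n^2)$ of them — these are exactly the tuples the lemma must count, and your scheme either misses them (if a block means "same key") or, if a block only means "same $0$th character", admits up to $n^{2t}$ tuples (take all of $X$ sharing its $0$th character). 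The other coordinates cannot be waved away as "only further constraining": the whole difficulty is that the per-coordinate evenness constraints interact through the shared keys, and you need all $c$ of them jointly to get down to $n^t$. The paper resolves precisely this by proving a stronger, self-improving statement (\cref{zeroSumProvable}): bound the count by $((2t-1)!!)^c\prod_i\sqrt{|A_i|}$ for arbitrary sets $A_i$, pair indices by matching on the \emph{last} coordinate only ($(2t-1)!!$ perfect matchings), fix the $t$ shared last characters $a_1,\dots,a_t$, recurse on the first $c-1$ coordinates with the restricted sets $A_{i_k}[a_k]$, and recombine the sum over $a_k$ via Cauchy--Schwarz, $\sum_a\sqrt{|A[a]|\,|A'[a]|}\le\sqrt{|A|}\sqrt{|A'|}$. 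The square roots are exactly what make the induction close; a direct single-coordinate argument does not.

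A secondary error: your combinatorial claim that the number of partitions of $[2t]$ into even-sized blocks is at most $(2t-1)!!$ is false. With $a_t=\sum_{s\ge1}\binom{2t-1}{2s-1}a_{t-s}$ one gets $a_2=\binom{3}{1}+\binom{3}{3}=4>3=3!!$; indeed your own remark that the $s=1$ term alone already attains $(2t-1)a_{t-1}$ shows the full sum must exceed $(2t-1)!!$ once $t\ge2$. The paper avoids this by using perfect matchings (pairings) rather than all even-block partitions: since each character value occurs an even number of times, the indices can always be paired up with equal characters within each pair, and there are exactly $(2t-1)!!$ matchings of $[2t]$.
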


It turns out that it is more convenient to prove the following more
general lemma.
\begin{lemma}
\label{zeroSumProvable}
Let $A_1,\ldots,A_{2t} \subset U$ be sets of keys. The number of $2t$-tuples $(x_1, \ldots, x_{2t}) \in A_1 \times \cdots \times A_{2t}$ such that
\begin{equation}
\label{eq:zeroSumProvable}
x_1 \oplus \cdots \oplus x_{2t} = \emptyset
\end{equation}
is at most $((2t-1)!!)^c \prod_{i=1}^{2t} \sqrt{\abs{A_i}}$.
\end{lemma}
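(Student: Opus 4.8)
The plan is to prove the statement by induction on $t$, peeling off one character position at a time (so really a nested induction, with an outer induction on $t$ and, inside the inductive step, an induction on the number of character positions $c$). Fix a character position, say position $c-1$, and group the $2t$-tuples according to how the last characters $x_1^{(c-1)},\ldots,x_{2t}^{(c-1)}$ pair up. Since $x_1\oplus\cdots\oplus x_{2t}=\emptyset$ forces, in particular, $\pi(x_1,c-1)\oplus\cdots\oplus\pi(x_{2t},c-1)=\emptyset$ as a symmetric difference of position characters, the multiset of last characters must cancel; in the cheapest case this means the $2t$ indices split into $t$ pairs with equal last character within each pair. More generally one considers a perfect matching on $\{1,\ldots,2t\}$ recording which indices share a common last character; the number of such matchings is exactly $(2t-1)!!$, which is where that factor (raised to the power $c$ over the $c$ positions) comes from.

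First I would set up the counting: for a fixed matching $M=\{\{i_1,j_1\},\ldots,\{i_t,j_t\}\}$, once we demand $\pi(x_{i_s},c-1)=\pi(x_{j_s},c-1)=:\alpha_s$ for each pair $s$, the contribution of position $c-1$ to the overall XOR vanishes pairwise, so the condition \eqref{eq:zeroSumProvable} reduces to a condition on the first $c-1$ characters of the keys. For each choice of the common characters $\alpha_1,\ldots,\alpha_t$, define $A'_{i_s}$ and $A'_{j_s}$ to be the sets of length-$(c-1)$ prefixes of keys in $A_{i_s}$, resp.\ $A_{j_s}$, whose last character equals $\alpha_s$. Then the number of compatible tuples is at most the number of $2t$-tuples in $A'_1\times\cdots\times A'_{2t}$ whose XOR (over the first $c-1$ positions) is $\emptyset$, and we apply the inductive hypothesis on $c$. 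Summing over $\alpha_1,\ldots,\alpha_t$ and using Cauchy–Schwarz in the form $\sum_{\alpha}\sqrt{|A_{i}\restriction\alpha|\,|A_{j}\restriction\alpha|}\le\sqrt{\sum_\alpha|A_i\restriction\alpha|}\sqrt{\sum_\alpha|A_j\restriction\alpha|}=\sqrt{|A_i|\,|A_j|}$ collapses the sum back into $\prod_s\sqrt{|A_{i_s}|\,|A_{j_s}|}=\prod_{i=1}^{2t}\sqrt{|A_i|}$, and the factor $((2t-1)!!)^{c-1}$ from the inductive hypothesis gets multiplied by one more $(2t-1)!!$ from the choice of matching, giving $((2t-1)!!)^{c}$. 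The base case $c=1$ (keys are single characters) is immediate: $x_1\oplus\cdots\oplus x_{2t}=\emptyset$ just says the characters cancel, the matching is forced up to $(2t-1)!!$ choices, and each pair contributes $\sum_\alpha$ once, bounded by $\sqrt{|A_i||A_j|}$ after Cauchy–Schwarz.

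There is a subtlety I need to handle carefully: when several indices share the same last character, the pairing into a perfect matching is not unique, and an $\emptyset$-summing tuple could also have, say, four equal last characters with the position-$(c-1)$ contributions cancelling in a block rather than pairwise. The clean way around this is to observe that \emph{some} perfect matching $M$ is always consistent with a given valid tuple — group the indices by last-character value, and within each group (which has even size, since a repeated position character must appear an even number of times to cancel) pick an arbitrary pairing — so the union over all $(2t-1)!!$ matchings of the sets counted above covers all valid tuples, and a union bound suffices (we do not need the matchings to be disjoint, only covering). This is the step I expect to be the main obstacle, since one must make sure the reduction to the inductive hypothesis is valid for \emph{every} matching (even ones where the forced-equal characters are not actually forced by the tuple), and that the Cauchy–Schwarz bookkeeping is uniform across matchings. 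Once that is pinned down, Lemma~\ref{zeroSum} follows by taking $A_1=\cdots=A_{2t}=X$, giving $((2t-1)!!)^c |X|^{t}$.
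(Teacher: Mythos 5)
Your proposal is correct and matches the paper's proof essentially step for step: induction on the number of character positions $c$, a union over the $(2t-1)!!$ perfect matchings of indices with equal last character (handled as a covering, exactly as you note), reduction to the prefix sets $A_i\!\restriction\!\alpha$, and Cauchy--Schwarz to collapse the sum over $\alpha$ into $\prod_i\sqrt{|A_i|}$. The only cosmetic difference is that no outer induction on $t$ is needed -- the single induction on $c$ suffices, with $t$ fixed throughout.
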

\begin{proof}[Proof of \cref{zeroSumProvable}]
Let $(x_1, \ldots, x_{2t})$ be such a $2t$-tuple.
\Cref{eq:zeroSumProvable} implies that the number of times each position character
appears is an even number. Hence we can partition $(x_1,\ldots,x_{2t})$ into $t$ pairs
$(x_{i_1},x_{j_1}), \ldots, (x_{i_t},x_{j_t})$ such that
$\pi(x_{i_k},c-1) = \pi(x_{j_k},c-1)$ for $k=1,\ldots,t$.
Note that there are at $(2t-1)!!$ ways to partition the elements in such a way.
This is illustrated in \Cref{fig:dependency}.
\begin{figure}[htbp]
    \centering
    \includegraphics[width=.3\textwidth]{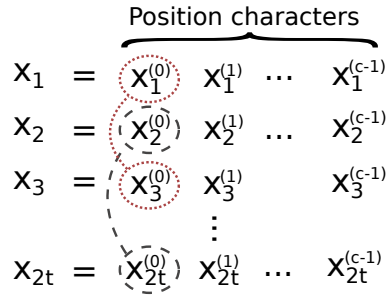}
    \caption{Pairing of the position characters of $2t$ keys. $x_1^{(0)}$
    can be matched to $2t-1$ position characters, $x_2^{(0)}$ to $2t-3$,
    etc.}
    \label{fig:dependency}
\end{figure}

We now prove the claim by induction on $c$. First assume that $c=1$.
We fix some partition $(x_{i_1},x_{j_1}), \ldots, (x_{i_t},x_{j_t})$ and count the
number of $2t$-tuples which fulfil  $\pi(x_{i_k},c-1) = \pi(x_{j_k},c-1)$ for $k=1,\ldots,t$.
Since $c=1$ we have $x_{i_k},x_{j_k} \in A_{i_k} \cap A_{j_k}$. The number of ways to choose such
a $2t$-tuple is thus bounded by:
\[
\prod_{k=1}^t \abs{A_{i_k} \cap A_{j_k}} \le
\prod_{k=1}^t \min\set{\abs{A_{i_k}}, \abs{A_{j_k}}} \le
\prod_{k=1}^t \sqrt{\abs{A_{i_k}} \abs{A_{j_k}}} =
\prod_{k=1}^{2t} \sqrt{\abs{A_k}}
\]
And since there are $(2t-1)!!$ such partitions the case $c=1$ is finished.

Now assume that the lemma holds when the keys have $< c$ characters.
As before, we fix some partition $(x_{i_1},x_{j_1}), \ldots, (x_{i_t},x_{j_t})$ and
count the number of $2t$-tuples which satisfy $\pi(x_{i_k},c-1) = \pi(x_{j_k},c-1)$
for all $k=1,\ldots,t$.
Fix the last position character $(a_k,c-1) = \pi(x_{i_k},c-1) = \pi(x_{j_k},c-1)$
for $k=1,\ldots,t$, $a_k \in \Sigma$.
The rest of the position characters from $x_{i_k}$ is then from the
set
\[
A_{i_k}[a_k] =
\set{
x \backslash (a_k,c-1)
\mid
(a_k,c-1) \in x, x\in A_{i_k}
}
\]
By the induction hypothesis the number of ways to choose $x_1,\ldots,x_{2t}$ with
this choice of $a_1,\ldots,a_t$ is then at most:
\[
((2t-1)!!)^{c-1}
\prod_{k=1}^{t}
\sqrt{\abs{A_{i_k}[a_k]} \abs{A_{j_k}[a_k]} }
\]
Summing over all choices of $a_1,\ldots,a_t$ this is bounded by:
\begin{align*}
& ((2t-1)!!)^{c-1}
\sum_{a_1,\ldots,a_t \in \Sigma}
\prod_{k=1}^{t}
\sqrt{\abs{A_{i_k}[a_k]} \abs{A_{j_k}[a_k]} }
\\
= &
((2t-1)!!)^{c-1}
\prod_{k=1}^{t}
\sum_{a_k \in \Sigma}
\sqrt{\abs{A_{i_k}[a_k]} \abs{A_{j_k}[a_k]} }
\\
\le &
((2t-1)!!)^{c-1}
\prod_{k=1}^{t}
\sqrt{\sum_{a_k \in \Sigma} \abs{A_{i_k}[a_k]}}
\sqrt{\sum_{a_k \in \Sigma} \abs{A_{j_k}[a_k]}}
\numberthis\label{eq:cauchy}
\\
= &
((2t-1)!!)^{c-1}
\prod_{k=1}^{t}
\sqrt{\abs{A_{i_k}}}
\sqrt{\abs{A_{j_k}}}
=
((2t-1)!!)^{c-1}
\prod_{k=1}^{2t}
\sqrt{\abs{A_k}}
\end{align*}
Here \eqref{eq:cauchy} is an application of Cauchy-Schwartz's inequality.
Since there are $(2t-1)!!$ such partitions the conclusion follows.
\end{proof}

%
%
%

\section{Uniform hashing in constant time}\label{sec:uniform}

This section is dedicated to proving \cref{thm:uniform2}. We will show the
following more general theorem. \sd{This proof also implies the result of
\Cref{thm:peelable}.}

\begin{theorem}\label{thm:uniform3}
    Let $h=h_2\circ h_1^\star$ be a mixed tabulation hash function.
    Let $X\subset [u]$ be any input set. For each $x\in X$, associate a function
    $f_x : \calR\to \{0,1\}$. Let $Y = \{x\in X\cond f_x(h(x)) = 1\}$ and assume $\Ep{|Y|}\le
    |\Sigma|/(1+\eps)$.

    Then the keys of $h_1^\star(Y)\subseteq \Sigma^{c+d}$
    are \sd{peelable} with probability $1 - O(|\Sigma|^{1-\floor{d/2}})$.
\end{theorem}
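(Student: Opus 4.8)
The plan is to analyze the hash graph of the simple tabulation function $h_2'$ restricted to the derived keys $h_1^\star(Y)$, and show that, with the stated probability, this graph can be fully peeled, i.e.\ every sub-hypergraph contains an edge with a vertex of degree~$1$. Recall that a derived key has the form $x\cdot h_1(x)\in\Sigma^{c+d}$, where the last $d$ characters are computed by the simple tabulation function $h_1$; the crucial structural point inherited from mixed tabulation is that the derived keys are \emph{distinct}, and moreover the last $d$ characters are a fresh simple-tabulation image of $x$. First I would observe that non-peelability of $h_1^\star(Y)$ is witnessed by a minimal non-peelable subset $Z\subseteq h_1^\star(Y)$: every one of the $d$ derived-character positions that appears in $Z$ appears in at least two keys of $Z$. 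Using the notation of the paper, each derived character position that is ``used'' is covered at least twice, so the number of distinct derived position characters appearing in $Z$ is at most $\tfrac{d}{2}\,|Z|$ counted with the doubling constraint — this is exactly where the $\lfloor d/2\rfloor$ exponent will come from.

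The key steps, in order, are: (1)~reduce to bounding the probability that there exists a ``bad'' set $Z$ of derived keys of some size $m$ such that $Z$ has no peelable element; (2)~for a fixed candidate set of $m$ original keys $x_1,\dots,x_m$, bound the probability over the randomness of $h_1$ (which generates the derived characters) and over the event $f_{x_i}(h(x_i))=1$ that the derived keys $x_i\cdot h_1(x_i)$ collide in the required doubled-up pattern on their last $d$ coordinates; (3)~combine this with \cref{zeroSum}/\cref{zeroSumProvable}, which controls the number of tuples of original keys whose position characters XOR to $\emptyset$ — the relevant combinatorial quantity when counting how many ways the derived characters can pair up — to turn the per-set probability bound into a union bound over all candidate sets; (4)~sum over $m$ and over the choice of which coordinate-positions are used, and check the geometric series converges, leaving the leading term $O(|\Sigma|^{1-\lfloor d/2\rfloor})$. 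The constraint $\E|Y|\le|\Sigma|/(1+\eps)$ enters in step~(4): the expected number of keys landing in $Y$ being bounded below $|\Sigma|$ is what makes the per-size contribution decay geometrically in $m$, so that the total is dominated by the smallest bad configurations.

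The main obstacle I anticipate is step~(2)–(3): one must carefully set up the ``layered exposure'' of randomness so that the event $x\in Y$ (which depends on $h_2'\circ h_1^\star$, hence on \emph{all} the tables) does not destroy independence from the derived characters produced by $h_1$. The right move is to first expose $h_1$ and its simple-tabulation tables generating the $d$ extra characters, argue about peelability of $h_1^\star(Y)$ purely as a function of $h_1$ and $Y$, and bound $\Pr[x\in Y]$ crudely (it is at most $1$, or one uses $\E|Y|$ only through a first-moment/Markov step on $|Y|$), so that the count of potential bad sets is governed by $\E|Y|$ while the collision structure on derived characters is governed by the fresh randomness of $h_1$. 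Handling the dependency between ``which keys are in $Y$'' and ``how their derived characters collide'' without losing more than constant factors is the delicate part; once that is isolated, \cref{zeroSumProvable} does the heavy combinatorial lifting and the rest is a union bound and a geometric-series estimate. Finally, since peelability of $h_1^\star(Y)$ is exactly the hypothesis needed for $h_2'$ to be fully random on $Y$ (Siegel/Thorup peelability), \cref{thm:uniform2} follows immediately, and taking all $f_x$ equal to the indicator of a fixed bit-pattern on $b$ positions recovers the statement of \cref{thm:uniform2} verbatim, while the case $b=0$, $f_x\equiv 1$ gives \cref{thm:peelable}.
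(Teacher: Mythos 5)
Your overall architecture (certificate for non-peelability, union bound over candidate bad configurations weighted by $\Ep{\abs Y}\le\abs\Sigma/(1+\eps)$, with \cref{zeroSumProvable} controlling the XOR-dependent tuples, and a geometric series over the configuration size) matches the paper's, and your closing remarks on how the theorem specializes to \cref{thm:uniform2} and \cref{thm:peelable} are correct. But two ideas that carry the paper's proof are missing, and the places where you wave at them are exactly where your argument would break. First, the witness of dependence is not ``a minimal non-peelable set $Z$ in which every derived position character is doubled.'' The paper instead pairs the $d$ derived characters into $\floor{d/2}$ hash graphs $G_{2i,2i+1}$ and observes that a dependent key $y_0$ must lie on a lollipop (path into a cycle) in \emph{every} one of these graphs; it then truncates each lollipop at the first obstruction of one of four types (a key that is an XOR-combination of earlier keys, a cycle closure, or a long independent path of length $s=\ceil{2\log_{1+\eps}\abs\Sigma}$). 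This is what makes the certificate short, countable, and---crucially---composed of $\floor{d/2}$ \emph{independent} sub-events, one per graph; that independence across graphs, not the doubling count, is where the exponent $\floor{d/2}$ comes from (note your doubling heuristic would naturally produce $d/2$ with no floor, a sign that it is not the operative mechanism). Without a truncated certificate of this kind, your union bound over all non-peelable $Z$ of size $m$ has to contend with arbitrarily long dependency chains and with the fact that simple tabulation's collisions do not factor across keys.

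Second, and more seriously, your proposed ways to decouple ``$x\in Y$'' from the collision structure do not work. Bounding $\Pr[x\in Y]$ by $1$ replaces the weight $(\Ep{\abs Y})^m$ by $n^m$, and $n=\abs X$ is unbounded relative to $\abs\Sigma$, so the union bound diverges; a Markov bound on $\abs Y$ does not give a per-key factor either. The paper's resolution is to select, inside each certificate, a sub-collection $V_i$ of keys whose \emph{derived} keys are independent under $h_2$ conditioned on $h_1$, so that $\Prp{\{y_0\}\cup\bigcup_i V_i\subseteq Y\cond h_1}\le p^{1+\sum_i\abs{V_i}}$, and to pay for the excluded keys using the counting bounds from \cref{zeroSumProvable} (this is where the quantity $\Delta(y_0)$, the number of triples XOR-ing with $y_0$ to $\emptyset$, enters). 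Finally, since $\Delta(y_0)=O(n)$ holds only for \emph{some} $y_0$ and not all, the paper cannot union bound over $y_0$ directly; it peels off an independent $y_0$ and recurses on $X\sm\{y_0\}$. These three devices---the pair-graph lollipop certificate with its four obstruction types, the independent sub-collection $V_i$, and the peel-and-recurse step replacing the naive union over $y_0$---are the substance of the proof, and your sketch does not supply substitutes for any of them.
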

Here, we consider only the case when there exists a $p$ such that
$\Prp{f_x(z) = 1} = p$ for all $x$, when $z$ is uniformly distributed in $\calR$.
In \Cref{app:uniform_multip} we \sd{sketch} the details when this is
not the case. We note that the full proof uses the same ideas but is more
technical.

The proof is structured in the following way: (1) We fix $Y$ and assume the
key set $h_1^\star(Y)$ is not
independent. (2) With $Y$ fixed this way we construct a \emph{bad event}. (3)
We unfix $Y$ and show that the probability of a bad event occurring is low using
a union bound. Each bad event consists of independent ``sub-events'' relating
to subgraphs of the hash graph of $h_1(Y)$. These sub-events fall into four
categories, and for each of those we will bound the probability that the event
occurs.

First observe that if a set of keys $S$ consists of independent keys, then the
set of keys $h_1^\star(S)$ are also independent.

We will now describe what we mean by a \emph{bad event}. We consider the hash
function $h_1 : [u]\to\Sigma^d$ as $d$ simple tabulation hash functions
$h^{(0)}, \ldots, h^{(d-1)} : [u]\to \Sigma$ and define $G_{i,j}$ to be the
hash graph of $h^{(i)}, h^{(j)}$ and the input set $X$.

Fix $Y$ and consider some $y\in Y$. If for some $i,j$, the component of
$G_{i,j}$ containing $y$ is a tree, then we can perform a peeling process and
observe that $h_1^\star(y)$ must be independent of $h_1^\star(Y\sm \{y\})$.
Now assume that there exists some $y_0\in Y$ such that $h_1^\star(y_0)$ is dependent of
$h_1^\star(Y\sm\{y_0\})$, then $y_0$ must lie on a (possibly empty) path leading to a
cycle in each of $G_{2i,2i+1}$ for $i\in [\floor{d/2}]$. We will call such a path
and cycle a \emph{lollipop}. Denote this lollipop
by $y_0,y_1^i,y_2^i,\ldots,y_{p_i}^i$. For each such $i$ we will construct a
list $L_i$ to be part of our bad event. Set $s\stackrel{def}{=}
\ceil{2\log_{1+\eps}\abs{\Sigma}}$.
The list $L_i$ is constructed in the following manner: We
walk along $y_1^i,\ldots,y_{p_i}^i$ until we meet an \emph{obstruction}.
Consider a key $y_j^i$. We will say that $y_j^i$ is an obstruction if it falls
into one of the following four cases as illustrated in \cref{fig:save_cases}.
\begin{description}
    \item [A] There exists some subset
        $B\subseteq\{y_0,y_1^i,\ldots,y_{j-1}^i\}$ such that $y_j^i =
        \bigxor_{y\in B} y$.
    \item [B] If case \textbf{A} does not hold and there exists some subset
        $B\subseteq\{y_0,y_1^i,\ldots,y_{j-1}^i\}\cup L_0\cup\ldots\cup
        L_{i-1}$ such that $y_j^i = \bigxor_{y\in B} y$.
    \item [C] $j = p_i < s$ (i.e.~$y_j^i$ is the last key on the cycle). In
        this case $y_j^i$ must share a node with either $y_0$ (the path of the
        lollipop is empty) or with two of the other keys in the lollipop.
    \item [D] $j=s$. In this case the keys $y_1^i,\ldots,y_s^i$ form a path
        keys independent from $L_0,\ldots,L_{i-1}$.
\end{description}
In all four cases we set $L_i = (y_1^i,\ldots, y_j^i)$ and
we associate an attribute $A_i$. In case \textbf{A} we set $A_i
= B$. In case \textbf{B} we set $A = (x^{(0)},\ldots,x^{(c-1)})$,
where $x^{(r)}\in B$ is chosen such that $\pi(y_j^i,r) = \pi(x^{(r)},r)$.
In \textbf{C} we set $A_i = z$, where $z$ is the smallest value such that
$y_z^i$ shares a node with $y_j^i$, and in \textbf{D} we set $A_i = \emptyset$.
Denote the lists by $L$ and the types and attributes of the lists by $T, A$. We
have shown, that if there is a dependency among the keys of $h_1^\star(Y)$, then we
can find such a bad event $(y_0, L, T, A)$.
\begin{figure*}[htbp]
    \centering
    \includegraphics[width=.9\textwidth]{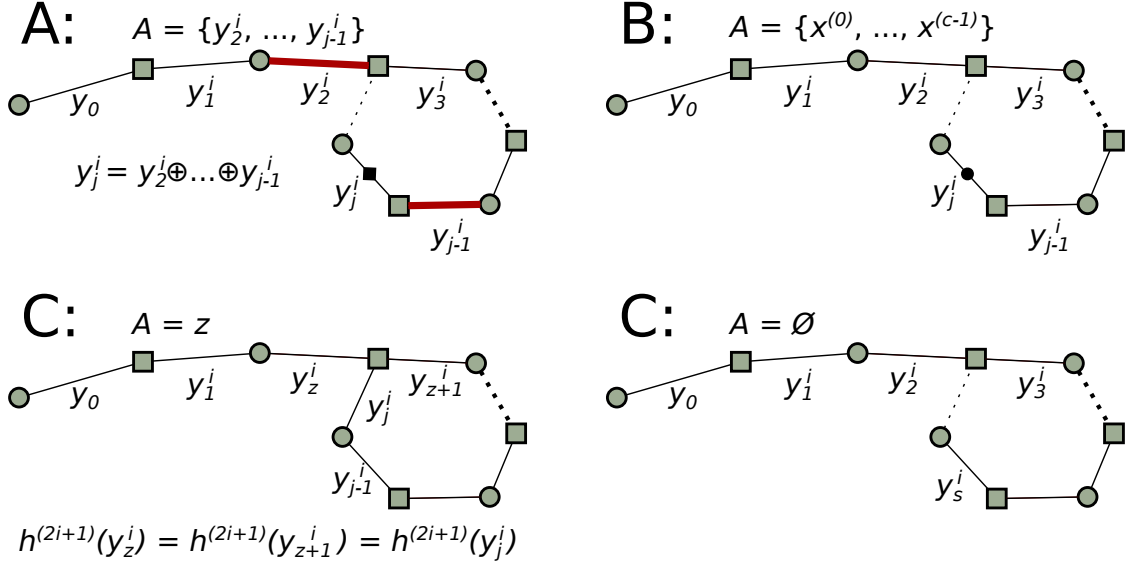}
    \caption{The four types of violations. Dependent keys are denoted by
    $\blacksquare$ and $\bullet$.}
    \label{fig:save_cases}
\end{figure*}

Now fix $y_0\in X, l = (l_0,\ldots,l_{\floor{d/2}-1})$.
Let $F(y_0,l)$ be the event that there exists a quadruple
$(y_0,L,T,A)$ forming a bad event such that $|L_i| = l_i$. We use the shorthand
$F = F(y_0,l)$. Let $F(y_0,L,T,A)$ denote the event that a given quadruple
$(y_0,L,T,A)$ occurs. Note that a quadruple $(y_0, L,T,A)$ only occurs if some
conditions are satisfied for $h_1$ (i.e.~that the hash graph forms the
lollipops as described earlier) and $h_2$ (i.e.~that the keys of the lollipops
are contained in $Y$). Let $F_1(y_0,L,T,A)$ and $F_2(y_0,L,T,A)$ denote the
the event that those conditions are satisfied, respectively. Then
\begin{align*}
    \Prp{F}
    &\le \sum_{\text{bad event $L,T,A$}} \Prp{F(y_0,L,T,A)} \\
    &= \sum_{\text{bad event $L,T,A$}} \Prp{F_2(y_0,L,T,A) |
        F_1(y_0,L,T,A)}\cdot \Prp{F_1(y_0,L,T,A)}\ .
\end{align*}
We note, that $F_1(y_0,L,T,A)$ consists of independent events for each
$G_{2i,2i+1}$, for $i\in[\floor{d/2}]$. Denote these restricted events by
$F_1^i(y_0,L,T,A)$.

For a fixed $h_1$ we can bound $\Prp{F_2(y_0,L,T,A)}$ in the following way: For
each $i\in[\floor{d/2}]$ we choose a subset $V_i\subseteq L_i$ such that
$S = \{y_0\}\bigcup_i V_i$ consists of independent keys. Since these keys are
independent, so is $h_1^\star(S)$, so we can bound the probability that
$S\subseteq Y$ by $p^{|S|}$. We can split this into one part for each $i$.
Define
\[
    p_i \stackrel{def}{=} p^{|V_i|}\cdot\Prp{F_1^i(y_0,L_i,T_i,A_i)}\ .
\]
We can then bound $\Prp{F}\le p\cdot\prod_{i\in[\floor{d/2}]} p_i$.

We now wish to bound the probability $p_i$.
Consider some $i\in [\floor{d/2}]$. We split the analysis into a case for each
of the four types:
\begin{description}
    \item [A] Let $\Delta(y_0)$ be the number of triples $(a,b,c)\in X^3$ such
        that $y_0\xor a\xor b\xor c = \emptyset$. Note that the size of the
        attribute $|A_i|\ge 3$ must be odd. Consider the following three cases:
        \begin{enumerate}
            \item $|A_i| = 3, y_0\in A_i$: We have $y_0$ is the $\xor$-sum
                of three elements of $L_i$. The number of ways this can happen
                (i.e.~the number of ways to choose $L_i$ and $A_i$)
                is bounded by $l_i^3n^{l_i-3}\Delta(y_0)$ -- The indices of the
                three summands can be chosen in at most $l_i^3$ ways, and the
                corresponding keys in at most $\Delta(y_0)$ ways.
                The remaining elements can be chosen in at most $n^{l_i-3}$ ways.
            \item $|A_i| \ge 5, y_0\in A_i$: By \cref{zeroSumProvable} we can
                choose $L_i$ and $A_i$ in at most $l_i^{O(1)}\cdot n^{l_i-5/2}$
                ways.
            \item $|A_i|\ge 3, y_0\notin A_i$: By \cref{zeroSumProvable} we can
                choose $L_i$ and $A_i$ in at most $l_i^{O(1)}\cdot n^{l_i - 2}$
                ways.
        \end{enumerate}
        To conclude, we can choose $L_i$ and $A_i$ in at most
        \[
            l_i^{O(1)}\cdot n^{l_i-2}\cdot\left(1 +
            \frac{\Delta(y_0)}{n}\right)
        \]
        ways. We can choose $V_i$ to be $L_i$ except for the last key. We note
        that $V_i\cup \{y_0\}$ form a path in $G_{2i,2i+1}$, which
        happens with probability $1/\abs{\Sigma}^{l_i-1}$ since the keys
        are independent. For type \textbf{A} we thus get the bound
        \begin{align*}
            p_i&\le l_i^{O(1)}\cdot p^{l_i-1}\cdot
            n^{l_i-2}\cdot\left(1+\frac{\Delta(y_0)}{n}\right)\cdot\frac{1}{\abs{\Sigma}^{l_i-1}}
            \le
            l_i^{O(1)}\cdot\left(1+\frac{\Delta(y_0)}{n}\right)\cdot\frac{1}{\abs{\Sigma}}\cdot\frac{p}{(1+\eps)^{l_i-2}}
            \\ &\le
            l^{O(1)}\cdot\left(1+\frac{\Delta(y_0)}{n}\right)\cdot\frac{1}{\abs{\Sigma}}\cdot\frac{1}{(1+\eps)^{l_i/2}}\
            .
        \end{align*}
    \item [B] All but the last key of $L_i$ are independent and can be chosen
        in at most $n^{l_i-1}$ ways. The last key is uniquely defined by $A_i$,
        which can be chosen in at most $l^c$ ways (where $l = \sum_i l_i)$,
        thus $L_i$ and $A_i$ can be chosen in at most $n^{l_i-1}l^c$ ways.
        Define $V_i$ to be all but the last key of $L_i$. The
        keys of $L_i\cup\{y_0\}$ form a path, and since the last key of
        $L_i$ contains a position character not in $V_i$, the probability of
        this path occurring is exactly $1/\abs{\Sigma}^{l_i}$, thus we get
        \[
            p_i \le l^c\cdot n^{l_i-1}\cdot p^{l_i-1}\cdot
            \frac{1}{\abs{\Sigma}^{l_i}}
            \le l^{O(1)}\cdot\frac{1}{\abs{\Sigma}}\cdot\frac{1}{(1+\eps)^{l_i-1}}
            \le l^{O(1)}\cdot\frac{1}{\abs{\Sigma}}\cdot\frac{1}{(1+\eps)^{l_i/2}}\ .
        \]
    \item [C] The attribute $A_i$ is just a number in $[l_i]$, and $L_i$ can be
        chosen in at most $n^{l_i}$ ways. We can choose $V_i = L_i$.
        $V_i\cup\{y_0\}$ is a set of independent keys forming a path leading to a
        cycle, which happens with probability $1/\abs{\Sigma}^{l_i+1}$, so we get the
        bound
        \[
            p_i\le l_i\cdot n^{l_i}\cdot p^{l_i}\cdot \frac{1}{\abs{\Sigma}^{l_i+1}}
            \le l_i\cdot \frac{1}{\abs{\Sigma}}\cdot\frac{1}{(1+\eps)^{l_i}}
            \le l^{O(1)}\cdot \frac{1}{\abs{\Sigma}}\cdot\frac{1}{(1+\eps)^{l_i/2}}\ .
        \]
    \item [D] The attribute $A_i = \emptyset$ is uniquely chosen. $L_i$
        consists of $s$ independent keys and can be chosen in at most $n^s$
        ways. We set $V_i = L_i$. We
        get
        \[
            p_i\le n^s\cdot p^s\cdot \frac{1}{\abs{\Sigma}^s} \le
            \frac{1}{(1+\eps)^s} \le
            \frac{1}{\abs{\Sigma}}\cdot\frac{1}{(1+\eps)^{l_i/2}}\ .
        \]
\end{description}

We first note, that there exists $y_0$ such that $\Delta(y_0) = O(n)$. We have
just shown that for a specific $y_0$ and partition of the lengths
$(l_0,\ldots,l_{\floor{d/2}})$ we get
\[
    \Prp{F}\le
    p\cdot\left(l^{O(1)}\cdot\frac{1}{\abs{\Sigma}}\right)^{\floor{d/2}}\cdot\frac{1}{(1+\eps)^{l/2}}\
        .
\]
Summing over all partitions of the $l_i$'s and choices of $l$ gives
\[
    \sum_{l\ge 1} p\cdot
    l^{O(1)}\cdot\abs{\Sigma}^{-\floor{d/2}}\cdot\frac{1}{(1+\eps)^{l/2}} \le
    O\left(p\cdot\abs{\Sigma}^{-\floor{d/2}}\right)\ .
\]
We have now bounded the probability for $y_0\in X$ that $y_0\in Y$ and $y_0$ is
dependent on $Y\sm\{y_0\}$. We relied on $\Delta(y_0) = O(n)$, so we cannot
simply take a union bound. Instead we note that, if $y_0$ is independent of
$Y\sm\{y_0\}$ we can peel $y_0$ away and use the same argument on
$X\sm\{y_0\}$. This gives a total upper bound of
\[
    O\!\left(\sum_{y_0\in X} p\cdot\abs{\Sigma}^{-\floor{d/2}}\right) =
    O(\abs{\Sigma}^{1-\floor{d/2}})\ .
\]
This finishes the proof.\qed

\section{Uniform hashing with multiple probabilities}
\label{app:uniform_multip}

Here we present a sketch in extending the proof in \cref{sec:uniform}.
We only need to change the proof where we bound $p_i$.
Define $p_x = \Prp{f_x(z) = 1}$ when $z$ is uniformly distributed in $\calR$.
First we argue that cases \textbf{B},
\textbf{C} and \textbf{D} are handled in almost the exact same way. In
the original proof we argued that for some size $v$ we can choose $V_i, \abs{V_i}=v$
in at most $n^{v}$
ways and for each choice of $V_i$ the probability that it is contained in $Y$
is at most $p^{v}$, thus multiplying the upper bound by
\[
    n^{v} p^{v} = \left ( \E \abs{Y} \right )^{v}
\]
For our proof we sum over all choices of $V_i$ and add the probabilities that
$V_i$ is contained in $Y$ getting the exact same estimate:
\[
    \sum_{V_i \in U, \abs{V_i} = v} \left ( \prod_{x \in V_i} p_x \right )
    \le
    \left ( \sum_{x \in U} p_x \right )^v
    =
    \left ( \E \abs{Y} \right )^v
\]
The difficult part is to prove the claim in case \textbf{A}.

For all $i \ge 0$ we set
\[
    n_i = \abs{\set{x\in X\cond p_x\in \left ( 2^{-i-1}, 2^{-i} \right ]}}\ .
\]
Now observe, that
$\sum_{i \ge 0} n_i 2^{-i} \le 2\abs{\Sigma}/(1+\eps) = O(\abs{\Sigma})$. Define $m_i =
\sum_{j \le i} n_j$, we then have:
\[
    \sum_{i \ge 0} m_i 2^{-i} =
    \sum_{i \ge 0} n_i \left ( \sum_{j \ge i} 2^{-j} \right ) =
    \sum_{i \ge 0} n_i 2^{-i+1} =
    O(\abs{\Sigma})
\]
We let $X_i = \set{x \in X \mid p_x > 2^{-i-1}}$ and note that $m_i = \abs{X_i}$.
For each $y_0 \in X$ we will define $\Delta'(y_0)$ (analogously to $\Delta(y_0)$) in the following way:
\[
    \Delta'(y_0) = \sum_{a,b,c \in X} \min\set{p_ap_b,p_bp_c,p_cp_a}
\]
where we only sum over triples $(a,b,c)$ such that $y_0 \xor a \xor b \xor c = \emptyset$.
Analogously to the original proof we will show that there exists $y_0$ such
that $\Delta'(y_0) \le O(\abs{\Sigma})$. The key here is to prove that:
\[
    \sum_{y_0 \in X} \Delta'(y_0) = O\left(n \abs{\Sigma}\right )
\]
Now consider a $4$-tuple $(y_0,a,b,c)$ such that $y_0 \xor a \xor b \xor c = \emptyset$.
Let $i \ge 0$ be the smallest non-negative integer such that $b,c \in X_i$. Then:
\[
    \min\set{p_ap_b,p_bp_c,p_cp_a} \le \min\set{p_b,p_c} \le 2^{-i}
\]
By \ref{zeroSumProvable} we see that for any $i$ there are at most
$O(n m_i)$ $4$-tuples $(y_0,a,b,c)$
such that $b,c \in X_i$. This gives the following bound on the total sum:
\[
    \sum_{y_0 \in X} \Delta'(y_0) \le
    \sum_{i \ge 0} O(n m_i) \cdot 2^{-i} =
    O\left(n \abs{\Sigma} \right )
\]
Hence there exists $y_0$ such that $\Delta'(y_0) = O(\abs{\Sigma})$ and we can finish
case \textbf{A.1} analogously to the original proof.

Now we turn to case \textbf{A.2} where $\abs{A_i} \ge 5, y_0 \in A_0$. We will here
only consider the case $\abs{A_i} = 5$, since the other cases follow by the same
reasoning. We will choose $V_i$ to consist of all of $L_i \sm A_i$ and $3$ keys
from $A_i$. We will write $A_i = \set{a,b,c,d,e}$ and find the smallest $\alpha,\beta,\gamma$
such that $a,b \in X_\alpha, c,d \in X_\beta, e \in X_\gamma$. Then:
\[
    \prod_{x \in V_i} p_x
    \le
    \left ( \prod_{x \in V_i \sm A_i} p_x \right )
    2^{-\alpha} 2^{-\beta} 2^{-\gamma}
\]
When $a,b \in X_\alpha, c,d \in X_\beta, e \in X_\gamma$ we can choose $a,b,c,d,e$
in at most $m_\alpha m_\beta \sqrt{m_\gamma}$ ways by \cref{zeroSumProvable}.
Hence, when we sum over
all choices of $V_i$ we get an upper bound of:
\[
    \left ( \sum_{x \in X} p_x \right )^{l_i-5}
    \left ( \sum_{\alpha,\beta,\gamma \ge 0} m_\alpha m_\beta \sqrt{m_\gamma}
            2^{-\alpha} 2^{-\beta} 2^{-\gamma} \right )
    =
    \left ( \sum_{x \in X} p_x \right )^{l_i-5}
    \left ( \sum_{\alpha \ge 0} m_\alpha 2^{-\alpha} \right )^2
    \left ( \sum_{\alpha \ge 0} \sqrt{m_\alpha} 2^{-\alpha} \right )
\]
Now we note that by Cauchy-Schwartz inequality:
\[
    \sum_{\alpha \ge 0} \sqrt{m_\alpha} 2^{-\alpha} \le
    \sqrt{\sum_{\alpha \ge 0} 2^{-\alpha}}
    \sqrt{\sum_{\alpha \ge 0} m_\alpha 2^{-\alpha}}
    = O(\sqrt{\abs{\Sigma}})
\]
Hence we get a total upper bound of $O(\abs{\Sigma}^{l_i-5/2})$ and we can finish
the proof in analogously to the original proof.

Case \textbf{A.3} is handled similarly to \textbf{A.2}.

\section{Constant moment bounds}\label{sec:kthmoment}
This section is dedicated to proving \cref{thm:kthmoment}.

Consider first \cref{thm:kthmoment} and let $k = O(1)$ be fixed. Define $Z_i =
Y_i - p$ for all $i\in [m]$ and $Z = \sum_{i\in [m]} Z_i$. We wish to bound
$\Ep{Z^{2k}}$ and by linearity of expectation this equals:
\[
    \Ep{Z^{2k}}
    =
    \sum_{r_0,\ldots,r_{2k-1} \in [m]^{2k}}
    \Ep{Z_{r_0} \cdots Z_{r_{2k-1}}}
\]
Fix some $2k$-tuple $r = (r_0,\ldots,r_{2k-1}) \in [m]^{2k}$ and define
$V(r) = \Ep{Z_{r_0} \cdots Z_{r_{2k-1}}}$. Observe, that if there exists
$i \in [2k]$ such that $x_{r_i}$ is independent of
$(x_{r_j})_{j \neq i}$ then
\[
    V(r) =
    \Ep{Z_{r_0} \cdots Z_{r_{2k-1}}} =
    \Ep{Z_{r_i}}
    \Ep{\prod_{j \neq i} Z_{r_j}}
    = 0
\]

The following lemma bounds the number of $2k$-tuples, $r$, for which $V(r)\ne
0$.
\begin{lemma}\label{lem:moment-rtuples}
    The number of $2k$-tuples $r$ such that $V(r) \neq 0$ is $O(m^k)$.
\end{lemma}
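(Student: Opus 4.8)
The plan is to reduce the counting of ``surviving'' $2k$-tuples $r$ to the structural result of \cref{zeroSumProvable}, exactly as the preceding paragraph sets up. Recall that $V(r)=\Ep{Z_{r_0}\cdots Z_{r_{2k-1}}}$ vanishes whenever some index $i$ is such that $x_{r_i}$ is independent of the multiset of the other keys $(x_{r_j})_{j\ne i}$. So for $V(r)\ne 0$ every key appearing in $r$ must be \emph{dependent} on the remaining ones; by the characterization from \cite{thorup12kwise} quoted earlier, for each such $i$ there is a nonempty index subset witnessing a vanishing symmetric difference that involves $x_{r_i}$. First I would argue that this forces a global combinatorial structure on the key multiset $\{x_{r_0},\dots,x_{r_{2k-1}}\}$: one can partition (or at least cover) the $2k$ positions into groups, each of which has symmetric difference $\emptyset$. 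The cleanest route is to observe that the span (over $\mathbb F_2$, viewing keys as their position-character incidence vectors) of $\{x_{r_0},\dots,x_{r_{2k-1}}\}$ has dimension at most $k$ when $V(r)\ne 0$: if the dimension were $\ge k+1$, then $2k$ vectors of rank $\ge k+1$ would contain a vector not in the span of the others (by a matroid/exchange argument, a dependent set of size $2k$ with rank $\rho$ has at most $\rho$ ``coloops'', so if $\rho>k$ there is a position not in any circuit, contradicting that \emph{every} key is dependent on the rest). Hence $\mathrm{rank}\{x_{r_i}\}\le k$.

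Given rank $\le k$, the number of admissible key \emph{sets} is easy to bound. Enumerate the distinct keys $y_1,\dots,y_s$ appearing among $x_{r_0},\dots,x_{r_{2k-1}}$; since the multiplicities sum to $2k$ and rank is $\le k$, a basis of $\le k$ of them can be chosen freely (at most $n^k$ ways, where $n=|X|$... here $n=m$ since we think of the $m$ keys as the ground set, so at most $m^k$ ways), and every remaining distinct key is an $\mathbb F_2$-combination of the at-most-$k$ basis keys, hence determined up to $O(1)$ choices (at most $2^k=O(1)$ possibilities each, and the number of distinct keys $s$ is at most $2k=O(1)$). Finally, assigning the $2k$ positions $r_0,\dots,r_{2k-1}$ to these $O(1)$ distinct key values costs another factor $O(1)$. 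Multiplying, the number of surviving $2k$-tuples is $O(m^k)$, with the constant depending on $k$ and $c$. (Alternatively, one avoids the matroid argument entirely: partition the $2k$ positions into $\le k$ blocks whose symmetric differences are empty — this is possible because a minimal dependency among the rank-$\le k$ keys, iterated, decomposes all $2k$ positions into zero-sum pieces of total size $2k$ using at most $k$ pieces — and then apply \cref{zeroSumProvable} with $A_i=X$ to each piece; a zero-sum $2t_j$-tuple contributes $O(m^{t_j})$ and $\sum_j t_j\le k$, giving $O(m^k)$ overall.)

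The main obstacle is making the ``every key dependent on the rest $\Rightarrow$ rank $\le k$ and a clean zero-sum partition into $\le k$ blocks'' step fully rigorous: one must be careful that a single minimal circuit need not cover all $2k$ positions, so the decomposition has to be built by repeatedly peeling off minimal circuits from the remaining positions, checking at each stage that what remains still has the property that each of its keys is dependent on the others within that sub-multiset (this is where $V(r)\ne 0$, i.e. the factorization-to-zero argument, must be reinvoked after conditioning on the peeled part — the $\mathbb F_2$-linear structure makes this clean, but it needs to be spelled out). Once the decomposition into at most $k$ zero-sum blocks of total size $2k$ is in hand, the count $O(m^k)$ follows immediately from \cref{zeroSumProvable} applied blockwise, since $\prod_j m^{t_j}=m^{\sum_j t_j}\le m^k$ and the $((2t_j-1)!!)^c$ factors are absorbed into the $k,c$-dependent constant.
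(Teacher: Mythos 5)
Your reduction to \cref{zeroSumProvable} is the right instinct, but both of the structural claims you hang it on are false, and the argument does not go through as written. First, the rank bound. Take $c=2$, $k=2$ and the four keys $x_1=(a,b)$, $x_2=(a,b')$, $x_3=(a',b)$, $x_4=(a',b')$. Then $x_1\oplus x_2\oplus x_3\oplus x_4=\emptyset$, so each key is the symmetric difference of the other three and none is independent of the rest (so $V(r)$ need not vanish), yet any three of the incidence vectors are linearly independent: the rank is $3>k$. In general a zero-sum $2k$-tuple can have rank $2k-1$, so ``choose a basis freely'' costs up to $m^{2k-1}$, not $m^k$. The reason the true count is nevertheless $O(m^k)$ is not linear-algebraic at all: it is that the symmetric difference of $2k-1$ keys is almost never itself a key (it must have exactly one character per position), which is precisely the pairing-of-position-characters phenomenon that \cref{zeroSumProvable} encodes and that the $\mathbb F_2$-span picture throws away. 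Second, the partition claim in your parenthetical route also fails: with six keys satisfying $y_1\oplus y_2\oplus y_3\oplus y_4=\emptyset$ and $y_3\oplus y_4\oplus y_5\oplus y_6=\emptyset$ (and $y_5\neq y_6$), the zero-sum subsets are $\{1,2,3,4\}$, $\{3,4,5,6\}$, $\{1,2,5,6\}$; they cover $[6]$ but pairwise intersect, so no partition of the positions into zero-sum blocks exists and you cannot apply \cref{zeroSumProvable} blockwise with $A_i=X$.

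The repair is the argument the paper gives, and it is why \cref{zeroSumProvable} was stated with arbitrary sets $A_i$ rather than only for $A_i=X$. List \emph{all} zero-sum index sets $T_0,\ldots,T_{s-1}\subseteq[2k]$; if $V(r)\neq 0$ they must cover $[2k]$, and the collection itself is only $O(1)$ bits of data. Then choose the keys group by group: when processing $T_j$, the positions in $T_j\cap(T_0\cup\cdots\cup T_{j-1})$ are already fixed, so applying \cref{zeroSumProvable} to $T_j$ with singleton sets for those positions and $A_i=X$ for the $\abs{B_j}$ new positions $B_j=T_j\setminus(T_0\cup\cdots\cup T_{j-1})$ bounds the number of choices by $O\bigl(m^{\abs{B_j}/2}\bigr)$. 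Since the $B_j$ partition $[2k]$, the product over $j$ is $O(m^k)$. This cover-and-condition step replaces both your rank argument and your partition argument, and it is the only place where overlapping dependencies are handled correctly.
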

\begin{proof}
Fix $r \in [m]^{2k}$ and let $T_0,\ldots,T_{s-1}$ be all subsets of $[2k]$
such that $\bigxor_{i \in T_j} x_{r_i} = \emptyset$ for $j \in [s]$. If
$\bigcup_{j \in [s]} T_j \neq [2k]$ we must have $V(r) = 0$ as there exists some
$x_{r_i}$, which is independent of $(x_{r_j})_{j \neq i}$. Thus we can assume
that $\bigcup_{j\in [s]} T_j = [2k]$.

Now fix $T_0,\ldots,T_{s-1}\subseteq [2k]$ such that $\bigcup_{j \in [s]} T_j
= [2k]$ and count the number of ways to choose $r \in [m]^{2k}$ such that
$\bigxor_{i \in T_j} x_{r_i} = \emptyset$ for all $j\in[s]$. Note that
$T_0,\ldots,T_{s-1}$ can be chosen in at most $2^{2k} = O(1)$ ways, so if
we can bound the number of ways to choose $r$ by $O(m^k)$ we are done.
Let $A_i = \bigcup_{j < i} T_j$ and $B_i = T_i \sm A_i$ for $i \in [s]$.
We will choose $r$ by choosing $(x_{r_i})_{i \in B_0}$, then
$(x_{r_i})_{i \in B_1}$, and so on up to $(x_{r_i})_{i \in B_{s-1}}$.
When we choose $(x_{r_i})_{i \in B_{j}}$ we have already chosen
$(x_{r_i})_{i \in A_{j}}$ and by \cref{zeroSumProvable} the number of ways
to choose $(x_{r_i})_{i \in B_{j}}$ is bounded by:
\[
    \left ( (\abs{T_j}-1)!! \right )^c
    m^{\abs{B_j}/2}
    = O \left ( m^{\abs{B_j}/2} \right )
\]

Since $\bigcup_{j\in [s]} B_j = [2k]$ we conclude that the number of ways to
choose $r$ such that $V(r) \ne 0$ is at most $O(m^k)$.
\end{proof}

We note that since $\abs{V(r)} \le 1$ this already proves that
\[
    \Ep{Z^{2k}} \le O(m^k)
\]

Consider now any $r \in [m]^{2k}$ and let $f(r)$ denote the size of the
largest subset $I\subset [2k]$ of independent keys $(x_{r_i})_{i\in I}$.
We then have
\[
    \abs{\Ep{\prod_{i \in [2k]} Z_{r_i}}} \le
    \Ep{\abs{\prod_{i \in [2k]} Z_{r_i}}} \le
    \Ep{\abs{\prod_{i \in I} Z_{r_i}}} \le
    O \left ( p^{f(r)} \right )
\]
We now fix some value $s\in\oneToN{2k}$ and count the number of $2k$-tuples
$r$ such that $f(r) = s$. We can bound this number by first choosing
the $s$ independent keys of $I$ in at most $m^s$ ways. For each remaining key
we can write it as a sum of a subset of $(x_{r_i})_{i\in I}$. There are at most
$2^s = O(1)$ such subsets, so there are at most $O(m^s)$ such $2k$-tuples $r$
with $f(r) = s$.

Now consider the $O(m^k)$ $2k$-tuples $r \in [m]^{2k}$ such that $V(r) \neq 0$.
For each $s \in \oneToN{2k}$ there is $O(m^{\min\set{k,s}})$ ways to choose
$r$ such that $f(r) = s$. All these choices of $r$ satisfy $V(r) \le O(p^s)$.
Hence:
\[
    \Ep{Z^{2k}} =
    \sum_{r \in [m]^{2k}} V(r)
    \le
    \sum_{s=1}^{2k} O(m^{\min\set{k,s}}) \cdot O(p^s)
    =
    O \left (
        \sum_{s=1}^{k} (pm)^s
    \right )\ .
\]
This finishes the proof of \cref{thm:kthmoment}.\qed

A similar argument can be used to show the following theorem, where the bin
depends on a query key $q$.
\begin{theorem}\label{thm:kthmomentq}
    Let $h : [u] \to \calR$ be a simple tabulation hash function.
    Let $x_0,\ldots,x_{m-1}$ be $m$ distinct keys from $[u]$ and let $q\in [u]$
    be a \emph{query} key distinct from $x_0,\ldots,x_{m-1}$.
    Let $Y_0,\ldots,Y_{m-1}$ be any random variables such that $Y_i\in [0,1]$ is
    a function of $(h(x_i),h(q))$ and for all $r \in \calR$,
    $\Ep{Y_i \mid h(q) = r} = p$ for all $i \in [m]$.
    Define $Y = \sum_{i \in [m]} Y_i$ and $\mu = \Ep{Y} = mp$. Then for any
    constant integer $k \ge 1$:
    \[
        \Ep{( Y - \mu)^{2k}}
        \le
        O\!\left (
            \sum_{j=1}^k \mu^j
        \right )\ ,
    \]
    where the constant in the $O$-notation is dependent on $k$ and $c$.
\end{theorem}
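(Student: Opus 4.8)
The plan is to run the argument of \cref{thm:kthmoment} essentially unchanged, the only new point being that $h(q)$ sits inside every factor; I absorb this by regarding $q$ as one extra, \emph{fixed} key. Put $Z_i = Y_i - p$ and $Z = \sum_{i \in [m]} Z_i$. By hypothesis $\Ep{Y_i \mid h(q)} = p$, so $\Ep{Z_i \mid h(q)} = 0$, hence $\Ep{Z_i} = 0$, and
\[
    \Ep{(Y-\mu)^{2k}} = \Ep{Z^{2k}} = \sum_{r \in [m]^{2k}} V(r), \qquad V(r) = \Ep{\textstyle\prod_{i \in [2k]} Z_{r_i}}.
\]
For a tuple $r$ consider the augmented multiset of keys $\{x_{r_0},\ldots,x_{r_{2k-1}}\}\cup\{q\}$ (write $*$ for the index of $q$). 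If some $i \in [2k]$ lies in no zero-sum sub-collection of this augmented multiset, then $x_{r_i}$ is independent of $\{x_{r_j}: j \ne i\}\cup\{q\}$, i.e.\ $h(x_{r_i})$ is independent of all other hash values and of $h(q)$; conditioning on $h(q)$ then gives $V(r) = \Ep{\Ep{Z_{r_i}\mid h(q)}\cdot\Ep{\prod_{j \ne i} Z_{r_j}\mid h(q)}} = 0$ since $\Ep{Z_{r_i}\mid h(q)} = 0$. Hence $V(r)\ne 0$ forces every index of $[2k]$ to be covered by a zero-sum sub-collection of the augmented multiset.

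First I bound the number of such tuples exactly as in the proof of \cref{lem:moment-rtuples}. Let $T_0,\ldots,T_{s-1}\subseteq[2k]\cup\{*\}$ be the zero-sum sub-collections of the augmented multiset, so $\bigcup_j (T_j \setminus \{*\}) = [2k]$ and $s \le 2^{2k+1} = O(1)$; as zero-sum collections each $T_j$ has even size. Set $B_j = (T_j \cap [2k]) \setminus \bigcup_{j' < j}(T_{j'} \cap [2k])$, so the $B_j$ partition $[2k]$. Build $r$ one block at a time: given the keys already chosen and the fixed key $q$, the admissible completions of block $j$ are the tuples $(x_{r_i})_{i \in B_j}$ with $\bigoplus_{i \in T_j} x_{r_i} = \emptyset$, whose number \cref{zeroSumProvable} bounds by $((|T_j|-1)!!)^c\, m^{|B_j|/2}$ --- taking the set $X$ for each coordinate in $B_j$ and a singleton for each already-fixed coordinate, the singleton being $\{q\}$ for the coordinate $*$ when $*\in T_j$. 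Multiplying over $j$ (the combinatorial prefactors multiply to $O(1)$ since $c,k = O(1)$) and summing over the $O(1)$ choices of the family $\{T_j\}$ yields $O(m^{\sum_j |B_j|/2}) = O(m^k)$ tuples with $V(r)\ne 0$.

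Next I sharpen the estimate using $p$. Let $f(r)$ be the largest size of a subset $I\subseteq[2k]$ for which $\{x_{r_i}: i \in I\}\cup\{q\}$ is independent (so $f(r)\ge 1$, since $\{x_{r_i},q\}$ is independent for each $i$ as $x_{r_i}\ne q$). Fix such a maximal $I$; by maximality every $x_{r_j}$ with $j\notin I$ is a $\bbZ_2$-combination of $\{x_{r_i}: i \in I\}\cup\{q\}$, so after conditioning on $h(q)$ each $Z_{r_j}$, $j\notin I$, becomes a $[-1,1]$-valued function of $(h(x_{r_i}))_{i \in I}$, while the $h(x_{r_i})$, $i \in I$, are mutually independent of one another and of $h(q)$. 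Therefore
\[
    |V(r)| \le \Ep{\textstyle\prod_{i \in I}|Z_{r_i}|} = \Ep{\textstyle\prod_{i \in I}\Ep{|Z_{r_i}|\mid h(q)}} \le (2p)^{f(r)},
\]
using $|Z_{r_i}| = |Y_{r_i}-p| \le Y_{r_i}+p$ together with $\Ep{Y_{r_i}\mid h(q)} = p$. The number of $r$ with $f(r) = s$ is $O(m^s)$: the $s$ independent keys can be chosen in at most $m^s$ ways (their index set in $O(1)$ ways), and each remaining key lies in a $\bbZ_2$-span of size $\le 2^{s+1} = O(1)$. Intersecting with the $O(m^k)$ bound above, the number of $r$ with $V(r)\ne 0$ and $f(r) = s$ is $O(m^{\min\{s,k\}})$, so
\[
    \Ep{Z^{2k}} \le \sum_{s=1}^{2k} O\!\left(m^{\min\{s,k\}}\right) O(p^s) = O\!\left(\sum_{s=1}^{k}(pm)^s\right) = O\!\left(\sum_{j=1}^{k}\mu^j\right),
\]
which is \cref{thm:kthmomentq}.

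The part needing the most care is the counting step. A zero-sum sub-collection passing through $q$ meets $[2k]$ in an \emph{odd} number of indices, so an individual block $B_j$ may have odd size and a naive count of its completions could a priori exceed $m^{|B_j|/2}$; this is precisely why \cref{zeroSumProvable} is phrased for arbitrary sets $A_i$ rather than only $A_i = X$, as feeding it the singleton $\{q\}$ for the $*$-coordinate keeps the per-block bound at $((|T_j|-1)!!)^c m^{|B_j|/2}$. Since the $B_j$ still partition $[2k]$, the half-integer per-block exponents sum to the integer $k$ and the $O(m^k)$ bound is preserved. Everything else --- the vanishing criterion and the $(2p)^{f(r)}$ tail estimate --- is the argument of \cref{thm:kthmoment} read with ``independent set of keys'' replaced throughout by ``independent set of keys together with $q$''.
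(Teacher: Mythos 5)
Your proof is correct and is exactly the extension the paper has in mind: the paper gives no explicit proof of \cref{thm:kthmomentq}, only the remark that "a similar argument" to \cref{thm:kthmoment} works, and your writeup carries out that argument faithfully --- augmenting the key multiset by $q$, using $\Ep{Z_i\mid h(q)}=0$ for the vanishing criterion, feeding the singleton $\{q\}$ into \cref{zeroSumProvable} so the per-block exponents still sum to $k$, and conditioning on $h(q)$ in the $(2p)^{f(r)}$ tail bound. No gaps; the point you flag about odd-sized blocks is handled correctly by the general form of \cref{zeroSumProvable}.
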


\newpage
\bibliographystyle{plain}
\bibliography{general}

\begin{thebibliography}{10}
\providecommand{\url}[1]{#1}
\csname url@samestyle\endcsname
\providecommand{\newblock}{\relax}
\providecommand{\bibinfo}[2]{#2}
\providecommand{\BIBentrySTDinterwordspacing}{\spaceskip=0pt\relax}
\providecommand{\BIBentryALTinterwordstretchfactor}{4}
\providecommand{\BIBentryALTinterwordspacing}{\spaceskip=\fontdimen2\font plus
\BIBentryALTinterwordstretchfactor\fontdimen3\font minus
  \fontdimen4\font\relax}
\providecommand{\BIBforeignlanguage}[2]{{%
\expandafter\ifx\csname l@#1\endcsname\relax
\typeout{** WARNING: IEEEtran.bst: No hyphenation pattern has been}%
\typeout{** loaded for the language `#1'. Using the pattern for}%
\typeout{** the default language instead.}%
\else
\language=\csname l@#1\endcsname
\fi
#2}}
\providecommand{\BIBdecl}{\relax}
\BIBdecl

\bibitem{thorup12kwise}
M.~Thorup and Y.~Zhang, ``Tabulation-based 5-independent hashing with
  applications to linear probing and second moment estimation,'' \emph{SIAM
  Journal on Computing}, vol.~41, no.~2, pp. 293--331, 2012, announced at
  SODA'04 and ALENEX'10.

\bibitem{Flajolet85counting}
P.~Flajolet and G.~N. Martin, ``Probabilistic counting algorithms for data base
  applications,'' \emph{Journal of Computer and System Sciences}, vol.~31,
  no.~2, pp. 182--209, 1985, announced at FOCS'83.

\bibitem{Flajolet07hyperloglog}
P.~Flajolet, Éric Fusy, O.~Gandouet, and et~al., ``Hyperloglog: The analysis
  of a near-optimal cardinality estimation algorithm,'' in \emph{In Analysis of
  Algorithms (AOFA)}, 2007.

\bibitem{Heule13hyperloglog}
S.~Heule, M.~Nunkesser, and A.~Hall, ``Hyperloglog in practice: Algorithmic
  engineering of a state of the art cardinality estimation algorithm,'' in
  \emph{Proceedings of the EDBT 2013 Conference}, 2013, pp. 683--692.

\bibitem{boldi11hyperanf}
P.~Boldi, M.~Rosa, and S.~Vigna, ``Hyperanf: Approximating the neighbourhood
  function of very large graphs on a budget,'' in \emph{Proc. 20th WWW}.\hskip
  1em plus 0.5em minus 0.4em\relax ACM, 2011, pp. 625--634.

\bibitem{Cohen14ads}
E.~Cohen, ``All-distances sketches, revisited: Hip estimators for massive
  graphs analysis,'' in \emph{Proc. 33rd ACM Symposium on Principles of
  Database Systems}.\hskip 1em plus 0.5em minus 0.4em\relax ACM, 2014, pp.
  88--99.

\bibitem{li12oneperm}
P.~Li, A.~B. Owen, and C.-H. Zhang, ``One permutation hashing,'' in \emph{Proc.
  26thAdvances in Neural Information Processing Systems}, 2012, pp. 3122--3130.

\bibitem{Shrivastava14oneperm}
A.~Shrivastava and P.~Li, ``Densifying one permutation hashing via rotation for
  fast near neighbor search,'' in \emph{Proc. 31th International Conference on
  Machine Learning (ICML)}, 2014, pp. 557--565.

\bibitem{Shrivastava14densify}
------, ``Improved densification of one permutation hashing,'' in
  \emph{Proceedings of the Thirtieth Conference on Uncertainty in Artificial
  Intelligence, {UAI} 2014, Quebec City, Quebec, Canada, July 23-27, 2014},
  2014, pp. 732--741.

\bibitem{broder97minwise}
A.~Z. Broder, S.~C. Glassman, M.~S. Manasse, and G.~Zweig, ``Syntactic
  clustering of the web,'' \emph{Computer Networks}, vol.~29, pp. 1157--1166,
  1997.

\bibitem{broder98minwise}
A.~Z. Broder, M.~Charikar, A.~M. Frieze, and M.~Mitzenmacher, ``Min-wise
  independent permutations,'' \emph{Journal of Computer and System Sciences},
  vol.~60, no.~3, pp. 630--659, 2000, see also STOC'98.

\bibitem{broder97onthe}
A.~Z. Broder, ``On the resemblance and containment of documents,'' in
  \emph{Proc. Compression and Complexity of Sequences (SEQUENCES)}, 1997, pp.
  21--29.

\bibitem{Charikar02countsketch}
M.~Charikar, K.~Chen, and M.~Farach-Colton, ``Finding frequent items in data
  streams,'' in \emph{Proc. 29th International Colloquium on Automata,
  Languages and Programming (ICALP)}.\hskip 1em plus 0.5em minus 0.4em\relax
  Springer-Verlag, 2002, pp. 693--703.

\bibitem{Li10bbit}
P.~Li, A.~C. K{\"{o}}nig, and W.~Gui, ``b-bit minwise hashing for estimating
  three-way similarities,'' in \emph{Proc. 24thAdvances in Neural Information
  Processing Systems}, 2010, pp. 1387--1395.

\bibitem{Li10bbit2}
P.~Li and A.~C. K{\"{o}}nig, ``b-bit minwise hashing,'' in \emph{Proc. 19th
  WWW}, 2010, pp. 671--680.

\bibitem{li11minhash}
P.~Li, A.~Shrivastava, J.~L. Moore, and A.~C. K{\"o}nig, ``Hashing algorithms
  for large-scale learning,'' in \emph{Proc. 25thAdvances in Neural Information
  Processing Systems}, 2011, pp. 2672--2680.

\bibitem{Bachrach13sketching}
Y.~Bachrach and E.~Porat, ``Sketching for big data recommender systems using
  fast pseudo-random fingerprints,'' in \emph{Proc. 40th International
  Colloquium on Automata, Languages and Programming (ICALP)}, 2013, pp.
  459--471.

\bibitem{bar-yossef02distinct}
Z.~Bar{-}Yossef, T.~S. Jayram, R.~Kumar, D.~Sivakumar, and L.~Trevisan,
  ``Counting distinct elements in a data stream,'' in \emph{Proc. 6th
  International Workshop on Randomization and Approximation Techniques
  (RANDOM)}, 2002, pp. 1--10.

\bibitem{thorup13bottomk}
M.~Thorup, ``Bottom-k and priority sampling, set similarity and subset sums
  with minimal independence,'' in \emph{Proc. 45th ACM Symposium on Theory of
  Computing (STOC)}, 2013.

\bibitem{indyk98ann}
P.~Indyk and R.~Motwani, ``Approximate nearest neighbors: Towards removing the
  curse of dimensionality,'' in \emph{Proc. 30th ACM Symposium on Theory of
  Computing (STOC)}, 1998, pp. 604--613.

\bibitem{andoni08ann}
A.~Andoni and P.~Indyk, ``Near-optimal hashing algorithms for approximate
  nearest neighbor in high dimensions,'' \emph{Communications of the ACM},
  vol.~51, no.~1, pp. 117--122, 2008, see also FOCS'06.

\bibitem{Andoni14lsh}
A.~Andoni, P.~Indyk, H.~L. Nguyen, and I.~Razenshteyn, ``Beyond
  locality-sensitive hashing,'' in \emph{Proc. 25th ACM/SIAM Symposium on
  Discrete Algorithms (SODA)}, 2014, pp. 1018--1028.

\bibitem{motwani95book}
R.~Motwani and P.~Raghavan, \emph{Randomized algorithms}.\hskip 1em plus 0.5em
  minus 0.4em\relax Cambridge University Press, 1995.

\bibitem{mitzenmacher05book}
M.~Mitzenmacher and E.~Upfal, \emph{Probability and computing - randomized
  algorithms and probabilistic analysis}.\hskip 1em plus 0.5em minus
  0.4em\relax Cambridge University Press, 2005.

\bibitem{wegman81kwise}
M.~N. Wegman and L.~Carter, ``New classes and applications of hash functions,''
  \emph{Journal of Computer and System Sciences}, vol.~22, no.~3, pp. 265--279,
  1981, see also FOCS'79.

\bibitem{CRSW11}
L.~E. Celis, O.~Reingold, G.~Segev, and U.~Wieder, ``Balls and bins: Smaller
  hash families and faster evaluation,'' in \emph{Proc. 52nd IEEE Symposium on
  Foundations of Computer Science (FOCS)}, 2011, pp. 599--608.

\bibitem{patrascu12charhash}
M.~P{\v a}tra{\c s}cu and M.~Thorup, ``The power of simple tabulation-based
  hashing,'' \emph{Journal of the ACM}, vol.~59, no.~3, p. Article 14, 2012,
  announced at STOC'11.

\bibitem{indyk01minwise}
P.~Indyk, ``A small approximately min-wise independent family of hash
  functions,'' \emph{Journal of Algorithms}, vol.~38, no.~1, pp. 84--90, 2001,
  see also SODA'99.

\bibitem{patrascu10kwise-lb}
M.~P{\v a}tra{\c s}cu and M.~Thorup, ``On the $k$-independence required by
  linear probing and minwise independence,'' in \emph{Proc. 37th International
  Colloquium on Automata, Languages and Programming (ICALP)}, 2010, pp.
  715--726.

\bibitem{PP08}
A.~Pagh and R.~Pagh, ``Uniform hashing in constant time and optimal space,''
  \emph{SIAM J. Comput.}, vol.~38, no.~1, pp. 85--96, 2008.

\bibitem{Christiani15indep}
T.~Christiani, R.~Pagh, and M.~Thorup, ``From independence to expansion and
  back again,'' 2015, to appear.

\bibitem{carter77universal}
L.~Carter and M.~N. Wegman, ``Universal classes of hash functions,''
  \emph{Journal of Computer and System Sciences}, vol.~18, no.~2, pp. 143--154,
  1979, see also STOC'77.

\bibitem{thorup13doubletab}
M.~Thorup, ``Simple tabulation, fast expanders, double tabulation, and high
  independence,'' in \emph{FOCS}, 2013, pp. 90--99.

\bibitem{siegel04hash}
A.~Siegel, ``On universal classes of extremely random constant-time hash
  functions,'' \emph{SIAM Journal on Computing}, vol.~33, no.~3, pp. 505--543,
  2004, see also FOCS'89.

\bibitem{zobrist70hashing}
A.~L. Zobrist, ``A new hashing method with application for game playing,''
  Computer Sciences Department, University of Wisconsin, Madison, Wisconsin,
  Tech. Rep.~88, 1970.

\bibitem{PT13:twist}
M.~P\v{a}tra\c{s}cu and M.~Thorup, ``Twisted tabulation hashing,'' in
  \emph{Proc. 24th ACM/SIAM Symposium on Discrete Algorithms (SODA)}, 2013, pp.
  209--228.

\bibitem{serfling74replacement}
R.~J. Serfling, ``Probability inequalities for the sum in sampling without
  replacement,'' \emph{Annals of Statistics}, vol.~2, no.~1, pp. 39--48, 1974.

\bibitem{dietzfel03tabhash}
M.~Dietzfelbinger and P.~Woelfel, ``Almost random graphs with simple hash
  functions,'' in \emph{Proc. 25th ACM Symposium on Theory of Computing
  (STOC)}, 2003, pp. 629--638.

\bibitem{Goodrich11ibt}
M.~T. Goodrich and M.~Mitzenmacher, ``Invertible bloom lookup tables,'' in
  \emph{2011 49th Annual Allerton Conference on Communication, Control, and
  Computing, Allerton Park {\&} Retreat Center, Monticello, IL, USA, 28-30
  September, 2011}, 2011, pp. 792--799.

\bibitem{eppstein2011s}
D.~Eppstein, M.~T. Goodrich, F.~Uyeda, and G.~Varghese, ``What's the
  difference?: efficient set reconciliation without prior context,'' in
  \emph{ACM SIGCOMM Computer Communication Review}, vol.~41, no.~4.\hskip 1em
  plus 0.5em minus 0.4em\relax ACM, 2011, pp. 218--229.

\bibitem{eppstein2011straggler}
D.~Eppstein and M.~T. Goodrich, ``Straggler identification in round-trip data
  streams via newton's identities and invertible bloom filters,''
  \emph{Knowledge and Data Engineering, IEEE Transactions on}, vol.~23, no.~2,
  pp. 297--306, 2011.

\bibitem{mitzenmacher2012biff}
M.~Mitzenmacher and G.~Varghese, ``Biff (bloom filter) codes: Fast error
  correction for large data sets,'' in \emph{Information Theory Proceedings
  (ISIT), 2012 IEEE International Symposium on}.\hskip 1em plus 0.5em minus
  0.4em\relax IEEE, 2012, pp. 483--487.

\end{thebibliography}

\end{document}